\newcommand{\citep}[2][]{%
  \ifx&#1&%
    \cite{#2}%
  \else%
    \cite[#1]{#2}%
  \fi%
}
\newcolumntype{C}[1]{>{\centering\let\newline\\\arraybackslash\hspace{0pt}}m{#1}}
\setlist[itemize,1]{label=$\bullet$}
\setlist[itemize,2]{label=$\bullet$}
\setlist[itemize,3]{label=$\bullet$}
\setlist[itemize,4]{label=$\bullet$}
\setlist[itemize,5]{label=$\bullet$}
\setlist[itemize,6]{label=$\bullet$}
\setlist[itemize,7]{label=$\bullet$}
\setlist[itemize,8]{label=$\bullet$}
\setlist[itemize,9]{label=$\bullet$}
\newcommand{\co}[1]{\ifthenelse{\boolean{commentsactivated}}{{\color{teal} {\em CO: #1 }}}{}}
\newcommand{\vc}[1]{\ifthenelse{\boolean{commentsactivated}}{{\color{olive} {\em VC: #1 }}}{}}
\newcommand{\blinded}[1]{\ifthenelse{\boolean{blindedforreview}}{REMOVED FOR BLIND REVIEW}{#1}}
\newcommand{\proofappendixref}[1]{\ifthenelse{\boolean{includeproofappendixrefs}}{The proof can be found in \Cref{#1}.}{}}
\newcommand{\removefortark}[1]{\ifthenelse{\boolean{tarkcameraready}}{}{#1}}
\newtheorem{theorem}{Theorem} 
\newtheorem{corollary}[theorem]{Corollary}
\newtheorem{proposition}[theorem]{Proposition}
\newtheorem{lemma}[theorem]{Lemma}
\newtheorem{assumption}{Assumption}
\newtheorem{prop-example}{Proposition (Example)}
\newtheorem{definition}{Definition}
\Crefname{assumption}{Assumption}{Assumptions}
\crefname{assumption}{Assumption}{Assumptions}
\crefname{adhocassumption}{Ad Hoc Assumption}{Ad Hoc Assumptions}
\Crefname{lemma}{Lemma}{Lemmas}
\crefname{lemma}{Lemma}{Lemmas}
\Crefname{theorem}{Theorem}{Theorems}
\crefname{theorem}{Theorem}{Theorems}
\Crefname{section}{Section}{Sections}
\crefname{section}{Section}{Sections}
\DeclareRobustCommand{\rvdots}{%
  \vbox{
    \baselineskip4\p@\lineskiplimit\z@
    \kern-\p@
    \hbox{.}\hbox{.}\hbox{.}
  }}
\title{Choosing What Game to Play without Selecting Equilibria: 
Inferring Safe (Pareto) Improvements in Binary Constraint Structures}
\author{Caspar Oesterheld \quad\quad Vincent Conitzer
\institute{Computer Science Department\\
Carnegie Mellon University\\
Pittsburgh, Pennsylvania, USA}
}
\date{\today}
\begin{document}

\maketitle

\begin{abstract}
We consider a setting in which a principal gets to choose which game from some given set is played by a group of agents. The principal would like to choose a game that favors one of the players, the social preferences of the players, or the principal's own preferences. Unfortunately, given the potential multiplicity of equilibria, it is conceptually unclear how to tell which of even any two games is better. Oesterheld et al.\ \cite{oesterheld2022safe} propose that we use assumptions about \textit{outcome correspondence} -- i.e., about how the outcomes of different games relate -- to allow comparisons in some cases. For example, it seems reasonable to assume that isomorphic games are played isomorphically. From such assumptions we can sometimes deduce that the outcome of one game $\Gamma^s$ is guaranteed to be better than the outcome of another game $\Gamma$, even if we do not have beliefs about how each of $\Gamma$ and $\Gamma^s$ will be played individually. Following Oesterheld et al., we then call $\Gamma^s$ a \textit{safe improvement} on $\Gamma$.

In this paper, we study how to derive safe improvement relations. We first show that if we are given a set of games and arbitrary assumptions about outcome correspondence between these games, deriving safe improvement relations is co-NP-complete. We then study the (in)completeness of a natural set of inference rules for outcome correspondence. We show that in general the inference rules are incomplete. However, we also show that under natural, generally applicable assumptions about outcome correspondence the rules are complete. 
\end{abstract}


\co{TODOlater: think about title}

\co{Conjecture: Deciding ``semi-strict SPI'' (all assignments satisfy $\succeq$; at least one assignment satisfies $\succ$) is DP-complete in general. (DP is the set of languages that are the intersection of a lanugage in NP and a lanugage in co-NP.)

I suspect this isn't too hard to prove. It's clearly in DP. You can reduce by ``joining'' the regular (non-strict) SPI case and some binary CSP instance.

I think the completeness results also generalize to these ``half-strict'' SPIs.

So this is a low-hanging fruit that makes the paper a little more dense.
}

\section{Introduction}

Imagine you have to choose between two (say, normal-form) games, $\Gamma_1$ and $\Gamma_2$ for Alice and Bob to play tomorrow. You have some stake in the outcome of the game, e.g., because you are Alice's and/or Bob's friend. Which of $\Gamma_1$ and $\Gamma_2$ should you choose? 
In general, assessments of which game to let Alice and Bob play are conceptually difficult, because each of the games may have multiple solutions (e.g., multiple Nash equilibria, multiple rationalizable outcomes). Which game to prefer depends on how this multiplicity is resolved. For example, imagine that you can decide whether Alice and Bob play a game of Chicken, or a trivial game in which they both receive \$5. Then (depending on your preferences over the outcomes) the decision hinges on which -- if any -- equilibrium of Chicken you expect Alice and Bob to play. 

We build on the safe Pareto improvement framework of Oesterheld et al.\ \cite{oesterheld2022safe} to address this problem. Roughly, we consider some (potentially infinite) set of games. We make some (qualitative, non-probabilistic) assumptions about how the outcomes of pairs of these games, as played by a set of \textit{agents} (Alice and Bob in the above example), relate.
For example, we might assume that if two games are isomorphic, then Alice and Bob would play them isomorphically
. 
We call such statements about the relationships of games \textit{outcome correspondences} (OCs). From these outcome correspondences we can infer new outcome correspondences. For instance, if we know OCs between $\Gamma_a$ and $\Gamma_b$ and between $\Gamma_b$ and $\Gamma_c$, then we can infer 
an OC between $\Gamma_a$ and $\Gamma_c$ (\Cref{lemma:basic-results}.\ref{prop1-item:transitivity} below). In some cases, we will be able to infer outcome correspondences that show that the outcome of one game $\Gamma^s$ is always better than the outcome of another game $\Gamma$, even without resolving the multiplicity of solutions in the underlying games. We will then say that $\Gamma^s$ is a \textit{safe improvement} on $\Gamma$. If in particular we have that the outcome of $\Gamma^s$ is better than the outcome $\Gamma$ under a \textit{Pareto} (partial) order over outcomes, we call $\Gamma^s$ a \textit{safe Pareto improvement}
.

For example, consider the three versions of Chicken in \Cref{fig:intro-SPI-example}. Each has exactly two pure and one mixed equilibrium. We make assumptions about outcome correspondence as follows, illustrated by lines in the figure.
First, in $\Gamma_a$, the pure strategy $C'$ is strictly dominated by $C$. Further, notice that $\Gamma_b$ is obtained from $\Gamma_a$ by removing $C'$. It therefore seems plausible that $\Gamma_a$ and $\Gamma_b$ will be played in the same way. That is, if Alice and Bob play $(D,D)$ in $\Gamma_a$, then they will also play $(D,D)$ in $\Gamma_b$, and so on. 
Second, we can see that $\Gamma_b$ and $\Gamma_c$ are isomorphic (with payoffs transformed by 
$x\mapsto 2x + 4$ for both players). It therefore seems plausible that $\Gamma_b$ and $\Gamma_c$ will be played isomorphically, i.e., if Alice and Bob play $(C,C)$ in $\Gamma_b$, then they play $(E,E)$ in $\Gamma_c$, and so on.

\co{TODO: for the following paragraph, it would be nice to give a bit more detail about how this sort of judgement would not have been possible without SPI. Currently, there's an issue that all Nash equilibria of $\Gamma^c$ are better for both players than all Nash equilibria of $\Gamma^a$. So maybe one could edit the payoffs a bit?}

Let's try to decide whether to let Alice and Bob play $\Gamma_a$ or $\Gamma_c$. Imagine that 
we aim to optimize for Alice and/or Bob's preferences. Now notice that from the given outcome correspondences we can infer via the aforementioned transitivity rule (\Cref{lemma:basic-results}.\ref{prop1-item:transitivity} below) an outcome correspondence between $\Gamma_a$ and $\Gamma_c$: if Alice and Bob play $(D,D)$ in $\Gamma_a$, then they also play $(D,D)$ in $\Gamma_b$ and thus $(F,F)$ in $\Gamma_c$; if Alice and Bob play $(D,C)$ in $\Gamma_a$, then they play $(F,E)$ in $\Gamma_c$; and so on. By considering each outcome individually, we can see that the outcome of $\Gamma_c$ is guaranteed to be strictly better for both Alice and Bob. Thus, $\Gamma_c$ is a safe Pareto improvement on $\Gamma_a$; when given the choice, we should let Alice and Bob play $\Gamma_c$ instead of $\Gamma_a$. Importantly, to reach this conclusion, we do not need to assume anything about how Alice and Bob select an equilibrium in 
$\Gamma_a$, $\Gamma_b$, and $\Gamma_c$.

The concept of safe (Pareto) improvements is widely applicable to strategic settings. Oesterheld et al.\ \cite{oesterheld2022safe} consider the case of multiple players instructing their respective agents by shaping their utility functions [cf.\ \citenum{BaumannSG}; \citenum{Clifton2020}, Sect.\ 4.2]. \blinded{Sauerberg et al.\ \cite{announcementSPIs}} show how the concept 
can be used to assess \textit{ex post} verifiable commitments. However, the broad conceptualization of safe (Pareto) improvements of the present paper points to many new types of applications. For example, mechanism design is the study of how to shape a game played by some set of players when optimizing for social preferences (efficiency) or the designers preferences, e.g., profit (as in optimal mechanisms). \co{[TODOlater: probably should mention more specific settings here, like K-implementation or the committing to payments setting.]}
In some settings -- specifically settings where mechanisms induce a game with multiple Nash equilibria -- the concept of safe (Pareto) improvements can be used to assess and compare different possible mechanisms. As such, the concept of safe improvements complements concepts like best-Nash (cf.\ weak implementation in implementation theory; price of stability) and worst-Nash (cf.\ full/strong implementation in implementation theory; price of anarchy) that are widely used for comparing different mechanisms. (See also Monderer et al.\ \cite{Monderer2004}, who, very roughly, use worst-undominated-strategies for comparing different mechanisms.)

\begin{figure}
    \centering
    \includegraphics[width=0.7\linewidth]{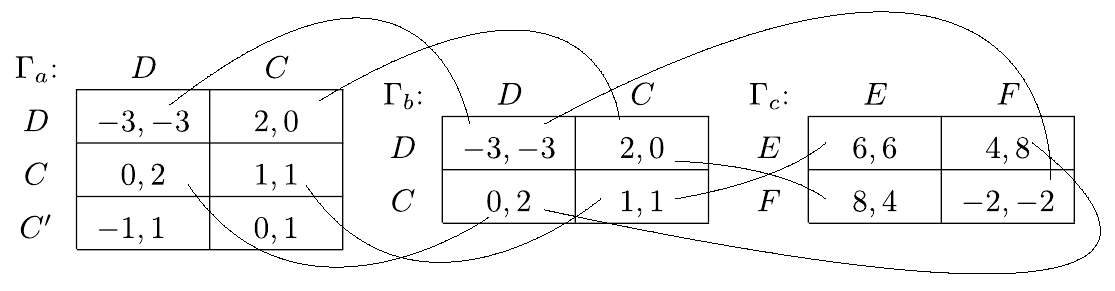}
    \caption{An example of three normal-form games with plausible outcome correspondences between them illustrated by lines: If playing $\Gamma_b$ would result in $(D,D)$, then playing $\Gamma_c$ would result in $(F,F)$, etc.}
    \label{fig:intro-SPI-example}
\end{figure}

\textbf{Contributions}: 
In this paper, we introduce a new, general framework for reasoning about safe (Pareto) improvements (S(P)Is) (\Cref{sec:introducing-SPIs}). In doing so, we generalize the setting of Oesterheld et al.\ \cite{oesterheld2022safe}: We define SIs relative to arbitrary preferences over outcomes rather than specifically Pareto preferences. Furthermore, we define SIs on arbitrary binary constraint structures similar to those featured in constraint satisfaction problems (CSP), rather than specifically normal-form games with specific outcome correspondence assumptions. 
We address the following question: Given some set of games and given some assumptions about outcome correspondence between these games, can we infer that one game is an SI on another?

In \Cref{sec:complexity-under-arbitrary-assumptions}, we analyze the computational complexity of deciding this problem if we are given a finite set of games and explicitly represented outcome correspondence assumptions between these games. We prove that this problem is co-NP-complete, even under various restrictions on the outcome correspondence assumptions. Roughly, we view the assumptions as defining a binary constraint structure \textit{a la} the binary constraint satisfaction problem (CSP). (Binary) CSP is NP-complete. However, in the context of SIs, we are not interested in finding a satisfying assignment (a way in which the agents might play the different games that satisfies all the assumptions) but in whether specific facts hold about \textit{all} satisfying assignments. This type of problem is co-NP-complete, because it is the complement of an NP-complete problem. 

%

In \Cref{sec:completeness}, we consider the problem of inferring SIs under restrictions on what type of assumptions we make. In particular, we ask whether under such restrictions, ``local'' inference rules (such as the aforementioned transitivity rule) are complete. We note that most natural generic assumptions about outcome correspondence (such as: isomorphic games are played isomorphically; dominated actions can be removed) satisfy a condition called \textit{max-closedness} known from the CSP literature \cite{jeavons1995tractable}. We show that under this condition, the local rules of inference are complete (\Cref{thm:completeness-under-max-closedness}). From this, we obtain an efficient algorithm for finding SIs. We show that completeness continues to hold even if we consider infinitely many games and outcome correspondence assumptions (\Cref{sec:infinite}).
We demonstrate the usefulness of the results about completeness of repeated inference by refining Oesterheld et al.'s \cite{oesterheld2022safe} NP-completeness result for a specific setting
.




\section{Preliminaries}
\label{sec:preliminaries}

\textbf{Normal-form games.}
We assume familiarity with basic game-theoretic concepts
, 
but briefly introduce some 
notation.
An \textit{$n$-player (normal-form) game} is a tuple $\Gamma=(A_1,...,A_n,u_1,...,u_n)$ of sets of \textit{(pure) strategies} or \textit{actions} $A_1,...,A_n$ and \textit{utility functions} $u_i\colon A_1\times ... \times A_n \rightarrow \mathbb{R}$ mapping \textit{(pure) strategy profiles} or \textit{outcomes} of the game onto utilities.
For a strategy profile $\bm{a}$, let $\bm{a}_{-i}$ denote the strategy profile consisting of all pure strategies in $\bm{a}$ except Player $i$'s strategy $a_i$. For $a_i\in A_i$, we denote by $(\bm{a}_{-i},a_i)\in A_1\times ... \times A_n$ the strategy profile that arises from combining $\bm{a}_{-i}$ and $a_i$ in the natural way. We say that one outcome $\bm{a}$ \textit{is Pareto-better than} another outcome $\bm{a}'$ if $u_i (\bm{a})\geq u_i (\bm{a}')$ for all $i$, with strictness for some $i$.

A \textit{(Nash) equilibrium of $\Gamma$} is a strategy profile $\bm{a}$ s.t.\ for every player $i$ and all pure strategies $a_i'\in A_i - \{a_i\}$ we have $u_i(a_i',\bm{a}_{-i})\leq u_i(\bm{a})$. Further, $\bm{a}$ is a \textit{strict (Nash) equilibrium} if all the inequalities are strict. 
We say that a strategy $a_i$ \textit{strictly dominates} another strategy $a_i'$ if for all $\bm{a}_{-i}$ we have $u_i(a_i,\bm{a}_{-i}) > u_i(a_i',\bm{a}_{-i})$.  
We call $(\Phi_i\colon A_i\rightarrow A_i')_{i=1,...,n}$ an \textit{isomorphism} between games $\Gamma$ and $\Gamma'$ if there are $(\lambda_i\in \mathbb{R}_+)_{i}, (b_i \in \mathbb{R})_i$ such that for all strategy profiles $\bm{a}$ and players $i$ we have $u_i(\bm{a}) = \lambda_i u_i'(\Phi_1(a_1),...,\Phi_n(a_n)) + b_i$.

\textbf{Set-valued functions.} We use set-valued functions to express outcome correspondences. A set-valued function $\Phi\colon X \multimap Y$ is a binary relation between $X$ and $Y$, i.e., $\Phi$ is a subset of $X\times Y$. For any $x\in X$, we write $\Phi(x) \coloneqq \{ y \in Y \mid (x,y) \in \Phi \}$. 
For $M\subseteq X$, write $\Phi(M) \coloneqq  \bigcup_{x \in M} \Phi(x)$. For $\Phi\colon X \multimap Y$ we define the inverse as $\Phi^{-1}\colon Y \multimap X \colon y \mapsto \{ x\in X \mid y \in \Phi(x) \}$. 
For two multi-valued functions $\Phi\colon X \multimap Y$ and $\Psi \colon Y \multimap Z$, we define their composition as $\Psi \circ \Phi \colon x \mapsto \Psi(\Phi(x))$. Finally, for $\Phi, \Psi \colon X \multimap Y$ we define $\Phi \cap \Psi \colon X \multimap Y \colon x \mapsto \Phi(x) \cap \Psi(x)$ to be the 
intersection of $\Phi$ and $\Psi$.
\co{Ugh! Why does this use $X$ and $Y$ for \textit{sets}, when elsewhere these are \textit{variables}? Probably would be better to call them something else. but we're somewhat running out of letters. Maybe should call it $A,B,C$ and so on? I guess $X,Y$ and so on aren't that bad...}

\textbf{Binary constraint structures.}
A \textit{binary constraint structure (BCS)} is a triplet $(\mathcal X,\mathcal D, \mathcal A)$, where $\mathcal X$ is a set of variables; $\mathcal D = (D_X)_{X\in \mathcal X}$ is a family of finite domains, one for each variable; and $\mathcal A$ is a set of binary constraints. Each binary constraint is a triplet $(X,Y,\Phi)$ where $X$ and $Y$ are variables and $\Phi\subseteq D_X \times D_Y$ is a subset of the product of the domains corresponding to $X$ and $Y$. Following the notation of Oesterheld et al.\ \cite{oesterheld2022safe}, we generally write constraints as $X \sim_{\Phi} Y$.
An \textit{assignment} for $(\mathcal X,\mathcal D, \mathcal A)$ associates with each variable $X$ a value $v_X$ in $D_X$. We say that an assignment \textit{satisfies} a BCS if for each constraint $(X,Y,\Phi) \in \mathcal A$ we have that $(v_X,v_Y)\in \Phi$. In this paper, we also call the constraints \textit{outcome correspondences} (OCs).
We call $(\mathcal X,\mathcal D, \mathcal A)$ finite if $\mathcal X$ and $\mathcal A$ are finite.
The \textit{binary constraint satisfaction problem (CSP)} consists in deciding whether there exists a satisfying assignment for a given BCS. The problem is well-known to be NP-complete\removefortark{ (see \Cref{sec:NP-completeness-BCSP})}.




\textbf{Inference in binary constraint structures.}
Instead of constraint satisfaction, we will usually be concerned with problems of \textit{inference} on BCSs. That is, given a BCS, we will be interested in identifying additional constraints (not given $\mathcal A$) that all satisfying assignments of the BCS must satisfy. 
One specific method for inference is to apply specific sets of rules that take a small number of existing constraints and construct a new constraint from them.
In the literature on CSP, this is known as \textit{constraint propagation}. In this paper, we will primarily consider the following set of rules \citep[cf.][]{oesterheld2022safe}. In the language of the CSP literature, these are the rules needed to impose \textit{path consistency} \cite{montanari1974networks}.

\begin{restatable}{lemma}{outcomecorrespondenceinferencerules}
\label{lemma:basic-results}
Let $X$, $Y$ and $Z$ be variables of a BCS with domains $D_X$, $D_Y$ and $D_Z$, and $\Phi,\Xi \colon D_X\multimap D_Y$, $\Psi\colon D_Y \multimap D_Z$
. Then:
\begin{inparaenum}[1.]
    \item \label{prop1-item:transitivity} Transitivity: If $X\sim_{\Phi} Y$ and $Y\sim_{\Psi} Z$, then $X\sim_{\Psi \circ \Phi} Z$.
    \item \label{prop1-item:intersection} Intersection: If $X\sim_{\Phi} Y$ and $X\sim_{\Xi} Y$, then $X\sim_{\Phi \cap \Xi} Y$. 
    \item Reflexivity: $X \sim_{\mathrm{id}_{D_X}} X$, where $\mathrm{id}_{D_X} \colon D_X \multimap D_X \colon x \mapsto \{x\}$. \label{prop1-item-reflexivity}
    \item Symmetry: If $X\sim_{\Phi} Y$, then $Y \sim_{\Phi^{-1}} X$.
    \label{prop1-item-symmetry}
    \item  $X\sim_{\mathrm{all}_{D_X,D_Y}} Y$, where $\mathrm{all}_{D_X,D_Y} \colon D_X \multimap D_Y \colon x \mapsto D_Y$. \label{prop1-item-everything-related-via-all}
\end{inparaenum}
\end{restatable}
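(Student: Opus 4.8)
The plan is to prove all five rules uniformly from the semantics of constraints introduced above: a constraint $X \sim_{\Phi} Y$ is \emph{entailed} precisely when every satisfying assignment $v = (v_W)_{W \in \mathcal X}$ of the BCS has $(v_X, v_Y) \in \Phi$, equivalently $v_Y \in \Phi(v_X)$ by the definition $\Phi(x) = \{ y \mid (x,y)\in\Phi \}$. Each item is therefore an implication about which constraints are forced by all satisfying assignments, and to verify it I would fix an arbitrary satisfying assignment $v$, assume its values satisfy the hypothesis constraints, and check that the relevant values lie in the newly asserted relation. This reduces every rule to a one-line membership check that unfolds the definitions of composition, intersection, inverse, $\mathrm{id}$, and $\mathrm{all}$ given in the preliminaries.

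Concretely: for \textbf{transitivity}, from $v_Y \in \Phi(v_X)$ and $v_Z \in \Psi(v_Y)$ I would conclude $v_Z \in \Psi(v_Y) \subseteq \bigcup_{y \in \Phi(v_X)} \Psi(y) = \Psi(\Phi(v_X)) = (\Psi \circ \Phi)(v_X)$, invoking the set-valued definition $\Psi(M) = \bigcup_{y \in M}\Psi(y)$. For \textbf{intersection}, $v_Y \in \Phi(v_X)$ and $v_Y \in \Xi(v_X)$ give $v_Y \in \Phi(v_X) \cap \Xi(v_X) = (\Phi \cap \Xi)(v_X)$ directly. For \textbf{symmetry}, the definition of the inverse yields $v_Y \in \Phi(v_X) \iff v_X \in \Phi^{-1}(v_Y)$, so a constraint and its converse hold on exactly the same assignments. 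The \textbf{reflexivity} and \textbf{all} rules are unconditional: $v_X \in \{v_X\} = \mathrm{id}_{D_X}(v_X)$ and $v_Y \in D_Y = \mathrm{all}_{D_X,D_Y}(v_X)$ hold for every assignment whatsoever, so no hypothesis is needed.

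I do not expect a genuine obstacle, since each rule follows immediately from the definitions; the only places demanding a little care are transitivity, where one must use the \emph{union} form of $\Psi(\Phi(v_X))$ rather than treating $\Phi(v_X)$ as a single point, and symmetry, where one must track the direction of the inverse. One framing point worth stating explicitly is that the hypotheses of the conditional rules may themselves be previously inferred constraints rather than members of $\mathcal A$; because the argument quantifies over \emph{all} satisfying assignments and never uses that a hypothesis is primitive, the rules remain sound when chained, which is precisely what the repeated-inference arguments of \Cref{sec:completeness} require.
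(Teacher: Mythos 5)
Your proposal is correct and takes essentially the same approach as the paper: the paper's own proof verifies the intersection rule by exactly this per-satisfying-assignment membership check (reading $X \sim_{\Psi} Y$ as ``if the agents play $(a_1,a_2)$ in $X$, they play an element of $\Psi(a_1,a_2)$ in $Y$'') and defers the remaining items to Oesterheld et al.\ \cite[Lemma 2]{oesterheld2022safe}, whereas you simply spell out all five by the same one-line semantic argument. Your closing observation that the rules stay sound when their hypotheses are previously inferred (rather than primitive) constraints is likewise exactly how the paper uses them in its repeated-inference results.
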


\co{There's a question of whether to bring up the concept of ``completeness'' here.}

\proofappendixref{appendix:proof-of-lemma:basic-results}

\section{Safe (Pareto) improvements and outcome correspondence}
\label{sec:introducing-SPIs}

\textbf{A generic definition of safe (Pareto) improvements.}
We now introduce a new formalism for safe (Pareto) improvements, which generalizes that of Oesterheld et al.\ \cite{oesterheld2022safe}.
(We give a brief comparison in \Cref{sec:related-work}; and a detailed comparison in \Cref{appendix:high-level-formalism-comparison-to-earlier}.) In the following, we first describe safe improvements in generic binary constraint structures. We then discuss safe (Pareto) improvements for games in particular.

Imagine that we face a choice between some set of variables $X_0,X_1,...,X_n$. Each variable can result in one of a number of possible values or \textit{outcomes}. We are uncertain which outcome any given variable might result in.
We have some preferences $\succeq$ over the outcomes of the different variables.\co{There's a question of whether to give more detail on this notation.} These preferences may be incomplete, i.e., there may be pairs of outcomes of which neither is preferred to the other. For instance, we may be choosing as a group of people with diverging preferences and our group preferences are only definite under consensus.

The most standard approaches to choosing between the variables would involve assigning probabilities to outcomes. But in some cases such probabilistic beliefs are controversial or very difficult to assign. Suppose that instead of probabilistic beliefs, we are only willing or able to act on qualitative beliefs about how the outcomes of different variables relate. For instance, we might strongly believe that if $X_i$ were to result in outcome $o$, then $X_j$ would result in outcome $o'$.
Then we may still be able to adjudicate between some pairs of variables. In particular, from the outcome correspondence assumptions we may be able to infer that one variable will always yield an outcome that is preferred over the outcome of another variable.

\co{TODOlater: I think it would be cool to give an example here. It would be extra cool to give an example that isn't about games, like Vince's restaurant example.}
\co{I suppose the simplest example is that there are two restaurants from the chain, one of which is closer...} \vc{and possibly the other one has a dominated menu option} \co{I think one issue is that these examples are ``just dominance'' (potentially w.r.t.\ incomplete preferences). Of course, in some sense SI in general is ``just dominance''. But the more interesting cases of SI are ones where the set of ``states''/satisfying assignments isn't so easy to think about. And it's a bit hard to come up with one of those examples. In principle, one could have some restaurants with (potentially) different cuisines and some information like ``the cuisine of A is more spicy than the cuisine of B'' (which we may have gotten from the fact that a friend said ``A is too spicy for me, let's go to B''). But even just constructing such an example is pretty laborious, much less coming up with something that is believable... and not terribly hard to read.}

By interpreting the variables, outcomes, and outcome correspondence assumptions as a BCS, we can formalize this as follows.

\begin{definition}
Let $(\mathcal X, \mathcal D, \mathcal A)$ be a binary constraint structure. Let $\succeq$ be a partial order on $\bigcup \mathcal D$ (i.e., on the union of all the domains in $\mathcal{D}$). Let $X,Y\in \mathcal{X}$ be two variables. We say that $Y$ is a \textit{safe improvement (SI)} on $X$ w.r.t.\ $\succeq$ if each satisfying assignment $(v_Z)_{Z\in \mathcal X}$ of $(\mathcal X, \mathcal D, \mathcal A)$ satisfies $v_Y\succeq v_X$. We say that $Y$ is a \textit{strict} safe improvement on $X$ if $v_Y\succ v_X$ for each satisfying assignment. 
\end{definition}

For this definition we use the canonical definition of $\succ$ as the antisymmetric version of $\succeq$, i.e., $a\succ b$ if and only if $a\succeq b$ but not $b \succeq a$.\co{As per reviewer comment: it might be good to say this before the definition. Another idea would be to define strict SPI only after the definition. I also think this whole ``antisymmetric'' thing is a bit of a distraction... I also think this version of strictness is not very important.}

Note that the SI condition is itself an outcome correspondence claim \cite[cf.][Theorem 3]{oesterheld2022safe}: $X'$ is an SI on $X$ if and only if $X \sim_{\Phi} X'$, where $\Phi\colon D_X \rightarrow D_{X'}\colon o \mapsto \{o'\colon o'\succeq o  \}$ associates any outcome $o$ in $X$ with all outcomes $o'$ of $X'$ that are preferred over $o$.

Our notion of strictness is not the only possible version. One could also consider a notion of strictness that \textit{existentially} quantify over satisfying assignments. E.g., in addition to the (non-strict) SI condition one could require that there is a satisfying assignment $v$ s.t.\ $v_Y\succ v_X$. Such definitions give rise to different hardness results than in our paper. All the results about polynomial-time decidability transfer, however. We leave a detailed investigation to future work.

\co{TODOlater: could note something here about how safe improvements contrast with unsafe improvements. We coudl also use one of the examples to point this out.}

One perspective on the safe improvement relation is that it is an application of decision-theoretic dominance with preferences $\succeq$, with uncertainty over the satisfying assignments (see \Cref{sec:related-work}).

\removefortark{If $\succeq$ is a Pareto ordering, then we say that $Y$ is a \textit{safe Pareto improvement (SPI)} on $X$. Note that Pareto orderings are usually partial orders.\co{Maybe say this where we intrduce the word ``strict''.}}

\textbf{Safe (Pareto) improvements in strategic settings.}
Throughout this paper, we primarily consider SIs in strategic settings. That is, we imagine that each variable in $\mathcal X$ corresponds to a possible strategic interaction between a group of \textit{agents}.
For simplicity, we assume specifically that the interaction consists of the agents playing a normal-form game. The domains $\mathcal D$ are the possible outcomes of the normal-form games.
We imagine that a \textit{principal} with preferences $\succeq$ chooses which game is played by the agents. Note that $\succeq$ compares outcomes across different normal-form games and $\succeq$ is not necessarily (though in this paper often will be) related to the agents' utilities in the underlying games. 

In strategic settings
, we will generally use the term safe \textit{Pareto} improvement if $\succeq$ reflects the Pareto ordering of the outcomes w.r.t.\ to the utilities specified by the respective games
.\co{preceding sentence a bit awkward}
In general, this differs somewhat from Oesterheld et al.~\cite{oesterheld2022safe}; see \Cref{appendix:high-level-formalism-comparison-to-earlier}.
Note that we are thereby assuming that the utilities are (at least ordinally) comparable across the games in question. That is, if $u_1(o)$ is Player 1's utility in outcome $o$ of game $\Gamma$ under consideration, and $u_1'(o')$ is Player 1's utility in outcome $o'$ of another game $\Gamma'$, when Player 1 prefers $o$ over $o'$ if $u_1(o)>u_1'(o')$. In some sense, we assume that we are given a player's single utility function over all outcomes of all games.


One reason for the principal to be uncertain about the outcomes of games is the multiplicity of solutions (e.g., of Nash equilibria or non-dominated strategies). This uncertainty seems especially intractable -- we consider equilibrium selection an unsolved and probably unsolvable problem. Therefore, it is natural to be unwilling to assign probabilistic beliefs about which equilibrium (if any) of a game will materialize. 

While it is hard to judge how the multiplicity of solutions is resolved in a single game, it is natural to make some qualitative assumptions about how the outcomes of different games relate.
\co{TODOlater: I think it would be good to give an example here, different from the one in the introduction.}
We now give two specific assumptions about outcome correspondence that we have already discussed in the introduction. 
These assumptions are plausible in many settings and have long been discussed in the game-theoretic literature \cite{Apt2004,Kohlberg1986,Pearce1984,Harsanyi1988}. 
The first assumption is that we can remove strictly dominated strategies from any given game and obtain a game that will be played equivalently by the agents. The second assumption is that isomorphic games are played isomorphically.

\begin{assumption}[{\cite[cf.][Assumption 1]{oesterheld2022safe}}]\label{assumption:dominance}
Let $\Gamma=(A_1,...,A_n,\mathbf{u})$ be a game. For $i=1,...,n$, let $A_i'\subseteq A_i$ be a subset of actions s.t.\ all actions in $A_i - A_i'$ are strictly dominated in $\Gamma$.
Then consider the game $\Gamma'=(A_1',...,A_n',\mathbf{u}_{|A_1'\times...\times A_n'})$, which can be obtained from $\Gamma$ by elimination of dominated strategies. 
Then $\Gamma\sim_{\Phi} \Gamma'$, where $\Phi$ is defined by 
$\Phi(\bm{a}) = \{ \bm{a}\}$ if $a\in \bigtimes_i A_i'$ and $\Phi(\bm{a}) = \emptyset$ otherwise.
\end{assumption}


\begin{assumption}[{\cite[cf.][Assumption 2]{oesterheld2022safe}}]\label{assumption:isomorphism}
Let $\Gamma=(A_1,...,A_n,\mathbf{u}),\Gamma'=(A_1',...,A_n',\mathbf{u}')$ be two isomorphic games that do not contain any strictly dominated strategies. Let $\mathcal{I}$ be the set of isomorphisms between $\Gamma$ and $\Gamma'$. Then $\Gamma\sim_I \Gamma'$, where $I$ is defined by $I(\bm{a}) = \{\Phi_1(a_1) \mid \Phi \in \mathcal I\} \times ... \times \{ \Phi_n(a_n) \mid \Phi \in \mathcal I \}$.
\end{assumption}


Let's illustrate our formalism using the three games $\Gamma_a,\Gamma_b,\Gamma_c$ in \Cref{fig:intro-SPI-example}. \Cref{assumption:dominance} gives us the outcome correspondence between $\Gamma_a$ and $\Gamma_b$ that we described informally in the introduction. That is, $\Gamma_a\sim_{\Phi} \Gamma_b$, where $\Phi\colon (D,D) \mapsto \{ (D,D) \}, (D,C)\mapsto \{ (D,D) \}, (C,D)\mapsto \{(C,D)\}, (D,D)\mapsto \{(D,D)\}, (C',D)\mapsto \emptyset, (C',C)\mapsto \emptyset$.
Note that there is only one isomorphism between $\Gamma_b$ and $\Gamma_c$.
Thus, \Cref{assumption:isomorphism} allows us to obtain that $\Gamma_b\sim_{\Psi} \Gamma_c$, where $\Psi\colon (D,D) \mapsto \{(F,F) \},(D,C)\mapsto \{(F,E)\}, (C,D)\mapsto \{E,F\}, (C,C)\mapsto \{(E,E)\}$. Using the transitivity rule (\Cref{lemma:basic-results}.\ref{prop1-item:transitivity}) we get that $\Gamma_a\sim_{\Psi \circ \Phi} \Gamma_c$, where $\Psi \circ \Phi\colon (D,D) \mapsto \{(F,F) \},(D,C)\mapsto \{(F,E)\}, (C,D)\mapsto \{E,F\}, (C,C)\mapsto \{(E,E)\}, (C',D)\mapsto \emptyset, (C',C)\mapsto \emptyset$. Clearly, $\Psi \circ \Phi$ is strictly Pareto improving with respect to the Pareto order on the respective players' utilities. Thus $\Gamma_c$ is a strict SPI on $\Gamma_a$.

\section{Complexity results under general outcome correspondence assumptions}
\label{sec:complexity-under-arbitrary-assumptions}



We consider the following computational problem: Given a satisfiable, finite BCS, 
some partial order over outcomes, and two variables in the BCS,
decide whether one of the variables is a safe improvement on the other.
We also consider the question of whether \textit{any} variable in the BCS safely improves on a given variable; as well as whether there exist any \textit{two} variables in the BCS such that one is a safe improvement on the other. The restriction to satisfiable BCS \removefortark{-- which turns the decision problem into a \textit{promise problem} \cite{Even1984,Goldreich2006} -- }is natural given that in our context BCSs express beliefs about (and is thus satisfied by) the agents' behavior.
We show that these problems are co-NP-complete, even under various restrictions that are natural in the context of games. To warm up, we first give a weaker version for generic satisfiable BCSs and arbitrary partial orders $\succeq$:

\begin{proposition}\label{simple:co-np-completeness}
    The following decision problems are co-NP-complete. Given a satisfiable, finite BCS $(\mathcal X, \mathcal D, \mathcal A)$, a partial order $\succeq$ over $\bigcup \mathcal D$, and two variables $X,X'\in \mathcal X$, decide whether: 
    \begin{inparaenum}[1.]
        \item $X'$ is a (strict) SI on $X$.
        \item there is $\tilde X\in \mathcal X$ s.t.\ $\tilde X$ is a (strict) SI on $X$.
        \item there are $\hat X, \tilde X \in \mathcal X$ s.t. $\tilde X$ is a (strict) SI on $\hat X$.
    \end{inparaenum}
\end{proposition}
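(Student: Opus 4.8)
The plan is to establish co-NP membership for all three problems and then prove co-NP-hardness by a single reduction from the binary CSP. Membership is essentially the observation from the introduction that each problem is the complement of an NP problem. For (1), the complement asks for a single satisfying assignment $(v_Z)_{Z\in\mathcal X}$ with $v_{X'} \not\succeq v_X$ (for the strict variant, with $\lnot(v_{X'} \succ v_X)$); guessing this assignment and checking all constraints in $\mathcal A$ and the order $\succeq$, both given explicitly, is polynomial, so the complement is in NP. For (2) and (3) the complement is a \emph{universal} statement over variables (resp.\ pairs of variables) --- ``every candidate fails to be an SI'' --- but since there are only polynomially many candidates and each failure is witnessed by its own satisfying assignment, one may guess all $O(|\mathcal X|)$ (resp.\ $O(|\mathcal X|^2)$) witnessing assignments simultaneously; the total certificate is still polynomial, so these complements lie in NP as well. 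Hence all three problems are in co-NP.

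For hardness I would reduce an arbitrary binary CSP instance $C = (\mathcal X_0, \mathcal D_0, \mathcal A_0)$ to the \emph{complement} of each problem, so that the constructed instance is a YES-instance of the complement exactly when $C$ is satisfiable; since deciding satisfiability of $C$ is NP-complete, this shows each complement is NP-hard, i.e.\ each original problem is co-NP-hard. The delicate point is the satisfiability \emph{promise}: the constructed BCS must be satisfiable even when $C$ is not, which I would handle with a master-switch gadget. First, tag all values so distinct variables have disjoint domains (replace $x \in D_i$ by $(i,x)$ and rewrite the relations). Add a switch variable $m$ with domain $\{\mathrm{on}, \mathrm{off}\}$ and, for each original variable $X_i$, a fresh symbol $*_i$; augment $D_i$ to $D_i \cup \{*_i\}$, add the constraint $m \sim_{\{(\mathrm{on},x)\,:\,x\in D_i\}\cup\{(\mathrm{off},*_i)\}} X_i$, and replace each original constraint $X_i \sim_\Phi X_j$ by $X_i \sim_{\Phi \cup \{(*_i,*_j)\}} X_j$. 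Finally add gadget variables $X$ with domain $\{\bot\}$ (unconstrained, so $v_X = \bot$ always) and $X'$ with domain $\{\top, \bot'\}$ together with $m \sim_{\{(\mathrm{on},\top),(\mathrm{off},\bot')\}} X'$. The partial order $\succeq$ on $\bigcup \mathcal D$ is reflexivity plus the single relation $\bot' \succ \bot$, all other distinct values incomparable; this is a valid partial order.

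The key invariant is that satisfying assignments split into two modes. The ``off'' assignment ($m = \mathrm{off}$, each $X_i = *_i$, $X' = \bot'$, $X = \bot$) satisfies every constraint, so the promise holds unconditionally. An ``on'' assignment forces each $X_i \in D_i$ and $X' = \top$, satisfying the remaining constraints exactly when the original relations are met, so an on-assignment exists iff $C$ is satisfiable. Because domains are disjoint, distinct variables never share a value, so $v_{\tilde X} \succeq v_{\hat X}$ for distinct $\tilde X, \hat X$ can only hold via $\bot' \succ \bot$, forcing $\hat X = X$ and $\tilde X = X'$ with $v_{X'} = \bot'$. Consequently $X'$ is an SI on $X$ --- equivalently, some variable is an SI on $X$; equivalently, some pair forms an SI --- precisely when $X'$ never takes value $\top$, i.e.\ when no on-assignment exists, i.e.\ when $C$ is \emph{unsatisfiable}. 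The same equivalences hold verbatim for strict SIs, since $\bot' \succ \bot$ is already strict, giving the reduction for all three problems at once.

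The main obstacle is thus twofold, and both parts are resolved by the construction: maintaining satisfiability while tying the \emph{existence} of an SI to \emph{unsatisfiability} of $C$ (the master switch), and preventing spurious SIs from corrupting (2) and (3) (the per-variable disjoint domains, which confine the only comparability to the intended gadget pair). One caveat worth flagging: in the \emph{non-strict} forms of (2) and (3), reflexivity of $\succeq$ makes $X$ a trivial SI on itself, so these problems are non-trivial only when the two variables are required to be distinct; the construction above respects this reading, and the strict forms need no such restriction because $\succ$ is irreflexive.
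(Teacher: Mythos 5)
Your proof is correct, but it takes a different route from the paper. The paper does not prove \Cref{simple:co-np-completeness} directly: it derives it as a corollary of the stronger \Cref{theorem:natural-co-NP-completeness}, whose proof works entirely inside the game setting --- per CSP variable $X_i$ it builds a coordination game $\Gamma_i$ with an extra ``escape'' diagonal outcome $(a_{k_i+1},a_{k_i+1})$, ties the comparison pair $\Gamma,\Gamma'$ to the switch outcomes $(a_1,a_1)$ versus $(a_2,a_2)$ of $\Gamma$, and rules out spurious SPI pairs by assigning Pareto-incomparable utilities $(3+\epsilon,1-\epsilon),(3+2\epsilon,1-2\epsilon),\dots$ to the escape outcomes. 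Your master-switch variable $m$ with the $*_i$ values is conceptually the same device as the paper's escape outcomes (both guarantee the satisfiability promise unconditionally while making the ``bad'' non-improving assignment exist iff the source CSP is satisfiable), and your disjoint-domain tagging plus the minimal partial order $\bot'\succ\bot$ plays exactly the role of the paper's Pareto-incomparable utilities in confining any possible SI to the designated pair, which handles variants (2) and (3) in one stroke. What your route buys is a self-contained, elementary proof of \Cref{simple:co-np-completeness} in the abstract BCS setting, with no need for games, isomorphism/dominance assumptions, or Pareto orders; what it does not give you is the paper's refinement, namely that hardness survives the game-theoretic restrictions (inclusion of $\mathcal{A}_{*}(\mathcal G)$, Player-1 or Pareto preferences, bounded game sizes), which is the whole point of proving \Cref{theorem:natural-co-NP-completeness} first. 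Two smaller points in your favor: your co-NP-membership argument for (2) and (3) (bundling polynomially many witnessing assignments into one certificate) is spelled out more explicitly than the paper's ``proved analogously,'' and your caveat that the non-strict versions of (2) and (3) are only non-trivial under the reading $\tilde X\neq X$ is genuinely needed for the proposition to be co-NP-\emph{hard} as stated --- it matches the distinctness built into the paper's later algorithmic formulations in the appendix.
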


The result is based on the close connection between inferring SIs and the complement of binary CSP, which is co-NP-complete
.
It's especially easy to see that we can reduce from the first problem to the complement of binary CSP: Simply add to $(\mathcal X, \mathcal D, \mathcal A)$ a constraint that the outcome of $X'$ cannot be better than the outcome of $X$. Then the resulting BCS is satisfiable if and only if $X'$ is \textit{not} an SI on $X$. \co{TODOlater: here write something about a positive perspective on this, like: ``NP-completeness not withstanding, CSP solvers are often very fast.'' @Vince: Do you have more of a sense of what to write here -- e.g., is there a reference for this sort of claim that one could give?}
Since \Cref{simple:co-np-completeness} follows from \Cref{theorem:natural-co-NP-completeness} below, we omit a detailed proof.\co{TODOlater: I think it would be nice to give a proof just for this result for interested readers. Not highest prio, though.}

We now further refine \Cref{simple:co-np-completeness}. One natural question is whether the correspondence between 
CSP and reasoning about OC persists if we restrict attention to 
assumptions about outcome correspondences that one might plausibly believe in a game-theoretic context. For example, what happens if we exclude assumptions that imply irrational behavior (such as cooperation in the one-shot Prisoner's Dilemma)? 

\begin{figure}
    \centering
    \includegraphics[width=0.8\linewidth]{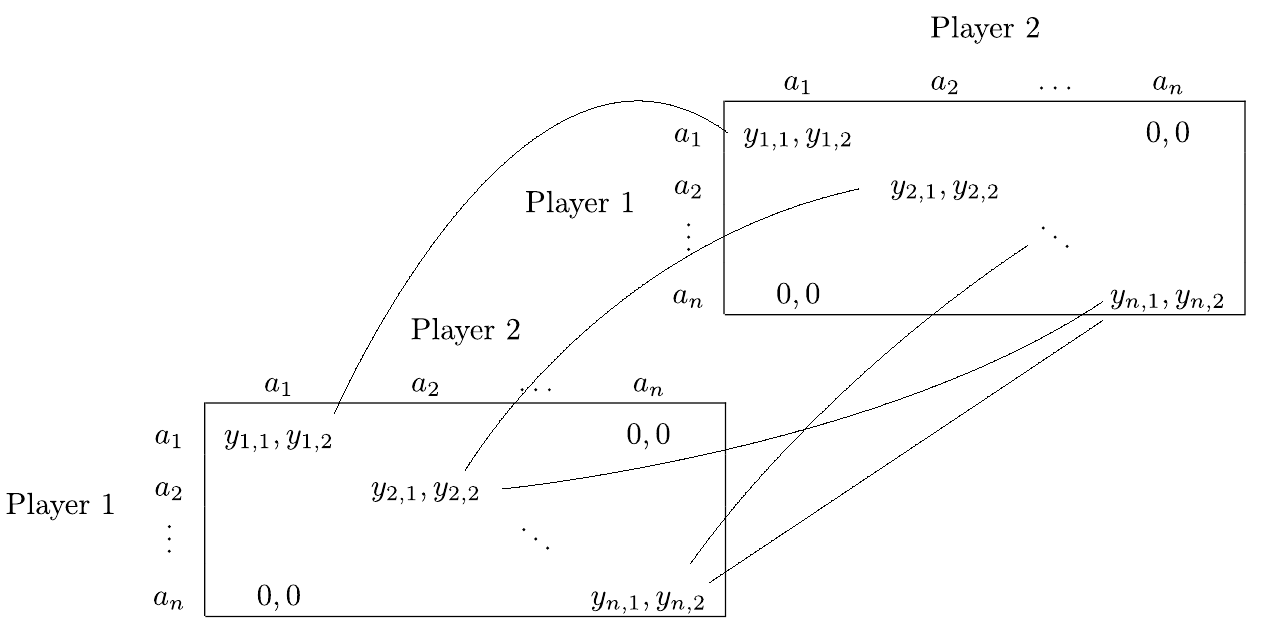}
    \caption{A type of outcome correspondence to which we can reduce binary constraint satisfaction problems without producing wildly implausible outcome correspondences.}
    \label{fig:natural-NP-completeness-games}
\end{figure}


To (partially) enforce plausibility of the assumptions, we define the following.
Let $\mathcal{G}$ be a finite set of games. Let $\mathcal{A}_{*}(\mathcal{G})$ be the set of OCs that arise from applying \Cref{assumption:isomorphism,assumption:dominance}. 
Besides restrictions on the assumptions, we might also ask what happens if we consider specific types of preferences $\succeq$, e.g., if we require $\succeq$ to be Pareto preferences w.r.t.\ the utilities in the underlying normal-form games.
The following result summarizes our co-NP-completeness result under all of these restrictions.

\begin{restatable}{theorem}{naturalcoNPcompleteness}\label{theorem:natural-co-NP-completeness}
The following decision problems are co-NP-complete: Given a finite set of games $\mathcal{G}$, a finite
satisfiable set $\mathcal A$ of outcome correspondences that includes $\mathcal{A}_{*}(\mathcal{G})$, 
two games $\Gamma,\Gamma'\in \mathcal{G}$, and a preference ordering $\succeq$, decide whether
\begin{inparaenum}[1.]
\item $\mathcal A$ implies that $\Gamma'$ is a (strict) SI on $\Gamma$;
\item $\mathcal{G}$ contains a (strict) SI on $\Gamma$;
\item $\mathcal G$ contains any $\Gamma',\Gamma$ s.t.\ $\Gamma'$ is a (strict) SI on $\Gamma$.
\end{inparaenum}
The problems remain co-NP-complete
if we restrict $\succeq$ to be Player 1's preferences 
or Pareto preferences, and/or we bound the sizes of the games in $\mathcal G$.
\end{restatable}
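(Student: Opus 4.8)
The plan is to prove the two directions separately: membership in co-NP (essentially the same argument as for \Cref{simple:co-np-completeness}, insensitive to all the stated restrictions) and co-NP-hardness (where all the technical work lies).

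For \textbf{membership}, I would show that the complement of each problem is in NP. Since every domain is finite and $\mathcal G$ is finite, a full assignment $(v_Z)_{Z\in\mathcal X}$ has polynomial size and, given the explicitly listed OCs in $\mathcal A$, can be checked for satisfaction in polynomial time. For problem 1, the complement ``$\Gamma'$ is \emph{not} a (strict) SI on $\Gamma$'' is witnessed by a single satisfying assignment with $v_{\Gamma'}\not\succeq v_\Gamma$ (resp.\ $v_{\Gamma'}\not\succ v_\Gamma$), so problem 1 lies in co-NP. For problems 2 and 3 I would use that co-NP is closed under polynomially bounded disjunction: the complement of ``some $\tilde\Gamma$ is an SI on $\Gamma$'' is ``for every $\tilde\Gamma$, $\tilde\Gamma$ is not an SI on $\Gamma$'', certified by one violating satisfying assignment per candidate $\tilde\Gamma$ -- polynomially many poly-size certificates, checkable in polynomial time, hence in NP; the same works for the pair-quantified problem 3. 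None of this uses anything about $\succeq$, so membership holds for arbitrary $\succeq$, for Player~1's preferences, for Pareto preferences, and under any size bound.

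For \textbf{hardness} I would reduce from the complement of a bounded-domain NP-complete CSP, concretely the problem of deciding whether a given graph is \emph{not} $3$-colorable, which is co-NP-complete; using bounded domains is what yields the ``bounded game sizes'' part of the claim. The construction realizes the CSP inside $\mathcal A_*(\mathcal G)$: for each vertex I introduce a small game whose only non-dominated outcomes (so, after one application of \Cref{assumption:dominance}, its only admissible values) are three equilibria standing for the colors $\{1,2,3\}$; for each edge I introduce a gadget of isomorphic games so that \Cref{assumption:isomorphism} forces the disequality constraint between the endpoint games. The concrete trick for the $\neq$-constraint is to design the two gadget games so that their set of isomorphisms is exactly the two $3$-cycles $\{(1\,2\,3),(1\,3\,2)\}$; then the induced OC $I$ satisfies $I(c)=\{1,2,3\}\setminus\{c\}$, i.e.\ $I$ is exactly the $\neq$ relation, as illustrated in \Cref{fig:natural-NP-completeness-games}. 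With generic, pairwise-distinct payoffs one pins down the isomorphism group and rules out accidental isomorphisms between unrelated gadget games that would otherwise inject spurious constraints into $\mathcal A_*(\mathcal G)$. The satisfying assignments of the resulting BCS then correspond exactly to the proper $3$-colorings of the graph.

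To respect the satisfiability \emph{promise} -- which must hold even when the graph is not $3$-colorable -- I would add to each game a fresh, $\succeq$-extremal ``escape'' outcome together with OCs making the all-escape assignment satisfying, while keeping these escape outcomes non-dominated so the coloring gadgets are undisturbed. I then designate $\Gamma,\Gamma'$ and choose $\succeq$ so that the escape assignment satisfies $v_{\Gamma'}\succeq v_\Gamma$ whereas every assignment arising from a genuine coloring satisfies $v_{\Gamma'}\not\succeq v_\Gamma$; hence $\Gamma'$ is an SI on $\Gamma$ iff the graph has no proper $3$-coloring. For Player~1's preferences I let the decisive comparison depend only on $u_1$; for Pareto preferences I instead make the coloring outcomes Pareto-\emph{incomparable} to the escape value (better for one player, worse for the other), which again breaks the SI relation; strictness is obtained by a small payoff perturbation. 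Problems 2 and 3 reduce to problem 1 by giving every game other than $\Gamma,\Gamma'$ a payoff range disjoint from all others, so that $(\Gamma,\Gamma')$ is the only $\succeq$-comparable pair and hence the only possible (strict) SI. The main obstacle is the edge gadget: realizing the constraints using \emph{only} the plausible OCs of \Cref{assumption:isomorphism,assumption:dominance}, with the isomorphism group pinned to exactly the desired permutations and no accidental isomorphisms or dominations polluting $\mathcal A_*(\mathcal G)$, all while each game stays of bounded size; verifying that the gadget's isomorphism OC is \emph{exactly} $\neq$ and that the composed constraints never become accidentally unsatisfiable is the delicate part.
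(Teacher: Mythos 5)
Your membership argument is fine and matches the paper's (\Cref{prop:unconstrained-SPI-inference-is-in-co-NP}: one violating satisfying assignment per candidate game witnesses the complement), and several ingredients of your hardness plan also match the paper's actual reduction: the paper reduces from binary CSP (bounded domains give bounded game sizes), uses per-variable diagonal games with an extra ``escape'' equilibrium so that the all-escape assignment keeps $\mathcal A$ satisfiable, arranges payoffs so that $\Gamma'$ (a single-outcome game with payoffs $(3,2)$) is an S(P)I on $\Gamma$ (a $2\times 2$ game with diagonal payoffs $(4,0)$ and $(2,1)$) iff the CSP is unsatisfiable, and makes the escape outcomes pairwise Pareto-incomparable to isolate $(\Gamma,\Gamma')$ for problems 2 and 3.

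However, there is a genuine gap at the heart of your reduction: you try to realize the CSP's disequality constraints \emph{inside} $\mathcal A_{*}(\mathcal G)$ via \Cref{assumption:isomorphism}, and this cannot work. First, the set of isomorphisms between two games is a coset of the automorphism group of the source game (if $\Phi,\Psi$ are isomorphisms, $\Psi^{-1}\circ\Phi$ is an automorphism), so an isomorphism set of size two forces $\Psi^{-1}\circ\Phi$ to be an involution; but for your gadget $\Psi^{-1}\circ\Phi=(132)^{-1}(123)=(132)$ has order $3$, so the isomorphism set ``exactly the two $3$-cycles'' is group-theoretically unrealizable. Second, even with a realizable coset, the OC of \Cref{assumption:isomorphism} is the \emph{per-player product} $I(\bm a)=\{\Phi_1(a_1)\mid\Phi\in\mathcal I\}\times\{\Phi_2(a_2)\mid\Phi\in\mathcal I\}$, so a diagonal color outcome $(c,c)$ maps to a product set containing off-diagonal outcomes -- not the clean relation $\neq$ -- and nothing in $\mathcal A_*(\mathcal G)$ confines assignments to the diagonal (the pure-Nash restriction is \Cref{assumption:pure-Nash}, which is not part of $\mathcal A_*$). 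Third, a cyclic automorphism forces $\lambda_i^3=1$, hence $\lambda_i=1$, $b_i=0$, so the three color equilibria must have \emph{identical} payoffs, contradicting your use of generic, pairwise-distinct payoffs to kill accidental isomorphisms. Most decisively, by \Cref{prop:assumptions-max-closedness} any constraint set arising from \Cref{assumption:dominance,assumption:isomorphism} is max-closed, and by \Cref{thm:completeness-under-max-closedness} the SI problems are then polynomial-time decidable; so a correct reduction from non-$3$-colorability into such instances would prove P $=$ NP. The fix is to read the theorem's hypothesis as the paper does: $\mathcal A$ need only \emph{include} $\mathcal A_{*}(\mathcal G)$, so one adds the CSP constraints as arbitrary hand-crafted OCs (which your escape-outcome OCs already are) and designs the games -- Pareto-incomparable diagonal payoffs, no isomorphisms, no dominated strategies -- precisely so that $\mathcal A_{*}(\mathcal G)$ contributes nothing.
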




Since we cannot expect to give a full account of what a plausible set $\mathcal A$ looks like, we here also give a brief account of what types of games and OC assumptions we use for our reduction from (the complement of) binary CSP. These games are sketched in \Cref{fig:natural-NP-completeness-games}. Essentially, these are games with $n$ pure, Pareto-optimal equilibria along the diagonal. We assume that the agents successfully coordinate on one of these equilibria (and thus play rationally in some sense) but we do not know which. Our assumptions about outcome correspondence give us information about how the agents select equilibria across games. From considering only the payoff matrices, it is unclear how one would ever arrive at such judgments. However, we might imagine further contextual information (à la \textit{Schelling} or \textit{focal points} \cite[][pp.\ 54--58]{Schelling1960}) under which these outcome correspondences would be plausible.


%
%

Repeated application of rules like \Cref{lemma:basic-results} can be done in polynomial time (see \Cref{appendix:inference-complexity-by-rules}). Therefore, unless P $=$ NP, repeated application of such rules is insufficient for deriving SIs. We discuss this in detail in \Cref{appendix:incompleteness-of-inference-as-per-basic-results}, where we also provide an example where the rules of \Cref{lemma:basic-results} are incomplete.


\section{Conditions for completeness of rule-based inference
}
\label{sec:completeness}


In this section, we provide conditions under which inference as per \Cref{lemma:basic-results} is complete and thus SIs can be found in polynomial time.


\subsection{Further outcome correspondence assumptions}
\label{sec:more-outc-correspondende-assumptions}

To guide our results, we first give some additional structurally interesting, natural assumptions about outcome correspondence. 
As a first example, one might assume that agents will play a Nash equilibrium.

\begin{assumption}\label{assumption:pure-Nash}
Let $\Gamma=(A,\mathbf u)$ be a game that has at least one pure Nash equilibrium. Then
$\Gamma \sim_{\Phi} \Gamma$, where $\Phi\colon A \multimap A$ is defined by $\Phi(\bm{a})=\{\bm{a}\}$ if $\bm{a}$ is a Nash equilibrium and $\Phi(\bm{a}) = \emptyset$ otherwise.
\end{assumption}

\co{Writing of following paragraph is awkward.}
While to some extent orthodox, we find this assumption questionable in games with multiple equilibria. It nevertheless has illustrative value. There is also one technical issue: As is well known, some normal-form games (e.g., Rock--Paper--Scissors) have some (or only) mixed equilibria. The language of outcome correspondence cannot express that the players play mixed strategies. (To address this we would have to restrict attention to games in which all relevant mixed strategies are represented as pure strategies -- but this is cumbersome to set up.) We therefore assume that the players play a \textit{pure} equilibrium (which renders the assumption even less plausible) and restrict it to games that have at least one such pure equilibrium.

\vc{presumably there needs to be some kind of transition here -- currently it reads like we're about to illustrate the previous two assumptions, but that's not what's happening, the part with the previous two assumptions just ends abruptly for now}
\co{What causes the expectation that there would be some illustration and what illustration would you expect? Basically this section (\Cref{sec:more-outc-correspondende-assumptions}) is just supposed to be a list of (currently two) kinds of assumption to support the later investigations into completeness of inference.}

\textbf{A hierarchical outcome correspondence assumption.}
To motivate our next assumption, consider the game on the left of \Cref{fig:decreasing-risk-example-new} (essentially a Stag Hunt). The game has two pure equilibria: $(a_H,a_H)$ and $(a_L,a_L)$. The former Pareto-dominates the latter, but $a_L$ is ``playing it safe''. In many contexts, it is unclear what actions the players will choose
.
Next consider the game on the right of \Cref{fig:decreasing-risk-example-new}. Qualitatively, this game is similar to the game on the left, 
but with slightly different payoffs. If we look carefully at the payoffs, we find that from left to right, action $a_H$ becomes more attractive to both players. (To see this, consider the changes between the two games and note that each of them favors playing $a_H$. For instance, in the right-hand game, playing $a_H$ gives Player 1 an additional payoff of $1$, regardless of what Player 2 does. This is an extra argument for Player 1 to play $a_H$\removefortark{, and for Player 2 to expect Player 1 to be more likely to play $a_H$, which in turn is a reason for Player 2 to also play $a_H$}.
) 
We thus argue for the outcome correspondence $\Gamma_1\sim_{(\Phi_1,\Phi_2)}\Gamma_2$, where $\Phi_i\colon a_H\mapsto \{ a_H\}, a_L\mapsto \{ a_L,a_H\}$.%
\removefortark{\footnote{The intuition that $a_H$ is more likely to be played in the right-hand game is not directly reflected in the Nash equilibria of the game.
The pure Nash equilibria are the same between the two games. In the mixed equilibria, the players have to make each other indifferent. Because $a_H$ becomes \textit{better} for your opponent, making them indifferent requires playing $a_H$ \textit{less} in the right-hand game, compared to the left-hand game. However, we think that by making it better/worse for each player to play $a_H$/$a_L$, it becomes more likely that the players expect the $(a_H,a_H)$ equilibrium to be played.}} We now give a generalized outcome correspondence assumption based on this claim. For simplicity we stick to 2-player 2x2 games.

\begin{assumption}\label{assumption:decreasing-risk}
    Let $\Gamma=(\{a_1^1,a_1^2\},\{a_2^1,a_2^2\},u_1,u_2)$ and $\hat\Gamma=(\{\hat a_1^1,\hat a_1^2\},\{\hat a_2^1,\hat a_2^2\}, \hat u_1, \hat u_2)$. Let $(a_1^1,a_2^1)$ and $(a_1^2,a_2^2)$ be strict Nash equilibria of $\Gamma$ and let $(\hat a_1^1, \hat a_2^1)$ and $(\hat a_1^2,\hat a_2^2)$ be strict Nash equilibria of $\hat\Gamma$. Let $(a_1^1,a_2^1)$ strictly Pareto dominate $(a_1^2,a_2^2)$ (in terms of $u_1,u_2$), and let $(\hat a_1^1, \hat a_2^1)$ strictly Pareto dominate $(\hat a_1^2,\hat a_2^2)$. Further, let
    $\hat u_i(\hat a_i^1,\hat a_{-i}^k)\geq u_i(a_i^1,a_{-i}^k)$ and $\hat u_i(\hat a_i^2,\hat a_{-i}^k)\leq u_i(a_i^2,a_{-i}^k)$ for $i=1,2$ and $k=1,2$.
    Then $\Gamma\sim_{(\Phi_1,\Phi_2)}\hat\Gamma$, where $\Phi_i\colon a_i^1 \mapsto \{ \hat a_i^1 \}, a_i^2 \mapsto \{ \hat a_i^1, \hat a_i^2 \}$.
\end{assumption}

\co{TODOlater: would be good to add some description of why we need the Pareto condition. Basically we need this to ensure that it's ``the same equilibrium''. Otherwise, we get weird ambiguities together with \Cref{assumption:isomorphism}: We could view the move from one game to the other as either making one action better or making the other action \textit{much} better.}

\co{TODOlater: do we need the top right and bottom left to be equilibria for this assumption to make sense? Isn't this assumption more general? If one action just gets more attractive in every way, then we should take that action more? Hmm... Hinges a bit on the opponent.}

\subsection{Inference is complete if the outcome correspondence assumptions are max-closed}



We now give a sufficient condition under which inference as per \Cref{lemma:basic-results} is complete.

\begin{restatable}[{\cite[Def.\ 2.5]{jeavons1995tractable}}]{definition}{defmaxclosedness}\label{def:max-closed}
    We call a BCS $(\mathcal X, \mathcal D, \mathcal A)$ \textit{max-closed} if there exist total orders $(\geq_X)_{X\in \mathcal X}$ s.t.\ each outcome correspondence $\Phi\in\mathcal A$ has $(\max(x_1,x_2),\max(y_1,y_2))\in \Phi$ whenever $(x_1,y_1),(x_2,y_2)\in \Phi$.
\end{restatable}

\co{Following paragraph is phrased a bit weirdly.}
\Cref{fig:cross-over} illustrates what kind of outcome correspondences are not allowed. An outcome correspondence violates max-closedness w.r.t.\ a given order, if 
it has outcomes associated as per the solid lines and potentially the dashed line. To make this max-closed, we would need $x_1$ and $y_1$ to also be associated, or we would need to delete one of the ``cross'' associations $(x_1,y_2)$ and $(x_2,y_1)$.
Note that the orders $\geq_X$ in the definition of max-closedness are unrelated to the ordering $\succeq$ used for defining SIs, which are primarily used for comparing outcomes between \textit{different} variables.

Jeavons et al.\ \cite{jeavons1995tractable} prove that under max-closedness, \textit{satisfiability} of a BCS is decidable in polynomial time. In particular, their proof shows that iterated application of the rules of \Cref{lemma:basic-results} will infer an empty outcome correspondence if and only if the BCS is unsatisfiable. We generalize this result to show that iterated application of the rules of \Cref{lemma:basic-results} is also complete for inferring outcome correspondences, i.e., for deciding whether all satisfying assignments satisfy some further OC such as an SI relationship.

\begin{restatable}{theorem}{completenessunderorderedwithoverlap}
\label{thm:completeness-under-max-closedness}
Let $(\mathcal X, \mathcal D, \mathcal A)$ be a max-closed BCS.
Then inference by repeated application of \Cref{lemma:basic-results} is complete for $(\mathcal X, \mathcal D, \mathcal A)$.
Consequently, the three SI decision problems of \Cref{simple:co-np-completeness,theorem:natural-co-NP-completeness} can be decided in polynomial time.
\end{restatable}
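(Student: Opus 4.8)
The plan is to show that the two binary relations agree on every pair of variables. For variables $X,Y$ let $\Phi^{*}_{XY}$ denote the strongest constraint derivable between them (the intersection of all derivable constraints, itself derivable by \Cref{lemma:basic-results}.\ref{prop1-item:intersection}, and well defined since the domains are finite, so the rules reach a fixpoint after polynomially many strictly shrinking steps). Completeness of rule-based inference is equivalent to $\Phi^{*}_{XY}$ equalling the projection $\pi_{XY}(\mathrm{Sol})$ of the solution set onto $(X,Y)$ for all $X,Y$: any entailed OC $\Xi$ satisfies $\pi_{XY}(\mathrm{Sol})\subseteq\Xi$, so if $\Phi^{*}_{XY}=\pi_{XY}(\mathrm{Sol})$ then the derivable $\Phi^{*}_{XY}\subseteq\Xi$ witnesses $\Xi$; conversely, taking $\Xi=\pi_{XY}(\mathrm{Sol})$ forces the equality. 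One inclusion, $\pi_{XY}(\mathrm{Sol})\subseteq\Phi^{*}_{XY}$, is immediate from soundness of the rules (\Cref{lemma:basic-results}), so the whole content is the reverse inclusion: \emph{every derived pair $(a,b)\in\Phi^{*}_{XY}$ is realised by some satisfying assignment.}

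Two ingredients set this up. First, I would check that the rules preserve max-closedness with respect to the fixed orders $(\geq_X)$: composition, intersection, inverse, the identity, and the all-relation are each routinely max-closed when their inputs are (for composition, if $(x_1,z_1),(x_2,z_2)\in\Psi\circ\Phi$ via witnesses $y_1,y_2$, then max-closedness of $\Phi$ and $\Psi$ gives $(\max(x_1,x_2),\max(y_1,y_2))\in\Phi$ and $(\max(y_1,y_2),\max(z_1,z_2))\in\Psi$, hence $(\max(x_1,x_2),\max(z_1,z_2))\in\Psi\circ\Phi$). Thus the closure is max-closed and is both arc- and path-consistent (reflexivity and symmetry composed through each third variable and intersected back in are exactly the path-consistency operations of \cite{montanari1974networks}). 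Second --- the structural heart --- because each constraint is max-closed, $\mathrm{Sol}$ is closed under componentwise maximum: if $u,v\in\mathrm{Sol}$ then for every constraint $(Z,W,\Phi)$ we have $(\max(u_Z,v_Z),\max(u_W,v_W))\in\Phi$, so $\max(u,v)\in\mathrm{Sol}$. Combined with arc-consistency this yields an ``all-maximum'' lemma: assigning each variable the largest value in its (nonempty) reduced domain $D^{*}_Z$ is a solution, since a support $e$ of $\max D^{*}_Z$ and a support $f$ of $\max D^{*}_W$ max together into $(\max D^{*}_Z,\max D^{*}_W)\in\Phi^{*}_{ZW}$. The same argument after pinning one variable gives exactness of the \emph{single}-variable projections $D^{*}_Z=\pi_Z(\mathrm{Sol})$.

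For the reverse inclusion I would, given $(a,b)\in\Phi^{*}_{XY}$, add singleton constraints $X\sim_{\{(a,a)\}}X$ and $Y\sim_{\{(b,b)\}}Y$; these are trivially max-closed, so the pinned BCS is still max-closed. By Jeavons et al.\ \cite{jeavons1995tractable} a max-closed BCS is satisfiable iff the rules never derive an empty constraint, so it suffices to show that re-closing the pinned BCS empties no relation. This is precisely the step generalising their \emph{satisfiability} result to \emph{inference}, and I expect it to be the main obstacle: path-consistency of the unpinned closure only guarantees that $(a,b)$ extends to each \emph{single} third variable, and one can exhibit max-closed path-consistent instances that fail to be globally consistent, so a naive one-shot ``maximum over the $a,b$-compatible values'' need not satisfy every binary constraint. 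The resolution is to relativise the derivation: any empty relation derivable in the pinned system corresponds, by carrying the restrictions $X=a$ and $Y=b$ through each rule application, to a derivation in the original system that intersects $(a,b)$ out of $\Phi^{*}_{XY}$, contradicting $(a,b)\in\Phi^{*}_{XY}$. Hence the pinned system is satisfiable, and by closure under maximum it has a top solution, which realises $(a,b)$.

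Finally, I would dispatch the unsatisfiable case --- where $\pi_{XY}(\mathrm{Sol})=\emptyset$, so every OC is vacuously entailed --- by noting that \cite{jeavons1995tractable} yields a derivable empty constraint, which propagates via \Cref{lemma:basic-results}.\ref{prop1-item-everything-related-via-all} and composition to an empty (maximally strong) constraint between every pair. Completeness then gives the algorithmic claim: compute the closure in polynomial time (the relations only shrink, and there are polynomially many shrinking steps; see \Cref{appendix:inference-complexity-by-rules}) and read off each of the three decision problems of \Cref{simple:co-np-completeness,theorem:natural-co-NP-completeness} by comparing the relevant $\Phi^{*}$ with the $\succeq$-induced correspondence $o\mapsto\{o':o'\succeq o\}$ --- for a fixed pair, over all targets, or over all pairs, respectively.
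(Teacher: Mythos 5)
Most of your scaffolding is sound and parallels the paper's strategy: reducing completeness to the claim that every pair $(a,b)$ surviving in the closed relation $\Phi^{*}_{XY}$ is realised by some satisfying assignment, checking that the rules of \Cref{lemma:basic-results} preserve max-closedness (the paper's \Cref{lemma:max-closed-persistent}), dispatching the unsatisfiable case via Jeavons et al., and deriving the polynomial-time consequence. The genuine gap is exactly where you flagged ``the main obstacle'', and your proposed resolution does not close it. Your relativisation claim --- that any derivation of an empty constraint in the pinned system can be carried back, rule by rule, to a derivation in the original system that removes $(a,b)$ from $\Phi^{*}_{XY}$ --- is a purely syntactic simulation argument that never invokes max-closedness, and as such it is refuted by Montanari's example (\Cref{prop:Montanari-incompleteness-example}): for $3$-colouring of $K_4$ the closure makes no progress, so $(1,2)\in\Phi^{*}_{X_1X_2}$, yet pinning $X_1=1$ and $X_2=2$ and re-closing forces the reduced domains of $X_3$ and $X_4$ down to $\{3\}$ and then derives the empty constraint from $X_3\neq X_4$. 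Structurally, the obstruction is that a pinned derivation uses \emph{both} unary restrictions simultaneously; simulating it in the unpinned system would require tracking a ternary constraint among $X$, $Y$, and the intermediate variable, which the binary rules cannot express. In the max-closed case the statement you need (``$(a,b)\in\Phi^{*}_{XY}$ implies the pinned closure derives no empty constraint'') is true, but it is \emph{equivalent} to the theorem being proved --- the pinned system derives no empty constraint iff it is satisfiable iff $(a,b)$ is realised --- so appealing to it without a new max-closedness-based argument is circular.

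What fills this hole in the paper is an explicit construction rather than a refutation-completeness transfer: given $b\in\Psi^{1,2}(a)$ in the closure, set $h_a^k\coloneqq\max(\Psi^{1,k}(a))$ and $h_b^k\coloneqq\max(\Psi^{2,k}(b))$ and assign $\min(h_a^k,h_b^k)$ to each remaining variable $k$; path consistency of the closure combined with max-closedness yields the key lemma (\Cref{lemma:upper-bounds-compatible}) that the maxima compatible with a common value are pairwise compatible, and a case analysis on which of $h_a^k,h_b^k$ attains each minimum verifies all binary constraints. That is where max-closedness does the work your syntactic argument omits. Your componentwise-maximum and top-solution machinery is correct but only becomes usable \emph{after} nonemptiness of the pinned closure is established, which is precisely the step left unproved; to repair your proof you would need to replace the relativisation step with a max-closedness argument of this kind.
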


\proofappendixref{appendix:proof-of-thm:completeness-under-max-closedness}

The following shows that all our assumptions are max-closed. 
Consequently, by \Cref{thm:completeness-under-max-closedness}, inference as per \Cref{lemma:basic-results} is complete for our assumptions.

\begin{restatable}{proposition}{propassumptionsmaxclosedness}
\label{prop:assumptions-max-closedness}
Consider any finite set of games $\mathcal G$ and set of outcome correspondences $\mathcal A$ resulting from the application of any subset of 
\Cref{assumption:isomorphism,assumption:dominance,assumption:pure-Nash
,assumption:decreasing-risk} to any subsets of games. Then $(\mathcal G,\mathcal A)$ is max-closed. 
\end{restatable}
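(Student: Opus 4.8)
The plan is to produce, for every game $\Gamma\in\mathcal G$, a single total order $\geq_\Gamma$ on its outcome set $A_1\times\dots\times A_n$ and to check that every outcome correspondence that the four assumptions put into $\mathcal A$ is max-closed with respect to these orders (\Cref{def:max-closed}). Since one order per variable must serve every constraint touching that variable, the orders must be fixed globally. My candidate is to order the outcomes of each game primarily by Player~1's utility $u_1$ (descending), refining by a structural, equilibrium-aware tie-break described below. The point of the $u_1$-key is that it is exactly the data left intact by the two ``benign'' operations in our assumptions: deleting strictly dominated strategies changes no payoff, and an isomorphism rescales each player's utility by a positive affine map, so $u_1$ and $u_1'$ induce the same order on corresponding outcomes.

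With this choice three of the four assumptions are almost immediate. \Cref{assumption:pure-Nash} contributes only the self-loop whose relation is the partial diagonal $\{(\bm a,\bm a):\bm a\text{ a pure NE}\}$; as both sides carry the same order, $\max(\bm a,\bm b)\in\{\bm a,\bm b\}$ is again a pure NE, so this is max-closed for \emph{any} order. \Cref{assumption:dominance} contributes the partial identity on the surviving outcomes $S=A_1'\times\dots\times A_n'$; taking $\geq_{\Gamma'}$ to be the restriction of $\geq_\Gamma$ to $S$ (legitimate, as utilities agree) makes the two componentwise maxima coincide and land in $S$, so it is max-closed. For \Cref{assumption:isomorphism}, the $u_1$-order turns each isomorphism into an order-isomorphism, so the correspondence $I$ (a bijection when the isomorphism is unique, a product of image-sets in general) commutes with $\max$ and is max-closed, provided the tie-break is chosen invariantly under the isomorphisms.

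The real work is \Cref{assumption:decreasing-risk}, whose correspondence is the product $\Phi=\Phi_1\times\Phi_2$ omitting exactly the per-player pair $(a_i^1,\hat a_i^2)$; thus $(\bm a,\bm b)\notin\Phi$ iff $a_1=a_1^1\wedge b_1=\hat a_1^2$, or $a_2=a_2^1\wedge b_2=\hat a_2^2$. I would verify max-closedness by the contrapositive ``no bad join'': suppose two $1$-entries have coordinatewise maxima hitting an omitted pair, say with the maximizing $\Gamma$-outcome playing $a_1^1$ and the maximizing $\hat\Gamma$-outcome playing $\hat a_1^2$ (the other omitted pair is symmetric). Using the domain-max relation to pin down the configuration and chaining the bridge inequalities $\hat u_1(\hat a_1^1,\hat a_2^k)\geq u_1(a_1^1,a_2^k)$ and $\hat u_1(\hat a_1^2,\hat a_2^k)\leq u_1(a_1^2,a_2^k)$ with the codomain-max relation yields $u_1(a_1^2,a_2^{m'})\geq u_1(a_1^1,a_2^m)$ for suitable $m,m'$. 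Because $(a_1^1,a_2^1)$ is a strict equilibrium that strictly Pareto-dominates $(a_1^2,a_2^2)$, the value $u_1(a_1^1,a_2^1)$ is the strict $u_1$-maximum, contradicting the chain whenever $m=1$. The residual case $m=2$ forces the two entries to be the off-diagonal outcomes $(a_1^1,a_2^2)$ and $(a_1^2,a_2^1)$, and here a purely utility-based order can genuinely fail (one can exhibit games satisfying every hypothesis of \Cref{assumption:decreasing-risk} for which even the lexicographic $(u_1,u_2)$ order is not max-closed). This is what the tie-break must repair: among $u_1$-tied outcomes, rank higher the profile in which Player~1 plays the action of the Pareto-dominant equilibrium. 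Then the off-diagonal block $\{a_1^1a_2^2,\,a_1^2a_2^1\}$ of $\Gamma$ and $\{\hat a_1^1\hat a_2^2,\,\hat a_1^2\hat a_2^1\}$ of $\hat\Gamma$ are ordered compatibly under $a_i^k\leftrightarrow\hat a_i^k$, and the bad join disappears.

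The main obstacle is reconciling this tie-break globally, since a single game may sit in both an \Cref{assumption:isomorphism} constraint and a \Cref{assumption:decreasing-risk} constraint, and the former wants isomorphism-invariance while the latter wants equilibrium-structure alignment. I would resolve this by making the tie-break \emph{intrinsic}: break $u_1$-ties by ``Player~1 plays the dominant-equilibrium action'', then by $u_2$, then by any fixed rule on payoff-identical outcomes. Being defined from each game's own payoff data, this rule is automatically identical wherever the game occurs; it gives \Cref{assumption:decreasing-risk} its alignment because the index correspondence matches dominant-equilibrium action to dominant-equilibrium action; and it is invariant under the isomorphisms of \Cref{assumption:isomorphism}, since an isomorphism between two such games preserves strict equilibria and Pareto-dominance and so carries the dominant-equilibrium action to the dominant-equilibrium action. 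Establishing that this single order per game makes each assumption's correspondence max-closed is the crux; once it is in hand, $(\mathcal G,\mathcal A)$ is max-closed and \Cref{thm:completeness-under-max-closedness} delivers the stated conclusion.
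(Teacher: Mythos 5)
Your overall architecture---a single fixed total order per game, verified assumption by assumption---matches the paper's, but your utility-keyed choice of order genuinely fails for two of the four assumptions, and the first failure is decisive. For \Cref{assumption:decreasing-risk}, your case analysis concludes that the only danger sits in the off-diagonal block and can be repaired by a tie-break among $u_1$-tied outcomes. But the hypotheses leave $u_1(a_1^2,a_2^1)$ essentially unconstrained relative to $u_1(a_1^1,a_2^2)$, and the bridge inequalities allow the two games to order their off-diagonal outcomes \emph{oppositely and strictly}, so the tie-break never activates. Concretely, take $u(a_1^1,a_2^1)=(10,10)$, $u(a_1^2,a_2^2)=(8,8)$, $u(a_1^2,a_2^1)=(6,0)$, $u(a_1^1,a_2^2)=(0,-10)$, and $\hat u(\hat a_1^1,\hat a_2^1)=(12,12)$, $\hat u(\hat a_1^2,\hat a_2^2)=(7,7)$, $\hat u(\hat a_1^1,\hat a_2^2)=(5,-10)$, $\hat u(\hat a_1^2,\hat a_2^1)=(1,1)$: every condition of \Cref{assumption:decreasing-risk} (strict equilibria, strict Pareto dominance, all bridge inequalities) is satisfied, yet your $u_1$-descending orders are tie-free and rank $(a_1^2,a_2^1)>(a_1^1,a_2^2)$ in $\Gamma$ but $(\hat a_1^1,\hat a_2^2)>(\hat a_1^2,\hat a_2^1)$ in $\hat\Gamma$. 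Taking $(x_1,y_1)=((a_1^2,a_2^1),(\hat a_1^2,\hat a_2^1))$ and $(x_2,y_2)=((a_1^1,a_2^2),(\hat a_1^1,\hat a_2^2))$, both in $\Phi$, the pair of maxima is $((a_1^2,a_2^1),(\hat a_1^1,\hat a_2^2))$, which $\Phi$ excludes (Player 2 plays $a_2^1$ on the left but $\hat a_2^2$ on the right). So no order whose primary key is a utility can work here; the order must be anchored to the action labels, which is why the paper simply decrees the fixed pattern $(a_H,a_H)>(a_L,a_L)>(a_L,a_H)>(a_H,a_L)$ in \emph{both} games (and propagates it to isomorphic games).

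There is a second gap in your treatment of \Cref{assumption:isomorphism}. When games admit nontrivial automorphisms, no strict order can be ``invariant under the isomorphisms'' (an automorphism exchanging two outcomes would force the order to rank them both ways); what is actually needed, and what the paper constructs, is that outcomes grouped into equivalence classes (related coordinatewise by automorphisms, matching the product structure of $I$) appear \emph{contiguously} in each order, with corresponding classes ordered consistently across isomorphic games. Your $u_1$-primary order can split such a class, because a class may contain outcomes with different payoffs: in the $3\times 3$ game with $u(a,a)=u(b,b)=(4,4)$, $u(c,c)=(2,2)$, and $(0,0)$ elsewhere (no strictly dominated actions; swapping $a\leftrightarrow b$ for both players is an automorphism), we have $(a',a')\in I((a,b))$ and $I((c,c))=\{(c',c')\}$ against an isomorphic copy $\Gamma'$, while your order interleaves $(c,c)$ strictly between the $u_1$-values $4$ and $0$; then $((a,b),(a',a'))$ and $((c,c),(c',c'))$ lie in $I$ but their maxima $((c,c),(a',a'))$ do not---again with no ties involved. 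The paper's proof sidesteps both problems by building the orders freely rather than from payoffs: the canonical label-based pattern on \Cref{assumption:decreasing-risk}-pairs, automorphism-class-contiguous orders on the remaining reduced games (propagated through isomorphisms), and eliminated outcomes appended at the bottom to handle \Cref{assumption:dominance}. Your two easy observations---that \Cref{assumption:pure-Nash} is max-closed under any order, and that the dominance correspondence is max-closed once the reduced game inherits a compatible order---are correct and agree with the paper; the crux you flagged at the end is exactly where the proposal breaks down.
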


\proofappendixref{appendix:proof-of-prop:assumptions-max-closedness}


\begin{restatable}{corollary}{propcompletenessunderassumptions}
\label{prop:completeness-under-assumptions}
Consider any finite set of games $\mathcal G$ and assumptions $\mathcal A$ resulting from the application of 
\Cref{assumption:isomorphism,assumption:dominance,assumption:pure-Nash
,assumption:decreasing-risk}. Then repeated inference as per \Cref{lemma:basic-results} on $(\mathcal G,\mathcal A)$ is complete on $\mathcal G$.
\end{restatable}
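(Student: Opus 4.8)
The plan is to obtain the corollary as a direct composition of the two preceding results, so the work is almost entirely bookkeeping. First I would make the relevant binary constraint structure explicit: its variables are the games in $\mathcal G$, the domain $D_\Gamma$ of each variable $\Gamma$ is its (finite) set of pure outcomes, and the constraint set is exactly the $\mathcal A$ produced by applying \Cref{assumption:isomorphism,assumption:dominance,assumption:pure-Nash,assumption:decreasing-risk} to the games in $\mathcal G$. I would observe that whenever the preconditions of one of these assumptions are met, the resulting OC relates games that both lie in $\mathcal G$ (in the case of \Cref{assumption:pure-Nash} it is even a self-loop on a single game), so every constraint in $\mathcal A$ is genuinely a constraint of the BCS $(\mathcal G,\mathcal D,\mathcal A)$. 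Since $\mathcal G$ is finite and each game has finitely many outcomes, $\mathcal A$ and all domains are finite, so the hypotheses needed to invoke \Cref{thm:completeness-under-max-closedness} are in place.

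The key step is to supply max-closedness. Using all four assumptions is a particular instance of the more general situation treated in \Cref{prop:assumptions-max-closedness}, namely the case in which the chosen subset of assumptions is the full set. Hence \Cref{prop:assumptions-max-closedness} directly gives that $(\mathcal G,\mathcal A)$ is max-closed: there is a single family of total orders $(\geq_\Gamma)_{\Gamma\in\mathcal G}$ with respect to which every OC in $\mathcal A$ satisfies the condition of \Cref{def:max-closed}. With max-closedness established, \Cref{thm:completeness-under-max-closedness} immediately yields that repeated application of the rules of \Cref{lemma:basic-results} is complete for $(\mathcal G,\mathcal D,\mathcal A)$. Specializing this completeness to OCs of the SI form $\Gamma\sim_\Phi\Gamma'$ for $\Gamma,\Gamma'\in\mathcal G$ gives completeness on $\mathcal G$, as claimed.

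Because all the substantive content has been delegated, there is essentially no obstacle internal to this corollary. The only point requiring care is to confirm that the notion of completeness delivered by \Cref{thm:completeness-under-max-closedness}---completeness for inferring \emph{arbitrary} OCs implied by the BCS---subsumes the target notion here, completeness for inferring SI relationships among the games in $\mathcal G$. This holds because, as noted after the definition of SIs, an SI relation is itself an outcome correspondence $X\sim_\Phi X'$ between two variables of the BCS, so any implied SI is among the OCs covered by the theorem. The genuinely load-bearing difficulty---exhibiting one family of orders that simultaneously witnesses max-closedness for the OCs arising from all four assumptions at once---resides in \Cref{prop:assumptions-max-closedness}, which I am free to invoke.
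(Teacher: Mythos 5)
Your proposal is correct and matches the paper exactly: the corollary is obtained by instantiating \Cref{prop:assumptions-max-closedness} (with the full set of assumptions as the chosen subset) to get max-closedness of $(\mathcal G,\mathcal A)$, and then invoking \Cref{thm:completeness-under-max-closedness}, with the observation that SI relations are themselves outcome correspondences already made in the paper. Your bookkeeping about the BCS structure and the subsumption of SI-completeness under OC-completeness is a careful spelling-out of what the paper leaves implicit, and nothing more is needed.
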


Jeavons et al.\ \cite{jeavons1995unifying} provide a generalization of \Cref{def:max-closed} to partial orders under which satisfiability can still be decided in polynomial time using repeated application of \Cref{lemma:basic-results}. It follows that inference of OCs can also still be done in polynomial time under this condition. However, repeated application of the rules of \Cref{lemma:basic-results} is insufficient in this case. We discuss this in detail in \Cref{appendix:generalized-max-closedness}.

\begin{figure}
\begin{floatrow}
\capbtabbox{
      \setlength{\extrarowheight}{2pt}
    \begin{tabular}{c|c|c|}
      \multicolumn{1}{c}{} & \multicolumn{1}{c}{$a_H$} & \multicolumn{1}{c}{$a_L$} \\\cline{2-3}
      $a_H$ & $8,8$ & $0,4$ \\\cline{2-3}
      $a_L$ & $4,0$ & $7,7$ \\\cline{2-3}
    \end{tabular}~~~~~~~
    \begin{tabular}{c|c|c|}
      \multicolumn{1}{c}{} & \multicolumn{1}{c}{$a_H$} & \multicolumn{1}{c}{$a_L$} \\\cline{2-3}
      $a_H$ & $9,8$ & $1,3$ \\\cline{2-3}
      $a_L$ & $4,1$ & $7,7$ \\\cline{2-3}
    \end{tabular}
}{%
  \caption{Example to illustrate \Cref{assumption:decreasing-risk}. From left to right, $a_H$ only becomes more attractive for both players.}%
  \label{fig:decreasing-risk-example-new}
}
\ffigbox{%
  \begin{tikzpicture}
        \node (x2) at (0,-0.75) {$x_2$};
        \node (dotsx) at (0,0) {$\rvdots$};
        \node (x1) at (0,0.75) {$x_1$};
        \draw[->] (x2) -- (dotsx);
        \draw[->] (dotsx) -- (x1);
    
        \node (y2) at (2,-0.75) {$y_2$};
        \node (dotsy) at (2,0) {$\rvdots$};
        \node (y1) at (2,0.75) {$y_1$};
        \draw[->] (dotsy) -- (y1);
        \draw[->] (y2) -- (dotsy);
        \draw[-] (x1) -- (y2);
        \draw[-] (x2) -- (y1);
        \draw[-] (x2) -- (y1);
        \draw[dashed] (x2) -- (y2);

        \draw[thick] (0,0) ellipse (0.6cm and 1.3cm);
        \node at (-1.35,0) {$X \in \mathcal X$};

        \draw[thick] (2,0) ellipse (0.6cm and 1.3cm);
        \node at (3.35,0) {$Y \in \mathcal X$};
    \end{tikzpicture}
}{%
  \caption{An illustration of the kinds of OCs that are \textit{not} allowed under max closedness. Arrows indicate orderings ($\geq_X$, $\geq_Y$); lines indicate OCs.}%
  \label{fig:cross-over}
}

\end{floatrow}
\end{figure}

\co{TODO: arguably Figure 3 should have additional elements below and above the given ones. Space...}

\subsection{Generalizing completeness to infinite binary constraint structures}
\label{sec:infinite}

In many cases, the set of variables or games under considerations is infinite. For instance, Oesterheld et al.\ \cite{oesterheld2022safe} consider a setting in which the principals can specify the utility functions of the agents. In principle, there are infinitely many utility functions and thus infinitely many possible games to have the agents play.

\co{TODO: following paragraph is a bit weird, could probably be much shorter.}
It is hard to say anything general about the complexity of inferring SIs on infinite BCSs. The answer depends heavily on how we construct finite representations of infinite binary constraint structures. For instance, note that any language for such finite-length representations will only be able to represent a countable subset of the uncountable set of infinite binary constraint structures. In the following, we essentially show a \textit{compactness} result (similar to compactness results for propositional logic) for inference on binary constraint structures. Roughly, the result shows that if an SI can be inferred in some infinite BCS, then it can already be inferred from a finite subset of that BCS.

\begin{restatable}{theorem}{finitetoinfinite}\label{thm:infinite-set-of-assumptions}
Let $(\mathcal X, \mathcal D, \mathcal A)$ be a BCS with countable $\mathcal X$ and finite domains. Let $X,X'\in\mathcal X$ be variables such that $(\mathcal X, \mathcal D, \mathcal A)$ implies $X\sim_{\Phi}X'$. Then there exists a finite subset $\mathcal X'$ of variables and a finite subset of the constraints $\mathcal A'\subseteq \mathcal A$
, such that $X\sim_{\Phi}X'$ follows from $(\mathcal X', \mathcal D, \mathcal A')$
.
Assuming the axiom of choice, this continues to hold for uncountable $\mathcal X$
.
\end{restatable}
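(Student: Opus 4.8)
The plan is to recast the inference claim as an unsatisfiability statement and then prove a compactness lemma for constraint satisfaction over finite domains. Writing $\Xi \coloneqq (D_X \times D_{X'}) \setminus \Phi$ for the complement of $\Phi$, observe that $(\mathcal X, \mathcal D, \mathcal A)$ implies $X \sim_{\Phi} X'$ precisely when the augmented structure $(\mathcal X, \mathcal D, \mathcal A \cup \{X \sim_{\Xi} X'\})$ has no satisfying assignment: a satisfying assignment of the augmented structure is exactly a satisfying assignment of $\mathcal A$ witnessing $(v_X, v_{X'}) \notin \Phi$. Hence it suffices to establish the following compactness lemma: if a BCS with finite domains is unsatisfiable, then some finite sub-collection of its constraints is already unsatisfiable. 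Granting this, apply it to the augmented structure to obtain a finite unsatisfiable $\mathcal B' \subseteq \mathcal A \cup \{X \sim_{\Xi} X'\}$. Let $\mathcal X'$ be the finite set of variables mentioned in $\mathcal B'$ together with $X, X'$, and set $\mathcal A' \coloneqq \mathcal B' \setminus \{X \sim_{\Xi} X'\}$. If $X \sim_{\Xi} X' \in \mathcal B'$, then no assignment satisfying $\mathcal A'$ can place $(v_X, v_{X'})$ in $\Xi$, which is exactly the statement that $(\mathcal X', \mathcal D, \mathcal A')$ implies $X \sim_{\Phi} X'$; otherwise $\mathcal B' \subseteq \mathcal A$ is itself unsatisfiable and the implication holds vacuously. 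One small point to verify is that unsatisfiability is insensitive to whether we range over all of $\mathcal X$ or only $\mathcal X'$: since the domains are nonempty (empty domains make every structure unsatisfiable, a trivial case), any assignment on $\mathcal X'$ satisfying $\mathcal A'$ extends to a full assignment, and $(v_X, v_{X'})$ depends only on coordinates in $\mathcal X'$.

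For countable $\mathcal X$ I would prove the compactness lemma by König's lemma. Enumerate $\mathcal X = \{X_1, X_2, \dots\}$ and form the tree whose depth-$k$ nodes are the partial assignments to $X_1, \dots, X_k$ that satisfy every constraint both of whose variables lie in $\{X_1, \dots, X_k\}$, with an edge from a node to each of its consistent one-variable extensions. The tree is finitely branching because each $D_{X_k}$ is finite. Were it infinite, König's lemma would produce an infinite branch, i.e.\ a full assignment satisfying every constraint (a constraint on $X_i, X_j$ is enforced from depth $\max(i,j)$ onward), contradicting unsatisfiability; so the tree has finite height $N$. Because the domains are finite, there are only finitely many possible constraints among the variables $X_1, \dots, X_{N+1}$ (each relation $\Psi$ ranges over the finite power set of a product of two finite domains), so the set $\mathcal B'$ of all constraints of $\mathcal A$ supported on $\{X_1, \dots, X_{N+1}\}$ is finite. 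Any assignment satisfying $\mathcal B'$ would restrict to a node at depth $N+1$, of which there are none; hence $\mathcal B'$ is unsatisfiable. Since König's lemma for finitely branching trees is a consequence of dependent choice, this argument stays well below the full axiom of choice.

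For uncountable $\mathcal X$ I would argue topologically. Give each finite $D_X$ the discrete topology and form the product $\prod_{X \in \mathcal X} D_X$ of assignments, which is compact by Tychonoff's theorem; this is the step that invokes the axiom of choice. Each constraint $X \sim_{\Psi} Y$ determines the set $\{v : (v_X, v_Y) \in \Psi\}$, which is closed (indeed clopen) because it depends on only two discrete coordinates. Unsatisfiability says the intersection of all these closed sets is empty, so by the finite-intersection-property form of compactness some finite subfamily already has empty intersection, yielding the finite unsatisfiable $\mathcal B'$. The remaining bookkeeping is identical to the countable case.

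The main obstacle is conceptual bookkeeping rather than computation: ensuring (i) that the complemented-constraint augmentation faithfully encodes ``$\mathcal A$ implies $X \sim_{\Phi} X'$'', (ii) that unsatisfiability can be localized to the finite variable set $\mathcal X'$ without changing its truth value (the step that uses nonemptiness of the domains), and (iii) a clean separation of the choice principles involved, so that the countable case rests only on König's lemma while the genuinely choice-dependent Tychonoff theorem is confined to the uncountable case.
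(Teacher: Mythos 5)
Your proposal is correct, and it is worth comparing the two routes. The paper proves the countable case directly: it defines nested sets $M_1 \supseteq M_2 \supseteq \cdots$ of full assignments satisfying the constraints among the first $i$ variables and, arguing by contraposition, uses finiteness of the domains to extract level-by-level a partial assignment violating $X \sim_{\Phi} X'$ that extends indefinitely, diagonalizing into a full violating assignment. That is your K\H{o}nig's-lemma argument in disguise, so the combinatorial core of your countable case matches the paper's. Where you genuinely diverge is in the packaging and in the uncountable case. First, you factor the problem through the complement constraint $\Xi = (D_X \times D_{X'}) \setminus \Phi$, turning ``$\mathcal A$ implies $X \sim_{\Phi} X'$'' into unsatisfiability of an augmented BCS and then proving a clean, reusable refutation-compactness lemma; the paper instead carries the implication through the whole argument. (Your complementation trick is one the paper itself endorses in other contexts, e.g., when deciding inference under join-closedness and in the co-NP membership argument, so it fits the framework well.) Second, and more substantially, the paper \emph{omits} the uncountable case entirely, deferring to the literature (Malcev, Cowen, Bukh), whereas you supply a self-contained proof via Tychonoff and the finite intersection property; since constraints are clopen cylinders in $\prod_{X} D_X$, this is exactly the standard argument, and your remark that the countable case needs only K\H{o}nig's lemma (dependent choice) while full choice is confined to Tychonoff matches the paper's own observation that the ultrafilter lemma suffices. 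The only loose end is the empty-domain edge case you flag as trivial: if some $D_Y = \emptyset$ your finite subfamily $\mathcal B'$ may be empty, and you must then explicitly put such a $Y$ into $\mathcal X'$ so that the finite sub-BCS is itself unsatisfiable and the implication holds vacuously; this is a one-line fix, not a gap. Likewise your point (ii), extending a satisfying assignment on $\mathcal X'$ to all of $\mathcal X$ using nonemptiness of the domains, is the right bookkeeping step and is needed in the paper's version too, where it is left implicit.
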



It follows that if inference rules like those in \Cref{lemma:basic-results} are complete on all finite subsets of an infinite binary constraint structure, they are also complete for the infinite binary constraint structure itself:

\begin{corollary}
\label{corollary:completeness-finite-to-completeness-infinite}
Let $(\mathcal X, \mathcal D, \mathcal A)$ be a countable BCS with finite domains.
If the rules of \Cref{lemma:basic-results} are complete for every finite sub-BCS (defined by $\mathcal X'\subseteq \mathcal X$ and constraints $\mathcal A'\subseteq \mathcal A$ over $\mathcal X'$), then they are also complete for $(\mathcal X, \mathcal D, \mathcal A)$.
\end{corollary}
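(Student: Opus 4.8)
The plan is to derive the corollary directly from the compactness result (\Cref{thm:infinite-set-of-assumptions}) together with the completeness hypothesis on finite sub-BCSs, exploiting the fact that both the hypothesis and the conclusion are statements purely about which OCs are \emph{derivable} by the rules of \Cref{lemma:basic-results}, and that derivability is monotone in the set of available constraints.

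First I would fix variables $X,X'\in\mathcal X$ and an OC $\Phi$ such that $(\mathcal X,\mathcal D,\mathcal A)$ implies $X\sim_\Phi X'$; completeness for the infinite BCS is exactly the claim that the rules derive $X\sim_\Phi X'$ for every such $\Phi$. Applying \Cref{thm:infinite-set-of-assumptions}, I obtain a finite $\mathcal X'\subseteq\mathcal X$ (which I may assume contains $X$ and $X'$, adjoining them as isolated variables if necessary) and a finite set $\mathcal A'\subseteq\mathcal A$ of constraints over $\mathcal X'$ such that the finite sub-BCS $(\mathcal X',\mathcal D,\mathcal A')$ already implies $X\sim_\Phi X'$. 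By the hypothesis of the corollary, the rules of \Cref{lemma:basic-results} are complete for this finite sub-BCS, so there is a finite derivation, using only constraints in $\mathcal A'$ and the shared domains in $\mathcal D$, that produces $X\sim_\Phi X'$.

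The second step is to lift this derivation to the full BCS. Here I would observe that each rule in \Cref{lemma:basic-results} is \emph{monotone}: it consumes one or two already-available constraints (and, for reflexivity and the $\mathrm{all}$-rule, only the domains, which are common to both structures since $\mathcal D$ is shared) and outputs a new constraint. Hence any constraint derivable from $\mathcal A'$ is derivable from any superset of starting constraints over a superset of variables, in particular from $\mathcal A\supseteq\mathcal A'$ over $\mathcal X\supseteq\mathcal X'$. The very same finite derivation therefore witnesses $X\sim_\Phi X'$ in $(\mathcal X,\mathcal D,\mathcal A)$. Since $X$, $X'$, and $\Phi$ were arbitrary, the rules are complete for the infinite BCS.

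The step that genuinely carries the weight is the reduction to finitely many constraints, but that work is already packaged in \Cref{thm:infinite-set-of-assumptions}; given that theorem, the only thing left to verify is the monotonicity observation, which is immediate from the syntactic form of the rules. The sole bookkeeping subtlety is ensuring $X,X'\in\mathcal X'$ and that the finite-completeness hypothesis is applied to a genuine sub-BCS (same finite domains, a subset of variables, and constraints among those variables), both of which are guaranteed by the formulation of \Cref{thm:infinite-set-of-assumptions}.
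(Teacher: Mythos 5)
Your proposal is correct and matches the paper's intended argument exactly: the corollary is stated as an immediate consequence of the compactness result (\Cref{thm:infinite-set-of-assumptions}), combined with precisely the observation you make, namely that rule-based derivability is monotone in the constraint set, so a derivation in the finite sub-BCS is verbatim a derivation in the full BCS. Your bookkeeping points (ensuring $X,X'$ lie in the finite variable set and that the hypothesis is applied to a genuine sub-BCS) are handled implicitly by the paper but are worth making explicit, as you do.
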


Oesterheld et al.\ \cite{oesterheld2022safe} state a result about the complexity of deciding the existence of safe Pareto improvements that are provable by the rules in \Cref{lemma:basic-results} (less item \ref{prop1-item:intersection}). Using \Cref{corollary:completeness-finite-to-completeness-infinite}, we can show that the result holds not just for inference via \Cref{lemma:basic-results} but for inference period. For details, see \Cref{sec:o-and-c-complexity-result}.

\section{Related work}
\label{sec:related-work}

\textbf{Constraint satisfaction problems.} Technically, our work is most closely related to the literature on (binary) constraint satisfaction problems and in particular constraint propagation.
We have used some ideas and results from that literature, such as the NP-completeness of CSP and the max-closedness condition \cite{jeavons1995tractable,jeavons1995unifying}.
Our perspective differs from the CSP literature in two ways. First, we are interested in \textit{inference} on binary constraint structures, as opposed to finding satisfying assignments. More specifically, we are interested in inferring SIs. This inference problem is different from the problem of finding satisfiable assignments, sometimes in subtle ways (e.g., see \Cref{appendix:generalized-max-closedness}).
Second, because we work in a specific domain (strategic interactions), we are interested in specific types of constraints.
Still, we have established a novel and direct connection between safe improvements in game theory and the CSP literature, which we consider to be an important part of our contribution.

\textbf{Prior work on safe Pareto improvements.} In terms of the game-theoretic motivation, our work is most closely related to prior work on SPIs \cite{oesterheld2022safe}.
Our basic setup and definition of safe (Pareto) improvements generalizes Oesterheld et al.'s \cite{oesterheld2022safe} setup by allowing arbitrary assumptions about outcome correspondence, arbitrary interventions on how the agents play the game, and arbitrary preferences over outcomes. \Cref{appendix:high-level-formalism-comparison-to-earlier} gives a more detailed comparison of the setups.

Our hardness results are different from those of Oesterheld et al.\ \cite{oesterheld2022safe} and Sauerberg et al.\ \cite{announcementSPIs}. In Oesterheld et al.'s NP-hardness result (Theorem 9), $\mathcal X$ is infinitely large and the result hinges on $\mathcal X$ being more than polynomially large as a function of the problem input size (since otherwise the existence of a polynomial-time algorithm follows from max-closedness).
\co{TODOlater: The graph-isomorphism-hardness results of Sauerberg et al.\ \cite{announcementSPIs}, meanwhile, are due to the graph-isomorphism hardness of deciding whether \Cref{assumption:isomorphism} applies to any given pair of games.}
Our co-NP-hardness results are driven by allowing a wider range of outcome correspondence assumptions on finite, explicitly represented sets of games, whereas other work \cite{oesterheld2022safe,announcementSPIs}
always consider specific assumptions (\Cref{assumption:dominance,assumption:isomorphism}).
Prior work on SPIs has not considered the question of completeness of inference as per \Cref{lemma:basic-results} at all.



One perspective on the concept of safe improvements is that it is an application of the decision-theoretic dominance principle to choice under uncertainty about how the agents resolve equilibrium selection problems. That is, let the set of states be the satisfying assignments of $(\mathcal X, \mathcal D, \mathcal A)$. Let the actions be the variables in $\mathcal X$. And let the outcome of choosing action $X$ in a given state (assignment) $(v_X)_{X\in\mathcal X}$ be the outcome $v_X$. Then $X'$ is a safe improvement on $X$ if $X'$ dominates $X$ w.r.t.\ the given ordering $\succeq$.
While conceptually there is a close relationship between the SI concept and the dominance principle, we are unaware of technical work on decision-theoretic dominance that is related to the technical concepts developed in our paper.




\co{TODO: I think it would be nice to here add a paragraph on ``other methods for comparing interventions on games''.}

\section*{Acknowledgments}

We thank Emery Cooper, Nathaniel Sauerberg and our anonymous referees for helpful discussions. Caspar Oesterheld's work on this project was funded by the FLI AI existential risk PhD fellowship.


\bibliographystyle{eptcs}
\bibliography{references}

\appendix

\section{Completeness and incompleteness of inference as per \Cref{lemma:basic-results} and computational complexity under natural generalizations of max-closedness}
\label{appendix:generalized-max-closedness}

The condition of max-closedness is based on a complete order on each domain. What happens if instead we consider a closedness condition based on partial orders? Of course, to define the closedness condition we at least need something like the max of two elements to be well-defined. Therefore, a natural generalization of linear orders is to consider \textit{join lattices}, i.e., partial orders under which every two elements have a least upper bound (a join). 
Instead of defining join closedness based on the underlying semi-lattice, we (equivalently) define join-closedness directly using the least upper bound operator ($\sqcup$). This is sometimes called the \textit{algebraic} approach to lattices. Specifically, we say that $\sqcup \colon M \times M \rightarrow M$ is a \textit{join} operator on a set $M$ if for $x_a,x_b,x_c\in M$ we have: $\sqcup(\sqcup(x_a,x_b),x_c)=\sqcup(x_a,\sqcup(x_b,x_c))$ (Associativity); $\sqcup(x_a,x_b)=\sqcup(x_b,x_a)$ (Commutativity); $\sqcup(x_a,x_a)=x_a$ (Idempotency).

\begin{definition}[\cite{jeavons1995unifying}]
    We say that a BCS $(\mathcal X, \mathcal D, \mathcal A)$ is \textit{(semi-lattice) join-closed} if there is a family of join operators $(\sqcup_X \colon D_X \times D_X \rightarrow D_X)_{X\in \mathcal X}$ on the domains of the variables s.t.\ for every constraint $\Phi\in\mathcal A$ between $X$ and $Y$, we have that if $(x_a,y_a),(x_b,y_b) \in \Phi$, then
        $(\sqcup_X(x_a,x_b),\sqcup_Y(y_a,y_b))\in \Phi$.   
\end{definition}

\subsection{Refutation completeness of inference as per \Cref{lemma:basic-results}}
\label{sec:refutation-completeness-under-join-closedness}

Jeavons et al.\ \cite{jeavons1995unifying} show that under join-closedness, \textit{satisfiability} of a given constraint structure can be decided in polynomial time. (This result implies that \textit{inference} can also be solved in polynomial time, see \Cref{sec:inference-under-join-closdness-in-polynomial-time}.) In fact, their proof shows that inference as per \Cref{lemma:basic-results} is sufficient to decide satisfiability. For convenience, we reprove this result here.

\begin{theorem}[{\cite[][Theorem 16]{jeavons1995unifying}}]
    \label{thm:join-closed-completeness-for-satisfiability}
    Let $(\mathcal X, \mathcal D, \mathcal A)$ be semi-lattice-join-closed binary-constraint structure. Then the following two are equivalent:
    \begin{inparaenum}[1.]
        \item \label{thm:join-closed-completeness-for-satisfiability:item-empty-OC} We can infer an empty outcome correspondence using the rules of \Cref{lemma:basic-results}.
        \item \label{thm:join-closed-completeness-for-satisfiability:item-unsatisfiable} $(\mathcal X, \mathcal D, \mathcal A)$ is unsatisfiable.
    \end{inparaenum}
\end{theorem}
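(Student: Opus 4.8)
The plan is to prove the two implications separately, treating the direction ``empty OC $\Rightarrow$ unsatisfiable'' as the routine soundness direction and ``unsatisfiable $\Rightarrow$ empty OC'' (in contrapositive form) as the substantive one.

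\emph{Soundness.} First I would check that each rule in \Cref{lemma:basic-results} is sound: if an assignment satisfies the premise constraints, it satisfies the conclusion. For transitivity, $(v_X,v_Y)\in\Phi$ and $(v_Y,v_Z)\in\Psi$ give $(v_X,v_Z)\in\Psi\circ\Phi$; intersection, reflexivity, symmetry and the all-relation are immediate. Hence any satisfying assignment satisfies \emph{every} constraint derivable by repeated application of the rules. If an empty OC $X\sim_\emptyset Y$ is derivable, then no pair lies in $\emptyset$, so no satisfying assignment can exist; this yields the first implication.

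\emph{Completeness (contrapositive).} Suppose no empty OC is derivable; I would construct a satisfying assignment. First close $\mathcal A$ under the rules, and for each ordered pair $(X,Y)$ let $R_{XY}$ be the intersection of all derivable $X\to Y$ constraints (so $R_{XY}\subseteq\Phi$ for every original $X\sim_\Phi Y$, and $R_{YX}=R_{XY}^{-1}$). The first key lemma to establish is that the rules \emph{preserve} join-closedness: composition, intersection, inverse, the identity and the all-relation of join-closed relations are again join-closed with respect to the same family $(\sqcup_X)$. For composition this is the representative check: from $(x_a,z_a),(x_b,z_b)\in\Psi\circ\Phi$ choose witnesses in $\Phi$ and $\Psi$ and apply join-closedness twice. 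Thus every $R_{XY}$ is join-closed. Next I would normalize: intersecting each self-loop with $\mathrm{id}_{D_X}$ (reflexivity) yields $R_{XX}=\{(x,x):x\in D_X^\ast\}$ for some $D_X^\ast\subseteq D_X$; since no empty OC is derivable, $D_X^\ast\neq\emptyset$, and since $R_{XX}$ is join-closed, $D_X^\ast$ is closed under $\sqcup_X$, hence (domains being finite) has a maximum $m_X=\bigsqcup_X D_X^\ast$ in the induced order $a\le b\Leftrightarrow\sqcup(a,b)=b$. Composing each $R_{XY}$ with the self-loops also lets me assume $R_{XY}\subseteq D_X^\ast\times D_Y^\ast$, and the transitivity rule applied with $Z=X$ gives $R_{XX}\subseteq R_{YX}\circ R_{XY}$, i.e.\ arc consistency: every $x\in D_X^\ast$ has a partner $y\in D_Y^\ast$ with $(x,y)\in R_{XY}$, and symmetrically.

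The assignment I would propose is $v_X:=m_X$ for every variable, and it remains to verify $(m_X,m_Y)\in R_{XY}\subseteq\Phi$ for every constraint; this is the step I expect to be the main obstacle. The argument exploits join-closedness from both sides: writing $D_X^\ast=\{x^{(1)},\dots,x^{(k)}\}$ with arc-consistency partners $y^{(i)}\in D_Y^\ast$, iterated join-closedness gives $(m_X,y^\ast)\in R_{XY}$ where $y^\ast=\bigsqcup_i y^{(i)}\in D_Y^\ast$, so $y^\ast\le m_Y$; symmetrically, arc consistency supplies $x'\in D_X^\ast$ with $(x',m_Y)\in R_{XY}$ and $x'\le m_X$. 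Taking the join of the pairs $(m_X,y^\ast)$ and $(x',m_Y)$ and using $\sqcup_X(m_X,x')=m_X$ and $\sqcup_Y(y^\ast,m_Y)=m_Y$ yields $(m_X,m_Y)\in R_{XY}$. Since $R_{XY}\subseteq\Phi$ for every original constraint, $v$ satisfies the BCS, so it is satisfiable, completing the contrapositive. The only delicate points are getting the normalization right so that all partner values provably lie below $m_X$ and $m_Y$, and verifying that join-closedness is preserved by each rule.
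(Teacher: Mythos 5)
Your proposal is correct and takes essentially the same route as the paper's proof: the soundness direction is the same routine check, and for the contrapositive you use the same two ingredients the paper does, namely that the rules of \Cref{lemma:basic-results} preserve join-closedness (the paper's \Cref{lemma:semi-lattice-max-closed-persistent}) and a ``maximal'' satisfying assignment whose pairwise consistency is verified by joining two crossing pairs $(m_X,y^\ast)$ and $(x',m_Y)$ to obtain $(m_X,m_Y)$ --- exactly the paper's final step with $(o_i',o_j)$ and $(o_i,o_j')$. The only difference is presentational: the paper anchors its construction at a distinguished variable $X_1$ (taking $o_1$ maximal with $\Psi^{1,1}(o_1)\neq\emptyset$ and $o_i=\max\Psi^{1,i}(o_1)$), whereas you assign every variable its local maximum $m_X=\max D_X^\ast$ and verify constraints symmetrically via arc consistency; since your pairwise claim shows $m_{X_i}\in\Psi^{1,i}(o_1)$, the two assignments in fact coincide.
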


\Cref{thm:join-closed-completeness-for-satisfiability} can be viewed as stating a kind of \textit{refutation completeness} (similar to the refutation completeness of the resolution rule). 
To prove \Cref{thm:join-closed-completeness-for-satisfiability} we use the following lemma (which is the analogue of \Cref{lemma:max-closed-persistent} for semi-lattice join closedness).

\begin{lemma}\label{lemma:semi-lattice-max-closed-persistent}
Let $(\mathcal X, \mathcal D, \mathcal A)$ be semi-lattice join-closed. Let $(\mathcal X, \mathcal D, \mathcal A^+)$ result from application of \Cref{lemma:basic-results} to $(\mathcal X, \mathcal D, \mathcal A)$. Then $(\mathcal X, \mathcal D, \mathcal A^+)$ is semi-lattice join-closed.
\end{lemma}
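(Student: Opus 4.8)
The plan is to show that the \emph{same} family of join operators $(\sqcup_X)_{X\in\mathcal X}$ that witnesses join-closedness of $(\mathcal X, \mathcal D, \mathcal A)$ continues to witness join-closedness of the augmented structure $(\mathcal X, \mathcal D, \mathcal A^+)$; in particular I would not construct new operators. Since every constraint of $\mathcal A$ survives in $\mathcal A^+$ and is join-closed by assumption, it suffices to verify, for each of the five rules in \Cref{lemma:basic-results}, that the constraint the rule produces is join-closed with respect to these same operators. I would treat the five rules one at a time; because the operators are never modified, the claim for a full (iterated) round of inference then follows by a trivial induction on the number of rule applications.

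The main work, and the only genuine obstacle, is the transitivity rule, since it is the sole case requiring a witness in an intermediate domain. Suppose $X\sim_\Phi Y$ and $Y\sim_\Psi Z$ are join-closed, and consider the derived constraint $X\sim_{\Psi\circ\Phi} Z$. Take $(x_a,z_a),(x_b,z_b)\in\Psi\circ\Phi$. By the definition of composition there are witnesses $y_a,y_b\in D_Y$ with $(x_a,y_a),(x_b,y_b)\in\Phi$ and $(y_a,z_a),(y_b,z_b)\in\Psi$. Join-closedness of $\Phi$ gives $(\sqcup_X(x_a,x_b),\sqcup_Y(y_a,y_b))\in\Phi$, and join-closedness of $\Psi$ gives $(\sqcup_Y(y_a,y_b),\sqcup_Z(z_a,z_b))\in\Psi$. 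The element $\sqcup_Y(y_a,y_b)$ is then a witness establishing $(\sqcup_X(x_a,x_b),\sqcup_Z(z_a,z_b))\in\Psi\circ\Phi$, which is exactly the join-closedness condition for the composed constraint.

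The remaining four rules are routine. For intersection, if $(x_a,y_a),(x_b,y_b)\in\Phi\cap\Xi$, applying join-closedness of $\Phi$ and of $\Xi$ separately places $(\sqcup_X(x_a,x_b),\sqcup_Y(y_a,y_b))$ in both, hence in $\Phi\cap\Xi$. For symmetry, join-closedness of $\Phi^{-1}$ is immediate from that of $\Phi$ upon swapping coordinates. The reflexivity constraint $\mathrm{id}_{D_X}$ is join-closed because any pair $(\sqcup_X(x_a,x_b),\sqcup_X(x_a,x_b))$ has equal coordinates and thus lies on the diagonal; and the universal constraint $\mathrm{all}_{D_X,D_Y}=D_X\times D_Y$ contains every pair, so it is trivially join-closed. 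Combining the five cases shows that $\mathcal A^+$ is join-closed under $(\sqcup_X)_{X\in\mathcal X}$, which completes the argument.
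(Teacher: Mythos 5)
Your proposal is correct and follows essentially the same route as the paper's own proof: keep the same join operators fixed, then check rule by rule that each derived constraint is join-closed, with the only substantive case being transitivity, where you (like the paper) use the join $\sqcup_Y(y_a,y_b)$ of the intermediate witnesses and apply join-closedness of $\Phi$ and $\Psi$ in turn. The paper dismisses reflexivity, symmetry, and the universal constraint as ``easy'' where you spell them out, but this is a difference of presentation, not of argument.
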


\begin{proof}[Proof of \Cref{lemma:semi-lattice-max-closed-persistent}] \Cref{lemma:basic-results}.\labelcref{prop1-item-symmetry,prop1-item-reflexivity,prop1-item-everything-related-via-all} are easy. So we only consider the transitivity and intersection rules.
\underline{Transitivity}: Let $\Gamma\sim_{\Phi}\Gamma'$ and $\Gamma'\sim_{\Psi}\hat\Gamma$. Then consider the outcome correspondence $\Gamma\sim_{\Psi \circ \Phi}\hat\Gamma$. We need to show that whenever $\hat o ^ a \in (\Psi \circ \Phi)(o^a)$ and $\hat o ^ b \in (\Psi \circ \Phi)(o^b)$, then $\sqcup(\hat o^a, \hat o^b) \in (\Psi \circ \Phi)(\sqcup(o^a,o^b))$.
To prove this, note first that if $\hat o ^ a \in (\Psi \circ \Phi)(o^a)$, then there is $o'^a$ in $\Gamma'$ s.t.\ $o'^a\in \Phi(o^a)$ and $\hat o^a\in \Phi(o'^a)$. There must also be an analogous $o'^b$. Now by max-closedness we have that $\sqcup(o'^a, o'^b) \in \Phi(\sqcup(o^a, o^b))$, and $\sqcup(\hat o^a, \hat o^b) \in \Psi(\sqcup(o'^a, o'^b))$. It follows (from the definition of composition) that $\sqcup(\hat o^a, \hat o^b) \in (\Psi \circ \Phi) (\sqcup(o^a, o^b))$.

\underline{Intersection}: Let $\Gamma\sim_{\Phi}\Gamma'$ and $\Gamma\sim_{\Psi}\Gamma'$. Consider the outcome correspondence $\Gamma\sim_{\Psi \cap \Phi}\Gamma'$. We need to show that whenever $o'^a \in (\Psi \cap \Phi)(o^a)$ and $o'^b \in (\Psi \cap \Phi)(o^b)$, then $\sqcup(o'^a, o'^b) \in (\Psi \cap \Phi)(\sqcup(o^a,o^b))$.
Note that if $o'^a \in (\Psi \cap \Phi)(o^a)$, then in particular $o'^a \in \Psi (o^a)$. Similarly, $o'^b \in \Psi (o^b)$. Thus, by max closure of $\Psi$, we have that $\sqcup(o'^a, o'^b) \in \Psi(\sqcup(o^a, o^b))$. Analogously, we get that $\sqcup(o'^a, o'^b) \in \Phi(\sqcup(o^a, o^b))$. It follows that $\sqcup(o'^a, o'^b) \in (\Psi \cap \Phi)(\sqcup(o^a, o^b))$.
\end{proof}

\begin{proof}[Proof of \Cref{thm:join-closed-completeness-for-satisfiability}]
    The implication from \ref{thm:join-closed-completeness-for-satisfiability:item-empty-OC} to \ref{thm:join-closed-completeness-for-satisfiability:item-unsatisfiable} is trivial. We have left to prove the other direction. We prove this by proving its contrapositive: If repeated application of \Cref{lemma:basic-results} does not yield an empty outcome correspondence, then $(\mathcal X, \mathcal D, \mathcal A)$ is satisfiable.

    So let $(\Psi^{i,j})_{i,j}$ be the minimal outcome correspondences between the variables in $\mathcal X$ that can be inferred by repeated application of \Cref{lemma:basic-results}. (Because of the intersection rule, there are unique minimal inferable outcome correspondences.) Then we construct a satisfying assignment as follows. First assign to $X_1$ be the largest element $o_1$ in the domain of $X_1$ s.t.\ $\Psi^{1,1}(o_1)\neq \emptyset$. (There is a unique maximal such element since for every two $\hat o_1,\tilde o_1$ with this property, $\sqcup(\hat o_1,\tilde o_1)$ also has this property by \Cref{lemma:semi-lattice-max-closed-persistent}.)
    
    For $i\geq 2$ choose $o_i$ to be the maximal element of $\Psi^{1,i}(o_1)$. Note that $\Psi^{1,i}(o_1)$ is nonempty, since otherwise one can infer using \Cref{lemma:basic-results} that we cannot assign $o_1$ to $X_1$ and so it would have to be $\Psi^{1,1}(o_1)=\emptyset$. Also the maximal element is unique because for any two elements $\tilde o_i,\hat o^i\in \Psi^{1,i}(o_1)$, we have that $\sqcup(\tilde o_i,\hat o^i)\in \Psi^{1,i}(o_1)$ by \Cref{lemma:semi-lattice-max-closed-persistent}.

    We have left to show that the assignment of $o_i$ thus constructed satisfies $(\mathcal X, \mathcal D, \mathcal A)$. That is, for every outcome correspondence $X_i\sim_{\Phi} X_j$ we need to show that $o_j\in \Phi(o_i)$. Notice that since $o_j\in \Psi^{1,j}(o_1)$, there has to be some $o_i'\in \Psi^{1,i}(o_1)$ s.t.\  $o_i'\in\Psi^{j,i}(o_j)$ for $X_i$. After all, if this weren't the case, we could apply the transitivity rule applied to $X_1\sim_{\Psi^{1,i}} X_i \sim_{\Psi^{i,j}} X_j$ to infer that $o_j$ is not compatible with $o_1$. Further, by choice of $o_i$ we must have $o_i'\leq o_i$. By an analogous argument there must be $o_j'\leq o_j$ for $X_j$ s.t.\ $o_j'\in \Psi^{j,i}(o_i)$. By join-closedness we have that $o_i\in \Psi^{j,i}(o_j)$ and therefore $o_j\in \Phi(o_i)$.
\end{proof}

\subsection{Incompleteness of \Cref{lemma:basic-results} under join-closedness and similar conditions}

Unfortunately, inference as per \Cref{lemma:basic-results} alone is \textit{not} sufficient for solving the kinds of inference problems we are concerned with in this paper, i.e., for deciding whether all satisfying assignments of a given constraint structure also satisfy some additional outcome correspondence.

\begin{proposition}\label{prop:incompleteness-inference-join-semi-lattice}
There exists a join-closed BCS in which there exists an SPI, but the SPI cannot be inferred using repeated application of the rules of \Cref{lemma:basic-results}.
\end{proposition}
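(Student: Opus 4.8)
The plan is to exhibit an explicit counterexample: a finite, satisfiable, join-closed BCS $(\mathcal X, \mathcal D, \mathcal A)$ together with a partial order $\succeq$ and two variables $X, X' \in \mathcal X$ such that $X'$ is a safe (Pareto) improvement on $X$ -- every satisfying assignment has $v_{X'} \succeq v_X$ -- yet the minimal outcome correspondence between $X$ and $X'$ inferable by repeated application of \Cref{lemma:basic-results} is \emph{not} contained in the SPI correspondence $\Phi\colon o \mapsto \{o' : o' \succeq o\}$, so no inferred OC witnesses the SPI. Two structural facts guide the construction. First, since max-closedness already yields completeness (\Cref{thm:completeness-under-max-closedness}), the join operators must \emph{not} come from a total order; the natural choice is a \emph{diamond} semilattice $D = \{\bot, a, b, \top\}$ with $a, b$ incomparable and $a \sqcup b = \top$. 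Second, the rules of \Cref{lemma:basic-results} enforce exactly \emph{path consistency}, which coincides with global consistency on any three variables; hence the gap between what holds and what is inferable can only appear across \emph{four or more} variables, where a pair can be pairwise- and triple-wise consistent yet fail to extend to a full satisfying assignment.

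Concretely, I would take four variables $X = X_1, X_2, X_3, X_4 = X'$, each with the diamond domain above and the lattice join as $\sqcup_{X_i}$, and design the binary constraints $\mathcal A$ to form a cycle so that: (i) the global solution set is nonempty but its projection onto $(X_1, X_4)$ excludes some pair $(o, o')$ with $o' \not\succeq o$; while (ii) that same pair survives in every binary relation obtainable by the rules, because it extends consistently to each single remaining variable $X_2$ and $X_3$ taken separately. The SPI then holds by (i) but is not inferable by (ii). For $\succeq$ I would use the diamond order itself, which is a genuine Pareto order: assigning two-player utilities $\bot = (0,0)$, $a = (1,0)$, $b = (0,1)$, $\top = (1,1)$ realizes it as componentwise dominance, so $(o, o')$ with $o' \not\succeq o$ is a true violation of the improvement relation and the example is legitimately an SPI example.

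The verification then proceeds in three finite steps, in this order. \textbf{(1) Join-closedness}: specify the join operators $(\sqcup_{X_i})$ explicitly and check, for each constraint $\Phi \in \mathcal A$, that $(x_1, y_1), (x_2, y_2) \in \Phi$ implies $(\sqcup(x_1,x_2), \sqcup(y_1,y_2)) \in \Phi$; this is a bounded case check per constraint. \textbf{(2) The SPI holds}: enumerate the satisfying assignments and confirm $v_{X_4} \succeq v_{X_1}$ in each, thereby establishing both satisfiability (so the BCS is non-degenerate, consistent with the paper's restriction to satisfiable BCSs) and the improvement relation. \textbf{(3) Non-inferability}: compute the path-consistent closure by repeatedly applying the transitivity and intersection rules (as in the construction of the minimal OCs $\Psi^{i,j}$ in the proof of \Cref{thm:join-closed-completeness-for-satisfiability}) and exhibit the surviving pair $(o, o') \in \Psi^{1,4}$ with $o' \not\succeq o$; since $\Psi^{1,4}$ is the tightest inferable OC between $X_1$ and $X_4$, no application of \Cref{lemma:basic-results} can produce an OC implying the SPI.

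The main obstacle is reconciling join-closedness with the existence of a genuine inference gap. Join-closedness is a strong restriction -- by \Cref{thm:join-closed-completeness-for-satisfiability} it already makes local propagation decide \emph{satisfiability}, and the accompanying greedy maximal-assignment method is powerful -- so the counterexample must place the entire failure in the finer \emph{projection/inference} problem rather than in satisfiability. Making a locally-consistent-but-globally-inconsistent pair persist through the rules while every constraint stays closed under componentwise join is the delicate part; it is precisely here that a non-total (diamond) semilattice and at least four variables are essential, since a total order would trigger \Cref{thm:completeness-under-max-closedness} and three variables would collapse path consistency into global consistency.
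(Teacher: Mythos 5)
Your high-level strategy is the same as the paper's: an explicit, finite, four-variable, join-closed counterexample in which some pair of values is excluded from every satisfying assignment yet survives the path-consistent closure computed by \Cref{lemma:basic-results}. Your two structural observations are also sound and match the paper's design: the semilattices cannot come from total orders (else \Cref{thm:completeness-under-max-closedness} applies), and at least four variables are needed, since on three variables any pair surviving the closure extends to a full solution. Your explicit Pareto realization of $\succeq$ (utilities $(0,0),(1,0),(0,1),(1,1)$ on a diamond) is a nice touch that the paper leaves implicit. The gap is that you never actually produce the counterexample, and for an existence statement of this kind the example \emph{is} the proof. Your plan defers exactly the step you yourself flag as delicate -- writing down concrete constraints that are simultaneously (a) join-closed, (b) a fixed point of the rules of \Cref{lemma:basic-results}, and (c) globally inconsistent with one $\succeq$-violating pair -- and it is not evident that your proposed shape can realize all three at once. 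For comparison, the paper's instance (see \Cref{fig:semi-lattices-for-prop:incompleteness-inference-join-semi-lattice}) is \emph{not} a cycle: its constraint graph is $K_4$ minus an edge, the excluded pair $(x_2,y_2)$ sits on the \emph{missing} edge, the domains are heterogeneous (sizes $2,2,4,7$), and the global failure is engineered inside the single seven-element lattice $D_W$ via three join-closed subsets $\{w_2,w_5,w_6\}$, $\{w_3,w_6,w_7\}$, $\{w_4,w_5,w_7\}$ with pairwise non-empty but empty triple intersection, after $Z$ is forced to $z_4$.

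This difference in shape matters for your verification step (3). In your all-diamond $4$-cycle, the bad pair lies on a \emph{direct} constraint edge $X_1$--$X_4$, so its global exclusion must arise from the non-existence of a joint $(v_2,v_3)$ satisfying the $X_2$--$X_3$ constraint even though suitable $v_2$ and $v_3$ exist separately; that is a different combinatorial configuration from the paper's single-domain triple-intersection failure, and you have not shown it can coexist with join-closedness of all constraints and stability under closure (the closure here tightens $\Psi^{2,3}$ via the triples through $X_1$ and $X_4$, which can propagate back and kill the pair). The diamond itself is not hopeless -- it does admit pairwise-intersecting join-closed triples with empty intersection, e.g.\ $\{\bot,a\}$, $\{a,\top\}$, $\{\bot,\top\}$, since images $\Phi(x)$ of join-closed constraints are subsemilattices -- but until some concrete instance passes your step (3), the proposition is not proved. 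The most direct repair is to abandon the cycle and imitate the paper's asymmetric configuration: two $2$-element variables carrying the bad pair on a non-edge, one variable that the pair forces to a single value, and one larger lattice hosting the empty triple intersection; this is exactly what makes the join-closedness and fixed-point checks tractable.
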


\begin{proof}

We prove this using the following example. We take four variables $X,Y,Z,W$. In \Cref{fig:semi-lattices-for-prop:incompleteness-inference-join-semi-lattice}, we give the domains, as well as join lattices on the domain, given by the transitive closures of the graphs in the figure. The operators $\sqcup_X,\sqcup_Y,\sqcup_Z,\sqcup_W$ are defined as the least upper bounds on these join-lattices (e.g., $\sqcup_W(w_5,w_7)=w_4$, $\sqcup_W(w_7,w_2)=w_1$). It is easy to verify that these are indeed join lattices.

The outcome correspondence assumptions are given as follows:
Between $X,Z$, assume that the following outcome correspondence holds: $x_2\mapsto \{ z_2, z_4\},~ x_1\mapsto \{ z_1,z_2,z_3,z_4\}$.
Between $Y,Z$ assume $y_2\mapsto \{ z_3,z_4\}, ~y_1\mapsto \{ z_1,z_2,z_3,z_4\}$.
Between $X,W$ assume 
    $x_1 \mapsto \{ w_1,w_2,w_3,w_4,w_5,w_6,w_7\},~ x_2 \mapsto \{ w_2,w_5,w_6\}$.
    Between $Y,W$ assume
    $y_1 \mapsto \{ w_1,w_2,w_3,w_4,w_5,w_6,w_7\},~ y_2 \mapsto \{ w_3, w_6, w_7 \}$.
Between $Z,W$, assume $z_4 \mapsto \{ w_7,w_5,w_4\},\text{ and for }i=1,2,3\colon z_i \mapsto \{ w_1,w_2,w_3,w_4,w_5,w_6,w_7\}$.

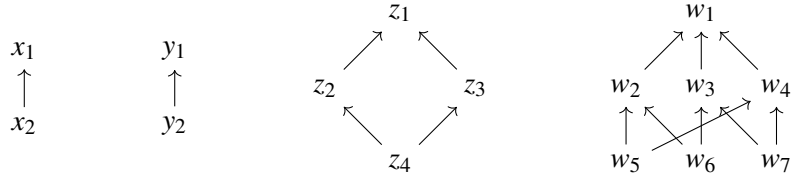
\begin{figure*}
    \centering
    \begin{tikzpicture}
        \node (x2) at (0,-0.5) {$x_2$};
        \node (x1) at (0,0.5) {$x_1$};
        \draw[->] (x2) -- (x1);
    
        \node (y2) at (2,-0.5) {$y_2$};
        \node (y1) at (2,0.5) {$y_1$};
        \draw[->] (y2) -- (y1);
    
        \node (z1) at (5,1) {$z_1$};
        \node (z2) at (4,0) {$z_2$};
        \node (z3) at (6,0) {$z_3$};
        \node (z4) at (5,-1) {$z_4$};
        \draw[->] (z2) -- (z1);
        \draw[->] (z3) -- (z1);
        \draw[->] (z4) -- (z2);
        \draw[->] (z4) -- (z3);
    
        \node (w1) at (9,1) {$w_1$};
        \node (w2) at (8,0) {$w_2$};
        \node (w3) at (9,0) {$w_3$};
        \node (w4) at (10,0) {$w_4$};
        \node (w5) at (8,-1) {$w_5$};
        \node (w6) at (9,-1) {$w_6$};
        \node (w7) at (10,-1) {$w_7$};
    
        \draw[->] (w2) -- (w1);
        \draw[->] (w3) -- (w1);
        \draw[->] (w4) -- (w1);
        \draw[->] (w5) -- (w2);
        \draw[->] (w6) -- (w2);
        \draw[->] (w6) -- (w3);
        \draw[->] (w7) -- (w3);
        \draw[->] (w7) -- (w4);
        \draw[->] (w5) -- (w4);
    \end{tikzpicture}
    \caption{Semilattices for the example for the proof of \Cref{prop:incompleteness-inference-join-semi-lattice}}
    \label{fig:semi-lattices-for-prop:incompleteness-inference-join-semi-lattice}
\end{figure*}

It is easy to verify that no progress can be made  by applying \Cref{lemma:basic-results} to the assumptions. In particular, we can't infer an outcome correspondence between $X$ and $Y$ that disallows the pair $(x_2,y_2)$. (To verify this, one needs to go through all the different triplets of variables and apply the transitivity rule to see whether any pairs of outcomes that are permitted by the above can be excluded from consideration.)

However, we can see as follows that the OC assumptions imply that no satisfying assignment can assign $x_2$ to $X$ and $Y$ to $y_2$. To prove this, assume for contradiction that both $x_2$ and $y_2$ are part of some satisfying assignment. By the OCs between $X$ and $Z$ and $Y$ and $Z$, we can infer that in the assignment the value $z_4$ must be assigned to $Z$. Finally, we can see that the intersection of the sets values corresponding to $x_2$, $y_2$ and $z_4$ is empty. (Note that the \textit{pairwise} intersections are all non-empty.)
\end{proof}

\removefortark{Intuitively, it seems to us that the problem with join-closedness for inference is that it only implies that some parts of the outcome correspondences (namely, the least upper bounds) are simple in some sense. For inference, we need to reason about all possible values of each variable and so the join-closedness condition becomes less powerful than it is for deciding satisfiability.

It is natural to ask whether we can strengthen the join-closedness condition in order for inference as per \Cref{lemma:basic-results} to be complete. For instance, what happens if we consider a ``full'' lattice (i.e., one with a join \textit{and} a meet operation) and we require both join- and meet-closedness? What if we additionally assume that the images of the outcome correspondences are convex sets or ``closed intervals'' (i.e., the set of elements between some lower and some upper bound)? We leave these questions for future research. For a coarse-grained complexity analysis (P versus (co-)NP-complete as opposed to, say, cubic versus quartic), their importance is moderated by the complexity result in the following section.}

\subsection{Inference under join-closedness can be solved in polynomial time}
\label{sec:inference-under-join-closdness-in-polynomial-time}

Repeated inference as per \Cref{lemma:basic-results} alone is insufficient for inferring all true outcome correspondences under join-closedness. However, because the rules of \Cref{lemma:basic-results} are refutation complete (complete for deciding satisfiability), we can still use them to decide inference questions in polynomial time. Roughly, to decide whether $X\sim_{\Phi} Y$ follows from $(\mathcal X, \mathcal D, \mathcal A)$, we need to add $X\sim_{\Phi^C} Y$ to $(\mathcal X, \mathcal D, \mathcal A)$, where $\Phi^C=D_X\times D_Y - \Phi$ is the complement of $\Phi$, and then perform inference as per \Cref{lemma:basic-results} to decide whether $X\sim_{\Phi^C} Y$ is satisfiable in $(\mathcal X, \mathcal D, \mathcal A)$. We thus obtain the following result.

\begin{theorem}
    Given semi-lattice-join-closed $(\mathcal X, \mathcal D, \mathcal A)$ and preferences $\succeq$, the following problems are solvable in polynomial time.
        (1) Decide whether there exist $X,X'\in\mathcal{X}$ s.t.\ $X\neq X'$ and $X'$ is an SI on $X$.
        (2) Given a variable $X \in \mathcal{X}$, decide whether $\mathcal{X}-\{X\}$ contains a variable that is an SI on $X$.
        (3) Given two variables $X$ and $X'$ in $\mathcal{X}$, decide whether $X'$ is an SI on $X$.
    This continues to hold if we consider the analogous questions for \textit{strict} SI.
\end{theorem}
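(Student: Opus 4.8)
The plan is to reduce each safe-improvement query to polynomially many \emph{satisfiability} queries over join-closed BCSs, each of which can be decided in polynomial time via the refutation-completeness result for join-closed structures (\Cref{thm:join-closed-completeness-for-satisfiability}, together with the fact that repeated application of \Cref{lemma:basic-results} runs in polynomial time). The starting observation is the standard duality between inference and satisfiability: $X'$ is an SI on $X$ exactly when \emph{no} satisfying assignment of $(\mathcal X,\mathcal D,\mathcal A)$ has $v_{X'}\not\succeq v_X$. Equivalently, writing $\Phi^C=\{(o,o')\in D_X\times D_{X'}\mid o'\not\succeq o\}$, the SI relation fails iff the structure obtained by adding the constraint $X\sim_{\Phi^C}X'$ is satisfiable. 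So it suffices to decide satisfiability of this augmented structure in polynomial time.

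The main obstacle is exactly the one flagged by \Cref{prop:incompleteness-inference-join-semi-lattice}: the added constraint $X\sim_{\Phi^C}X'$ need not be join-closed, so \Cref{thm:join-closed-completeness-for-satisfiability} does not apply to the augmented structure directly. I would get around this by decomposing the query: the augmented structure is satisfiable iff there is \emph{some} pair $(o,o')$ with $o'\not\succeq o$ such that the structure obtained from $(\mathcal X,\mathcal D,\mathcal A)$ by restricting $D_X$ to $\{o\}$ and $D_{X'}$ to $\{o'\}$ is satisfiable. The key point to verify is that restricting a domain to a singleton preserves join-closedness: the induced join operator on a singleton is trivially a join, and for any original constraint $X\sim_{\Phi}Y$, if $(o,y_a),(o,y_b)$ lie in the restricted relation then join-closedness of $\Phi$ gives $(\sqcup_X(o,o),\sqcup_Y(y_a,y_b))=(o,\sqcup_Y(y_a,y_b))\in\Phi$, so the restricted constraint is again join-closed (and constraints not touching $X$ or $X'$ are unchanged). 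Hence each of the at most $|D_X|\cdot|D_{X'}|$ restricted structures is join-closed and its satisfiability is decidable in polynomial time, and $X'$ is an SI on $X$ iff none of them is satisfiable. This settles problem (3).

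Problems (1) and (2) then follow by running the problem-(3) test over the relevant, polynomially many pairs of variables: for (2) we test each $X'\in\mathcal X-\{X\}$, and for (1) we test each ordered pair $X\neq X'$, answering ``yes'' as soon as one test succeeds. Since the BCS is given explicitly as input, there are at most $|\mathcal X|$ respectively $|\mathcal X|^2$ such tests, each polynomial, so the total running time is polynomial. (An unsatisfiable input is handled automatically: every restricted structure is then also unsatisfiable, so every pair is reported as an SI, which is vacuously correct.)

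Finally, the strict case is handled by the same machinery with only the set of ``bad'' pairs changed. Since $o'\succ o$ means $o'\succeq o$ and $o\not\succeq o'$, the negation $o'\not\succ o$ holds exactly when $o'\not\succeq o$ or $o\succeq o'$; so $X'$ is a \emph{strict} SI on $X$ iff none of the restricted structures associated with pairs $(o,o')$ satisfying $o'\not\succ o$ is satisfiable. The same singleton-restriction argument and the same enumeration over variables then yield polynomial-time algorithms for the strict versions of (1)--(3). I expect the only genuinely delicate step to be the verification that singleton restriction preserves join-closedness, since that is precisely what licenses invoking \Cref{thm:join-closed-completeness-for-satisfiability} on each subproblem; the remainder is bookkeeping over polynomially many satisfiability calls.
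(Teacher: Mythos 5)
Your proof is correct, but it takes a genuinely different---and in one respect more careful---route than the paper's. The paper's proof adds the single ``non-improvement'' constraint $X\sim_{\Phi^C}X'$ (with $\Phi^C$ the set of all non-improving value pairs) to $(\mathcal X,\mathcal D,\mathcal A)$ and then invokes \Cref{thm:join-closed-completeness-for-satisfiability} together with \Cref{lemma:generic-inference-algorithm} to decide satisfiability of the augmented structure with a single path-consistency run per pair of variables (\Cref{alg:deciding-SPI-join-closedness}); it does not address the obstacle you flag, namely that $\Phi^C$ need not be join-closed, so \Cref{thm:join-closed-completeness-for-satisfiability} does not literally apply to the augmented structure---refutation completeness would have to be extended to ``join-closed plus one arbitrary constraint,'' an argument the paper leaves implicit. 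Your decomposition into at most $|D_X|\cdot|D_{X'}|$ doubly-pinned subproblems sidesteps this entirely, and your verification that singleton restriction preserves join-closedness is exactly right: rows of the original (join-closed) constraints through the pinned value remain closed, constraints not touching $X,X'$ are unchanged, and any restricted $X$--$X'$ constraint is a singleton or empty, hence trivially closed. Thus each subproblem is squarely within the scope of \Cref{thm:join-closed-completeness-for-satisfiability}, the equivalence ``SI fails iff some bad pair is realizable iff some pinned structure is satisfiable'' is immediate, and your treatment of the strict case and of unsatisfiable inputs is also correct. As for what each approach buys: the paper's is more economical (one propagation run per variable pair, versus a factor $|D_X|\cdot|D_{X'}|$ more runs in your scheme), but yours is fully rigorous as stated; indeed, in light of \Cref{prop:incompleteness-inference-join-semi-lattice}---which shows that path consistency on a join-closed structure can leave non-realizable value pairs undetected---the paper's one-shot variant genuinely needs a supplementary argument (e.g., that intersecting in $\Phi^C$ and propagating simulates the pinned runs row by row, which is clear when the rows of $\Phi^C$ are singletons but not in general), so your pair-enumeration is the safer proof of the theorem as stated.
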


\removefortark{
\begin{proof}
Consider the first of the problems. Take any two variables $X,X'$. Now consider the ``non-improvement'' OC $\Phi\colon D_X \rightarrow D_{X'}\colon x \mapsto \{ x'\in D_{X'} \mid x \not\succeq x' \}$, which maps $x\in X$ onto all values of $x'$ that are not better than $x$. 
Then by \Cref{thm:join-closed-completeness-for-satisfiability,lemma:generic-inference-algorithm} (the fact that inference as per \Cref{lemma:basic-results} is ``refutation complete'' and the fact that inference as per \Cref{lemma:basic-results} can be done in polynomial time) we can decide in polynomial time whether there is a satisfying assignment of $(\mathcal X, \mathcal D, \mathcal A \cup \{ X\sim_\Phi X' \})$. We can thus decide the given problem by iterating over all $X,X'$. The algorithm is given in \Cref{alg:deciding-SPI-join-closedness}.

\begin{algorithm}
\caption{A polynomial-time algorithm for deciding whether a given binary-constraint structure contains any SIs}
\label{alg:deciding-SPI-join-closedness}
\begin{algorithmic}[1]
\item[] \textbf{Input:} BCS $(\mathcal X, \mathcal D, \mathcal A)$
\item[] \textbf{Output:} TRUE if there is an SI in $(\mathcal X, \mathcal D, \mathcal A)$; FALSE otherwise.
\For{$X$ in $\mathcal{X}$}
    \For{$X'$ in $\mathcal{X}-\{X\}$}
        \State Let $D_X$, $D_{X'}$ be the domains of $X$, $X'$, resp.
        \State Let $\Phi\colon D_X \rightarrow D_{X'}\colon x \mapsto \{ x'\in D_{X'} \mid x \not\succeq x' \}$
        \State Let $\mathcal{A'}\leftarrow \mathcal A \cup \{X\sim_{\Phi} X'\}$
            \State Perform inference as per \Cref{lemma:basic-results}
            \If{no empty OC can be inferred}
                \State Conclude that $X'$ is not an SPI on $X$.
                \State \textbf{continue} with next $X'$
            \EndIf
        \State \Return TRUE -- $X'$ is an SI on $X$
    \EndFor
\EndFor
\State \Return FALSE -- there are no SIs
\end{algorithmic}
\end{algorithm}

The other two problems can be straightforwardly solved by adapting the above algorithm.
\end{proof}
}

\co{TODOlater: say something here about more fine-grained complexity considerations.

``Although the algorithm runs in polynomial time, the need to run repeated inference as per \Cref{lemma:basic-results} multiple times from scratch means that some of the variants of the problem are computationally much more costly than in the case of max-closedness.''}

\section{Relation between the formalism of Oesterheld et al.\ \cite{oesterheld2022safe} and ours}
\label{appendix:high-level-formalism-comparison-to-earlier}

\removefortark{
\begin{table*}
    \centering
    \begin{tabular}{c|c}
        Present paper & Oesterheld et al.\ \cite{oesterheld2022safe} \\
        \hline
        \hline
        principals & original players\\
        agents & representatives\\
        $\mathcal G$ / $\mathcal X$ & set of (unilateral) subset games of default $\Gamma$\\
        preference relation $\succeq$ & Pareto-ordering w.r.t.\ $u_1,...,u_n$
    \end{tabular}
    \caption{A comparison between the terminology of Oesterheld et al. \cite{oesterheld2022safe} and ours}
    \label{table:comparison-to-original-paper}
\end{table*}
}

We here relate our formalism to the formalism of Oesterheld et al.\ \cite{oesterheld2022safe} in more detail. \removefortark{We provide a summary of how the different concepts relate in \Cref{table:comparison-to-original-paper}.}
As described in the main text, the most important differences between the respective setups are the following. We allow arbitrary sets of variables $\mathcal{X}$ or games $\mathcal{G}$ to choose from. In contrast, Oesterheld et al.\ \cite{oesterheld2022safe} consider specific sets $\mathcal{G}$, namely for arbitrary given games $\Gamma$, the set of ``subset games'' on $\Gamma$ (games that arise from removing strategies in $\Gamma$, and assigning new utilities to outcomes). Similarly, we allow making general outcome correspondence assumptions $\mathcal{A}$, whereas the prior paper only considers specific outcome correspondence assumptions (roughly \Cref{assumption:dominance,assumption:isomorphism}). Finally, we consider safe improvements w.r.t.\ arbitrary preferences $\succeq$ over outcomes, whereas Oesterheld et al.\ \cite{oesterheld2022safe} consider safe improvement w.r.t.\ specific preferences, namely the principals' Pareto preferences over outcomes.


The last difference is a little subtle and may confuse comparisons between the two papers. In particular, in the introduction we have considered a safe Pareto improvement where $\succeq$ is defined by the Pareto order of the utilities in the underlying games. That is, if we have two games $(A_1^1,A_2^1,u_1^1,u_2^1)$ and $(A_1^2,A_2^2,u_1^1,u_2^2)$, and we let $(a_1,a_2) \succeq (a_1',a_2')$ if and only if $u_i^1(a_1^1,a_2^2)\geq u_i^2(a_1^2,a_2^2)$ for $i=1,2$. In particular, note that $\succeq$ thus defined is thereby based on the agent's utility functions in each of the games in $\mathcal G$. In contrast, in Oesterheld et al.'s \cite{oesterheld2022safe} framework, the original players' (principals') Pareto preferences $\succeq$ typically aren't related to the utility functions that the representatives (agents) act on. Instead they are given by the original game, which is only possible because all games under consideration share the same set of possible outcomes (namely those of the original game). So, the simplest definition of an SPI in our setting for an arbitrary game $\mathcal G$ is different from the notion of SPI as per Oesterheld et al.\ \cite{oesterheld2022safe}.

For further illustration, we now analyze Oesterheld et al.'s \cite[Table 1]{oesterheld2022safe} central example (the Demand Game) in our formalism. 
We will use $A_1$ and $A_2$ to denote the action sets in the Demand Game, and let $u_1,u_2$ be the utility functions of the Demand Game. The set of games $\mathcal{G}$ is then
$\{(A_1^s,A_2^s,u_1^s,u_2^s) \mid 
A_1^s \subseteq A_1, A_2^s \subseteq A_2,
u_1^s,u_2^s \colon A_1^s \times A_2^s \rightarrow \mathbb{R}\}$,
the (uncountably) infinite set of games that have action sets that are subsets of $A_1$ and $A_2$ and that have arbitrary utility functions over outcomes. 

We now characterize the principals' preference relation $\succeq$ to match the relevant preference relation of Oesterheld et al.\ \cite{oesterheld2022safe}.
Note that the games in $\mathcal{G}$ all have outcomes in $A_1 \times A_2$. So our preference relation $\succeq$ is merely over $A_1 \times A_2$. Specifically, for all $a_1',a_1 \in A_1,a_2,a_2'\in A_2$ we let $(a_1',a_2') \succeq (a_1,a_2)$ if and only if $u_i(a_1',a_2') \geq u_i(a_1,a_2)$ for $i=1,2$. We can similarly define the strict relation $\succ$.

Note that for many games $(A_1',A_2',u_1',u_2') \in \mathcal{G}$, the preference relation $\succeq$ over $A_1'\times A_2'$ is essentially unrelated to the utility functions $u_1'$ and $u_2'$. In the context of Oesterheld et al.\ \cite{oesterheld2022safe} \cite[as well as, e.g.,][]{BaumannSG} this makes perfect sense. However, it is somewhat contrary to some of the examples in our paper in which we have often assumed the principals' preferences to align with those of the agents.

Now in this setting, we ask whether there is a (non-trivial, strict) safe improvement (in $\mathcal G$) w.r.t.\ $\succeq$ on $\Gamma$. The answer and the way it is derived are straightforwardly the same as in Oesterheld et al.\ \cite{oesterheld2022safe}.

\co{TODOlater: also mention consistency results. Because those are also only possible because of our framework.}

\co{TODOlater: could discuss the isomorphism assumption.}

\removefortark{
\section{NP-completeness of the binary constraint satisfaction problem}
\label{sec:NP-completeness-BCSP}


Given a BCS, it is NP-complete to decide whether it has a satisfying assignment. The first mention of this NP-completeness result is due to Mackworth \cite[][Sect.\ 7.4]{mackworth1977consistency}. That said, Karp's \cite{Karp1972} original 21 NP-complete problems include the graph coloring problem, which is a special case of binary CSP\removefortark{ (the special case in which all variables have the same domain and all constraints are pairwise inequality constraints)}.\removefortark{ Most textbooks on algorithms and complexity \citep[e.g.,][]{garey1979computers} don't cover binary constraint satisfaction but do cover graph coloring.} We give a more specific result in the following with a proof from the hardness of other problems.

\begin{restatable}[]{theorem}{thmBCSPNPcomplete}
\label{thm:CSP-NP-complete}
\textsc{Binary CSP} is NP-complete, even if we restrict the domains to have size at most three. If we restrict the domains to have size two, then the problem is solvable in polynomial time.
\end{restatable}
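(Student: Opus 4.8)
The plan is to establish the three claims separately: membership in NP, NP-hardness already under the restriction to domains of size at most three, and polynomial-time solvability when all domains have size two. Membership in NP is immediate. A satisfying assignment $(v_X)_{X\in\mathcal X}$ records one domain element per variable and thus has size polynomial in the input; verifying that $(v_X,v_Y)\in\Phi$ for every constraint $X\sim_{\Phi}Y\in\mathcal A$ takes time linear in $|\mathcal A|$ times the cost of a single membership test in a constraint relation. So the assignment itself is a polynomial-size certificate checkable in polynomial time.

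For NP-hardness with domains of size at most three, I would reduce from graph $3$-colorability, which is among Karp's original NP-complete problems. Given a graph $G=(V,E)$, I introduce one variable $X_v$ for each vertex $v\in V$, each with domain $\{1,2,3\}$ (the three colors). For each edge $\{u,v\}\in E$ I add the constraint $X_u\sim_{\Phi}X_v$, where $\Phi=\{(c,c')\in\{1,2,3\}^2 : c\neq c'\}$ is the inequality relation. This is computable in time polynomial in $|V|+|E|$, every domain has size exactly three, and an assignment satisfies the resulting BCS if and only if it is a proper $3$-coloring of $G$. Together with NP-membership, this shows binary CSP with domains of size at most three is NP-complete.

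For the domain-size-two case, I would reduce to $2$-SAT, which is solvable in polynomial time (e.g., via the strongly-connected-component algorithm on the implication graph). Variables with a singleton domain are forced and can be propagated away; for each variable $X$ with $|D_X|=2$, fix a bijection between $D_X$ and $\{\mathrm{true},\mathrm{false}\}$, so that ``$X=a$'' corresponds to a literal $\ell_X^a$ over a single Boolean. A constraint $X\sim_{\Phi}Y$ with $\Phi\subseteq D_X\times D_Y$ forbids at most $|D_X\times D_Y|=4$ pairs, and each forbidden pair $(a,b)$ is ruled out by the single $2$-clause $(\neg\ell_X^a\vee\neg\ell_Y^b)$. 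Conjoining all such clauses over all constraints yields a $2$-CNF formula whose satisfying assignments are exactly the satisfying assignments of the BCS, so the problem lies in P.

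None of the three steps is technically deep, and the only point requiring care is the domain-size-two encoding: one must check that every binary relation on two Boolean variables is expressible in $2$-CNF, which holds because any such relation is the conjunction of the (at most four) $2$-clauses excluding its complement. The structurally informative observation is that this bounded-clause encoding has no analogue at domain size three, which is precisely why three is the threshold at which the problem becomes NP-hard.
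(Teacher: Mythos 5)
Your proposal is correct and matches the paper's own proof essentially verbatim: the paper also establishes hardness by reduction from graph $3$-coloring (with inequality constraints over three-element domains) and handles the two-element case via the equivalence with $2$SAT, which is polynomial-time solvable. Your write-up simply spells out the routine details (NP-membership certificate, clause-by-clause encoding of forbidden pairs) that the paper leaves implicit, and all of those details check out.
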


\begin{proof}
    Hardness in the three-element domain case can be shown by reduction from $3$-coloring, which is known to be NP-complete \cite[Theorem~2.1]{Garey1976}. The case of two-element domains is equivalent to $2$SAT, which is solvable in polynomial time \cite{Krom1967}.
\end{proof}
}



\removefortark{
\section{Co-NP-completeness of inference on binary-constraint structures}
\label{appendix:co-NP-completeness-of-inference-on-BCS}

In this section, we show some elementary results about \textit{inference} on BCS. Since the problem of deciding the existence of SIs is one of inference, these results are similar to our results on the co-NP-completeness of inferring SIs. While the proofs of our results are completely separate (i.e., they don't use the results in this section), the results in this section are useful as simpler versions (with simpler proofs) of some of the results of this paper.
By inference on binary constraint structures, we specifically mean the following question: Given a binary constraint structure, decide whether a particular additional binary constraint follows from the given constraints. That is, do all satisfying assignments of a given constraint structure also satisfy some further constraint? Inference has been studied in the context of binary constraint \textit{satisfaction} problems, generally with the goal of narrowing the search space \cite[Sect.\ 3.1]{Dechter2003}.

It's easy to see that some forms of inference on binary-constraint structures is the complement of binary CSP and is therefore co-NP-complete. In the following we will show that even if we restrict attention to the problem of inferring new binary constraints on \textit{satisfiable} binary-constraint structures, inference is still co-NP-complete. To do so, we first prove a simple lemma.

\begin{lemma}\label{lemma:constrained-satisfiable-CSP-NP-hard}
The following problem is NP-complete. Given a \textit{satisfiable} BCS $(\mathcal X, \mathcal D, \mathcal A)$, a variable $X_i$ and a value $v_i\in D_i$, decide whether there is a satisfying assignment that assigns $v_i$ to $X_i$.
\end{lemma}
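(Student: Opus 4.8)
The plan is to prove NP-membership directly and then prove NP-hardness by reduction from (unconstrained) binary CSP, which is NP-complete. Membership is immediate: given a candidate assignment, we verify in polynomial time that it satisfies every constraint in $\mathcal A$ and that it assigns $v_i$ to $X_i$, so the problem lies in NP. The genuine difficulty is that our target is a \emph{promise} problem: every instance produced by the reduction must itself be satisfiable. Hence we cannot simply hand the input CSP over unchanged, since an arbitrary binary CSP instance may be unsatisfiable and would violate the promise.

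To respect the promise, I would introduce a \emph{switch} gadget. Starting from an arbitrary binary CSP instance $(\mathcal X, \mathcal D, \mathcal A)$, I would build a new BCS as follows. Add a fresh variable $S$ with domain $\{0,1\}$, and enlarge each domain $D_X$ by a fresh ``dummy'' value $\star_X$, giving $D_X' = D_X \cup \{\star_X\}$. For every original variable $X$, add a constraint $S \sim_{\Psi_X} X$ with $\Psi_X = \{(0,\star_X)\} \cup \{(1,v) : v \in D_X\}$; this forces $X = \star_X$ whenever $S = 0$ and forces $X$ to take an original value whenever $S = 1$. Finally, replace each original constraint $X \sim_{\Phi} Y$ by the augmented constraint $X \sim_{\Phi \cup \{(\star_X,\star_Y)\}} Y$, i.e., keep $\Phi$ on original values but additionally permit the all-dummy pairing.

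The two key observations would then be the following. First, the constructed BCS is always satisfiable, witnessed by the assignment $S = 0$ and $X = \star_X$ for all $X$: this satisfies every switch constraint (since $(0,\star_X) \in \Psi_X$) and every augmented original constraint (since $(\star_X,\star_Y)$ was added), regardless of whether the source CSP was satisfiable, so the promise holds. Second, any satisfying assignment with $S = 1$ must assign each variable an original value $v \in D_X$; on such values the pair cannot equal $(\star_X,\star_Y)$, so each augmented constraint collapses to the original $\Phi$, and conversely any satisfying assignment of the source CSP extended by $S = 1$ satisfies all constraints. Taking $X_i := S$ and $v_i := 1$, the constructed (always satisfiable) instance admits a satisfying assignment giving $X_i = v_i$ if and only if the source CSP is satisfiable. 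The whole construction is clearly polynomial-time, completing the reduction.

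The main obstacle, and essentially the only content beyond bookkeeping, is exactly the first observation: reconciling the satisfiability promise with encoding a potentially unsatisfiable problem. The switch/dummy gadget resolves this by making satisfiability available ``for free'' through the dummy branch ($S=0$) while routing all the hardness into whether the nontrivial branch ($S=1$) is reachable. A couple of edge cases warrant a sentence of care—for instance, an original variable with empty domain makes the source CSP vacuously unsatisfiable, and indeed $\Psi_X = \{(0,\star_X)\}$ then forbids $S=1$, so $v_i=1$ is correctly unassignable—but these are routine.
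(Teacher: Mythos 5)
Your proof is correct and takes essentially the same approach as the paper: the paper's own reduction likewise adds a switch variable $X_0$ with domain $\{0,1\}$, pads each original domain with a dummy value $0$, adds constraints $(X_0,X_i,\{(0,0)\}\cup(\{1\}\times D_i))$ forcing dummies exactly when the switch is $0$, augments each original constraint $A_{i,j}$ with the all-dummy pair $(0,0)$, and asks whether $X_0$ can be assigned $1$, with the all-dummy assignment witnessing the satisfiability promise. Your use of fresh per-variable dummies $\star_X$ instead of a shared value $0$ is a purely cosmetic (if marginally cleaner) difference.
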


\begin{proof}
Membership in NP is trivial. For hardness, we reduce from binary CSP. Take any binary constraint structure $(\mathcal X=\{X_1,...,X_n\},\mathcal D,\mathcal A)$. We construct a new binary CSP instance as follows: As our set of variables take $\mathcal X'= \{X_0\} \cup  \mathcal X $. As domains take $D_0=\{ 0, 1 \}$ and $ D_i'= D_i\cup \{ 0\}$ for $i>0$. Finally, our new set of constraints $\mathcal A'$ is constructed as follows. 
For each $i\in \{1,...,n\}$, our new constraint set $\mathcal A'$ contains $(X_0, X_i, \{ (0,0) \} \cup (\{1 \} \times D_i))$. Additionally, for each constraint $(X_i, X_j, A_{i,j})$ in $\mathcal A$, we add the constraint $(X_i, X_j, A_{i,j} \cup \{ (0,0) \})$. Then consider the problem of deciding for $(\mathcal X', \mathcal D', \mathcal A')$ whether there exists a satisfying assignment that assigns the value $1$ to $X_0$.

Note that $(\mathcal X', \mathcal D', \mathcal A')$ is satisfiable. Specifically, it is satisfied by assigning a value of 0 to each of the variables. It's easy to see that the set of satisfying assignments for $(\mathcal X', \mathcal D', \mathcal A')$ that assign $1$ to $X_0$ are exactly the satisfying assignments of $(\mathcal X, \mathcal D, \mathcal A)$ (appended with $(X_0, 0)$).
The construction is illustrated in \Cref{fig:lemma:constrained-satisfiable-CSP-NP-hard}.
\end{proof}

\begin{figure*}
    \centering
    \begin{tikzpicture}
        \def\SetRadiusA{1.5}
        \def\SetRadiusBC{2.5} 
        \def\SubsetRadius{1.5} 
        \def\SetSep{3.05} 
        \def\UpShift{0.8} 
    
        \coordinate (DiPrimeCenter) at ($(210:\SetSep)+(0,\UpShift)$);
        \coordinate (DjPrimeCenter) at ($(330:\SetSep)+(0,\UpShift)$);
    
        \draw[thick] (90:\SetSep) circle [radius=\SetRadiusA] node[left=1.5cm] {$D_0$};
        \draw[thick] (DiPrimeCenter) circle [radius=\SetRadiusBC] node[left=2.55cm] {$D_i'$};
        \draw[thick] (DjPrimeCenter) circle [radius=\SetRadiusBC] node[right=2.55cm] {$D_j'$};
    
        \coordinate (B1Center) at ($(DiPrimeCenter)-(0.5,0.5)$);  
        \coordinate (C1Center) at ($(DjPrimeCenter)+(0.5,-0.5)$);  
    
        \node[fill=black, circle, inner sep=2pt, label=below:$0$] (b6) at ($(DiPrimeCenter)+(1.3,1.0)$) {};  
        \node[fill=black, circle, inner sep=2pt, label=below:$0$] (c5) at ($(DjPrimeCenter)+(-1.3,1.0)$) {};  
    
        \draw[thick] (B1Center) circle [radius=\SubsetRadius] node[above=1.5cm] {$D_i$};
        \draw[thick] (C1Center) circle [radius=\SubsetRadius] node[above=1.5cm] {$D_j$};
    
        \node[fill=black, circle, inner sep=2pt] (b1) at ($(B1Center)+(0.3,0.6)$) {};
        \node[fill=black, circle, inner sep=2pt] (b2) at ($(B1Center)+(-0.3,0.5)$) {};
        \node[fill=black, circle, inner sep=2pt] (b3) at ($(B1Center)+(0.4,-0.3)$) {};
        \node[fill=black, circle, inner sep=2pt] (b4) at ($(B1Center)+(0,-0.8)$) {};
        \node[fill=black, circle, inner sep=2pt] (b5) at ($(B1Center)+(-0.4,-0.5)$) {};
    
        \node[fill=black, circle, inner sep=2pt] (c1) at ($(C1Center)+(0.3,0.6)$) {};
        \node[fill=black, circle, inner sep=2pt] (c2) at ($(C1Center)+(-0.3,0.4)$) {};
        \node[fill=black, circle, inner sep=2pt] (c3) at ($(C1Center)+(0.4,-0.5)$) {};
        \node[fill=black, circle, inner sep=2pt] (c4) at ($(C1Center)+(0,-0.8)$) {};
    
        \node[fill=black, circle, inner sep=2pt, label=left:$1$] (a1) at (90:\SetSep + 0.5) {};
        \node[fill=black, circle, inner sep=2pt, label=left:$0$] (a2) at (90:\SetSep - 0.5) {};
    
        \coordinate (EdgeB1) at ($(B1Center) + (60:\SubsetRadius)$); 
        \coordinate (EdgeC1) at ($(C1Center) + (120:\SubsetRadius)$); 
    
        \draw[thick] (a1) -- (EdgeB1); 
        \draw[thick] (a1) -- (EdgeC1); 
        \draw[thick] (a2) -- (b6);
        \draw[thick] (a2) -- (c5);
        \draw[thick] (b6) -- (c5);
        \draw[thick] (b4) -- (c4);
    
        \draw[thick] (b1) -- (c1);
        \draw[thick] (b2) -- (c3);
        \draw[thick] (b3) -- (c4);
        \draw[thick] (b4) -- (c2);
        \node at (0, 0.6) {$A_{i,j}$};
    \end{tikzpicture}
    \caption{Visualization of the construction in the proof of \Cref{lemma:constrained-satisfiable-CSP-NP-hard}}
    \label{fig:lemma:constrained-satisfiable-CSP-NP-hard}
\end{figure*}

\begin{restatable}{lemma}{constraintinferencecoNPcomplete}\label{lemma:constraint-inference-co-NP-complete}
The following problem is co-NP-complete. Given a binary-constraint structure, two variables $X_i,X_j$ and assignments $v_i$ and $v_j$ for these variables, decide whether all satisfying assignments satisfy the following: If variable $X_i$ is assigned the value $v_i$, then the variable $X_j$ is assigned the value $v_j$.
\end{restatable}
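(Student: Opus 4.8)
The plan is to establish membership in co-NP directly and co-NP-hardness by reduction from (the complement of) the problem shown NP-complete in \Cref{lemma:constrained-satisfiable-CSP-NP-hard}. For membership, I would observe that a ``no'' instance is witnessed succinctly: the implication fails precisely when there is a satisfying assignment $(v_Z)_{Z\in\mathcal X}$ with $v_{X_i} = v_i$ and $v_{X_j}\neq v_j$. Such an assignment is a polynomial-size certificate whose validity -- satisfaction of every constraint in $\mathcal A$ together with the two value conditions -- is checkable in polynomial time. Hence the complementary problem is in NP and the problem itself is in co-NP.

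For hardness, the key observation is that the problem of \Cref{lemma:constrained-satisfiable-CSP-NP-hard} -- deciding, for a satisfiable BCS, whether \emph{some} satisfying assignment assigns $v_i$ to $X_i$ -- is NP-complete, so its complement (deciding whether \emph{no} satisfying assignment assigns $v_i$ to $X_i$) is co-NP-complete. Given such an instance $((\mathcal X, \mathcal D, \mathcal A), X_i, v_i)$, I would build a new BCS $(\mathcal X\cup\{X_j\}, \mathcal D', \mathcal A)$ by adjoining a single fresh variable $X_j$ with domain $D_{X_j}=\{0,1\}$ and \emph{no} constraints involving it, and then ask about the value pair $v_i$ and $v_j=0$. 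Because $X_j$ is unconstrained, the satisfying assignments of the new BCS are exactly those of $(\mathcal X, \mathcal D, \mathcal A)$ extended by an arbitrary choice of $X_j\in\{0,1\}$. Consequently, if some satisfying assignment assigns $v_i$ to $X_i$, extending it with $X_j=1\neq v_j$ yields a satisfying assignment violating the implication (a ``no''); and if no satisfying assignment assigns $v_i$ to $X_i$, the implication ``if $X_i=v_i$ then $X_j=v_j$'' holds vacuously over all satisfying assignments (a ``yes''). Thus the implication holds iff no satisfying assignment of the original assigns $v_i$ to $X_i$, giving a polynomial-time many-one reduction and co-NP-hardness. Alternatively, one can reduce directly from the complement of binary CSP by adjoining two fresh unconstrained binary variables $X_i, X_j$ and asking whether ``$X_i=1$ implies $X_j=0$'', which holds iff the original BCS is unsatisfiable; this avoids any appeal to the promise in \Cref{lemma:constrained-satisfiable-CSP-NP-hard}.

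I expect the reduction itself to be routine; the only points requiring care are the bookkeeping around complementation and the promise. The decoupling variable $X_j$ must be genuinely unconstrained so that its value can always be chosen to falsify the consequent, and one must confirm that reducing from the \emph{complement} of an NP-complete (promise) problem legitimately yields co-NP-hardness of the target. Since the gadget preserves satisfiability of the instance produced by \Cref{lemma:constrained-satisfiable-CSP-NP-hard}, the construction additionally shows that the problem remains co-NP-hard even when the input BCS is promised to be satisfiable, matching the satisfiable-BCS setting used elsewhere in the paper.
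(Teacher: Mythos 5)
Your proof is correct, and while it shares the paper's overall strategy---co-NP-membership via the falsifying satisfying assignment as a polynomial-size witness, and co-NP-hardness by reduction from (the complement of) the problem of \Cref{lemma:constrained-satisfiable-CSP-NP-hard}---your gadget is genuinely different and simpler. The paper adjoins \emph{two} fresh variables: $X_0$ with singleton domain $\{0\}$, serving as a vacuously true antecedent, and an indicator variable $X_{n+1}$ with domain $\{0,1\}$ tied to $X_i$ by the added constraint $\{(v_i,1)\}\cup((D_i-\{v_i\})\times\{0\})$; it then asks whether some satisfying assignment has $X_0=0$ and $X_{n+1}\neq 0$, so the NP-hard content is routed through the \emph{consequent}. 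You instead keep the original pair $(X_i,v_i)$ as the antecedent and adjoin a single \emph{unconstrained} binary variable $X_j$ as the consequent, whose freedom guarantees the implication fails exactly when some satisfying assignment attains $X_i=v_i$; this saves a variable and a constraint and makes the correspondence between satisfying assignments of the old and new structures immediate. Your bookkeeping on complementation (mapping the NP-complete source problem to the complement of the target, equivalently reducing the co-NP-complete complement to the target) is sound and matches the paper's. Both constructions preserve satisfiability, so both establish hardness in the satisfiable-BCS setting the appendix works in---a point the paper leaves somewhat implicit and you make explicit. Your second, promise-free reduction from the complement of binary CSP via two unconstrained binary variables is a small additional contribution: it proves the lemma exactly as literally stated (with no satisfiability promise) at the cost of producing possibly-unsatisfiable instances, whereas your first reduction (like the paper's) yields the stronger promise version.
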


\begin{proof}
To prove that the problem is co-NP-complete, we need to show that its complement is NP-complete. The complement of the problem in the lemma is the following problem: Given a BCS, two variables $X_i,X_j$ and assignments $v_i$ and $v_j$ for these variables, decide whether there exists a satisfying assignment of the BCS s.t.\ $X_i$ is assigned the value $v_i$ but $X_j$ is assigned a value other than $v_j$. NP membership of this problem is trivial. To prove NP-hardness we reduce from the problem in \Cref{lemma:constrained-satisfiable-CSP-NP-hard}. So consider an instance of the problem in \Cref{lemma:constrained-satisfiable-CSP-NP-hard} as characterized by binary constraint structure $(\mathcal X, \mathcal D, \mathcal A)$ and a variable $X_i$ with value $v_i$. Then construct an instance of the above problem as follows.

We consider the binary constraint structure $(\mathcal X', \mathcal D', \mathcal A')$, where $\mathcal X' = \mathcal X \cup \{ X_{0}, X_{n+1}\}$, $D_i'=D_i$ for $i=1,...,n$, $D'_{0}=\{0\}$ and $D'_{n+1}=\{0, 1\}$, and
$\mathcal A'= \mathcal A\cup \{(X_i,X_{n+1}, \{ (v_i,1) \} \cup  ((D_i - \{ v_i \}) \times \{ 0 \}))\}$.
Intuitively, $\mathcal A'$ is $\mathcal A$ plus the constraint: If $X_i$ is assigned the value $v_i$, then $X_{n+1}$ is assigned the value $1$; else $X_{n+1}$ is assigned the value $0$.

Note that 
if we take a satisfying assignment of $(\mathcal X,\mathcal D, \mathcal A)$ and assign $X_0$ the value $0$ and $X_{n+1}$ the value $1$ if we assigned $v_i$ to $X_i$ and $0$ otherwise, then this will satisfy $(\mathcal X', \mathcal D', \mathcal A')$. Conversely, if we take a satisfying assignment of $(\mathcal X', \mathcal D', \mathcal A')$ and we drop the values for $X_0$ and $X_{n+1}$, then we obtain a satisfying assignment of $(\mathcal X,\mathcal D, \mathcal A)$.
It follows that if, as we have required, $(\mathcal X,\mathcal D, \mathcal A)$ is satisfiable, then so is $(\mathcal X', \mathcal D', \mathcal A')$.

On this new constraint structure we consider the question of whether there exists a satisfying assignment s.t.\ $X_0$ is assigned the value $0$ and $X_{n+1}$ is assigned a value other than $0$. Since $0$ is the only possible value of $X_0$ and $X_{n+1}$ only has two possible values ($0$ and $1$), this is equivalent to the question of whether $(\mathcal X', \mathcal D', \mathcal A')$ has a satisfying assignment in which $X_{n+1}$ is assigned the value $1$. By the added constraint ($\mathcal A' - \mathcal A$), this is equivalent to asking whether $(\mathcal X', \mathcal D', \mathcal A')$ has a satisfying assignment in which the value $v_i$ is assigned to the variable $X_i$. By the above correspondence between the satisfying assignments of $(\mathcal X,\mathcal D, \mathcal A)$ and $(\mathcal X', \mathcal D', \mathcal A')$, this is equivalent to the question of whether $(\mathcal X,\mathcal D, \mathcal A)$ has a satisfying assignment in which the value $v_i$ is assigned to the variable $X_i$. 
\end{proof}
}

\section{Incompleteness of inference as per \Cref{lemma:basic-results}}

From the co-NP-completeness it follows that unless P $=$ NP, inference rules like the above cannot be complete. That is, if P$\neq$NP, then repeatedly applying inference rules like the above will in some cases not be able to infer constraints that do follow from the given ones. We spell this reasoning out in more detail for inference on outcome correspondence in \Cref{sec:generalized-incompleteness}. We here give Montanari's \cite{montanari1974networks} example of a binary constraint structure for which the above rules are incomplete.

\begin{proposition}[\cite{montanari1974networks}] 
\label{prop:Montanari-incompleteness-example}
There exists a BCS $(\mathcal X, \mathcal D, \mathcal A)$ with $|\mathcal X|=4$ and $|\mathcal D_i|=4$ for $i=1,...,4$ and a constraint $(X_j,X_i,A)$ such that every satisfying assignment of $(X,D,C)$ also satisfies $(X_j,X_i,A)$ but $(X_j,X_i,A)$ cannot be inferred from $(X,D,C)$ by repeated application of \Cref{lemma:basic-results}.
\end{proposition}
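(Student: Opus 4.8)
The plan is to reduce the statement to the (essentially known) fact that the rules of \Cref{lemma:basic-results} compute exactly the \emph{path-consistent closure} of a network, together with the fact that path consistency is strictly weaker than global consistency once there are four variables. First I would observe that repeated application of the rules terminates: the transitivity and intersection rules can only replace a constraint $\Phi_{ij}$ between $X_i$ and $X_j$ by a subset of $D_{X_i}\times D_{X_j}$, while reflexivity, symmetry and $\mathrm{all}$ are purely structural, and the domains are finite; hence iteration reaches a unique $\subseteq$-minimal fixpoint $\mathcal A^\ast$. Concretely, $\mathcal A^\ast$ is characterized by $(\Phi_{kj}\circ\Phi_{ik})\cap\Phi_{ij}=\Phi_{ij}$ for every ordered triple $(X_i,X_k,X_j)$, i.e.\ composition through any middle variable removes nothing. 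The proposition therefore follows if I can exhibit a four-variable, domain-four BCS whose closure $\mathcal A^\ast$ still permits, on some edge $(X_j,X_i)$, a value pair $(a,b)$ that extends to no global satisfying assignment: the true minimal constraint $A=\{(v_{X_j},v_{X_i})\mid (v_Z)_{Z\in\mathcal X}\text{ is a satisfying assignment}\}$ on that edge then omits $(a,b)$, so every satisfying assignment satisfies $X_j\sim_A X_i$, yet the rules, being already at the fixpoint whose edge relation contains $(a,b)$, can never derive it.

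Next I would exhibit the explicit witness, taking Montanari's \cite{montanari1974networks} network: four variables $X_1,\dots,X_4$ with $D_i=\{1,2,3,4\}$ and the six binary constraints written out as explicit subsets $\Phi_{ij}\subseteq D_i\times D_j$. The design principle I would follow (and which Montanari's example instantiates) is to make every \emph{triangle} consistent -- so that the local, rule-based inference detects no obstruction -- while planting a \emph{global} obstruction that manifests only when all four variables are assigned simultaneously. This is precisely the regime in which $3$-consistency (path consistency) fails to imply global consistency, and $|\mathcal X|=4$ with four-element domains is a minimal such witness.

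Then I would verify the two required properties. Property (i), that the given network already equals its closure $\mathcal A^\ast$, is the main obstacle: it amounts to checking, for each ordered triple $(X_i,X_k,X_j)$ and each $(a,b)\in\Phi_{ij}$, that some $c\in D_{X_k}$ has $(a,c)\in\Phi_{ik}$ and $(c,b)\in\Phi_{kj}$, so that $\Phi_{ij}\subseteq\Phi_{kj}\circ\Phi_{ik}$; the intersection and symmetry/reflexivity rules then add nothing. This is a finite but tedious case analysis, and I would cut it down by exploiting whatever permutation symmetry the construction enjoys (symmetry among the variables or among the values), reducing the distinct triples to a handful. Property (ii), non-minimality, I would establish by enumerating the (few) global solutions, reading off the pairs they induce on the target edge $(X_j,X_i)$, and confirming that the surviving pair $(a,b)$ never occurs among them.

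Finally I would conclude: the minimal constraint $A$ on the target edge is satisfied by every satisfying assignment but is a strict subset of the fixpoint relation on that edge in $\mathcal A^\ast$, so $(X_j,X_i,A)$ cannot be produced by any repeated application of \Cref{lemma:basic-results}. The conceptual takeaway I would state explicitly is that these rules coincide with the path-consistent closure, and that path consistency is provably incomplete for inferring minimal constraints -- here the obstacle is entirely the finite verification that the hand-crafted network is a fixpoint, whereas the idea (path consistency $\neq$ global consistency at four variables) is what makes such a network exist at all.
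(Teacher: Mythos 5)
Your overall strategy---exhibit a four-variable network that is already a fixpoint of the rules of \Cref{lemma:basic-results} (i.e., path-consistent, $(\Phi_{kj}\circ\Phi_{ik})\cap\Phi_{ij}=\Phi_{ij}$ for all triples) yet permits on some edge a pair extending to no global solution---is sound in principle, and your identification of the rule closure with the path-consistent fixpoint is correct. But as written there is a genuine hole: the witness is never exhibited. You announce ``Montanari's network'' with domains $\{1,2,3,4\}$ and six constraints ``written out as explicit subsets,'' yet the subsets never appear, and the two verifications carrying all the content---that the network is a fixpoint of the rules, and that some surviving pair occurs in no global solution---are deferred as ``a finite but tedious case analysis.'' Since the proposition is an existence claim whose entire substance is the example, a proof that names the example without specifying it or checking its properties does not establish the statement.

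The paper's proof takes a different and far simpler route that sidesteps exactly what you flagged as the main obstacle: it uses the \emph{unsatisfiable} instance given by $3$-coloring the complete graph $K_4$, i.e., $\mathcal X=\{X_1,\dots,X_4\}$, each domain $\{1,2,3\}$, and the disequality constraint on every pair. (Note the paper's proof uses three-element domains, despite the $|\mathcal D_i|=4$ in the statement---an apparent typo there.) Unsatisfiability is immediate (four pairwise-distinct values cannot be drawn from a three-element set), so every satisfying assignment \emph{vacuously} satisfies the empty constraint $(X_1,X_2,\emptyset)$. Yet the rules make no progress: for any $a,b\in\{1,2,3\}$, equal or not, some $c$ satisfies $c\neq a$ and $c\neq b$, so the composition of two disequality relations is the full relation, intersection returns the original constraints, and the given network is already the fixpoint---in particular the empty constraint is never derived. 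Your plan implicitly excluded this route: by proposing to ``enumerate the (few) global solutions'' you presupposed satisfiability, whereas the proposition's universally quantified conclusion is vacuously available on unsatisfiable instances, and exploiting that vacuity is what reduces the paper's verification to two lines. To repair your proposal, either actually write down the satisfiable path-consistent network and perform the triangle checks (which would in fact give a somewhat stronger, non-vacuous witness than the paper's), or substitute the $K_4$ coloring instance, which completes the argument immediately.
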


\begin{proof}
\begin{sloppypar}
Consider the problem of 3-coloring a complete graph with four vertices. (That is, $X=\{X_1,X_2,X_3,X_4\}$, $D=\{1, 2, 3\}$, $C=\{ (X_i,X_j,\{(v_i,v_j)\in \{1,2,3 \}\times \{1,2,3 \} \mid v_i \neq v_j \}$.) This is not satisfiable. So, in particular, every satisfying of this also satisfies, say, $(X_1,X_2,\emptyset)$. However, no inferences can be made using the transitivity or intersection rule.
\end{sloppypar}
\end{proof}

%
%

\section{On the complexity of inference as per \Cref{lemma:basic-results} and similar sets of rules}
\label{appendix:inference-complexity-by-rules}

\begin{lemma}\label{lemma:generic-inference-algorithm}
    Let $\mathrm{INFERENCE}$ be a function that takes a set of all outcome correspondences between $k$ games as input and outputs new outcome correspondences that are subsets of the old ones. Let $\mathrm{INFERENCE}$ run in time $\leq f(m)$, where $m$ is the maximum number of outcomes of each game. Then iterated application of $\mathrm{INFERENCE}$ to a fixed point (see \Cref{alg:repeated-inference}) can be done in time at most $m^2n^{k+2} f(m)$, where $n$ is the number of games in the input.
\end{lemma}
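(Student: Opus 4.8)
The plan is to bound the running time of the fixed-point iteration (\Cref{alg:repeated-inference}) by separately bounding (i) the number of times we sweep over all $k$-tuples of games before reaching a fixed point, and (ii) the cost of a single such sweep. Multiplying these two quantities, together with the assumption that each invocation of $\mathrm{INFERENCE}$ costs at most $f(m)$, will yield the claimed bound.

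For (i), I would use a \emph{monovariant} (potential) argument exploiting the fact that $\mathrm{INFERENCE}$ only ever \emph{shrinks} outcome correspondences. By the symmetry rule (\Cref{lemma:basic-results}.\ref{prop1-item-symmetry}) it suffices to store one outcome correspondence per unordered pair of games, so at most $\binom{n}{2}\le \tfrac12 n^2$ of them; and each is a subset of a product of two domains and hence has at most $m^2$ elements. Thus the total size $\Phi \coloneqq \sum_{\{X,Y\}}|\Phi_{X,Y}|$ of all currently stored correspondences is a non-negative integer bounded initially by $\tfrac12 n^2 m^2$. Since $\mathrm{INFERENCE}$ outputs only subsets of its inputs, each update can only decrease $\Phi$, and a sweep that changes \emph{any} correspondence decreases $\Phi$ by at least $1$. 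The iteration halts as soon as a full sweep leaves every correspondence unchanged. Hence there are at most $\tfrac12 n^2 m^2$ sweeps that make a change, plus one final sweep certifying the fixed point, for a total of at most $\tfrac12 n^2 m^2 + 1$ sweeps.

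For (ii), a single sweep applies $\mathrm{INFERENCE}$ once to each $k$-tuple of games, of which there are at most $n^k$. Each application costs at most $f(m)$ by assumption. The only extra bookkeeping per tuple is to intersect the returned subsets with the stored correspondences and test whether anything changed, touching at most $\binom{k}{2}$ pairs of size $\le m^2$; since merely writing the output of $\mathrm{INFERENCE}$ already takes time $\Omega(m^2)$, this bookkeeping is absorbed into $f(m)$. Hence a sweep costs at most $n^k f(m)$, and the total running time is at most $\left(\tfrac12 n^2 m^2 + 1\right) n^k f(m) \le m^2 n^{k+2} f(m)$, where the factor $\tfrac12$ leaves enough slack to absorb the final certifying sweep.

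The main obstacle is the termination-and-progress accounting in step (i): one must be sure both that the iteration halts and that the number of \emph{productive} sweeps is controlled. This is exactly what the monotonicity of $\mathrm{INFERENCE}$ (outputs are subsets of inputs) buys us — it turns the total correspondence size into a strictly decreasing, non-negative integer potential, so the number of effective iterations cannot exceed its initial value $\tfrac12 n^2 m^2$. The remaining factors ($n^k$ tuples per sweep, $f(m)$ per invocation) are then routine.
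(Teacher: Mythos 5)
Your proposal is correct and follows essentially the same route as the paper's proof: a potential argument bounding the number of sweeps by the total number of outcome pairs across all stored correspondences (at most $n^2m^2$), multiplied by the number of $k$-subsets per sweep (the paper uses $\binom{n}{k}\le n^k$) and the cost $f(m)$ per invocation of $\mathrm{INFERENCE}$. Your only deviations are cosmetic refinements the paper glosses over -- using unordered pairs to get the factor $\tfrac12$ that absorbs the final certifying sweep, and explicitly arguing the per-tuple bookkeeping is absorbed into $f(m)$.
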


For the algorithm, we use the following notation: For any set $\mathcal A$ of assumptions about outcome correspondence and set of games $M$ let $\mathcal A_{|M}$ be the set of outcome correspondences $\Gamma\sim_{\Phi}\Gamma'$ in $\mathcal G$.

\begin{algorithm}
\caption{Algorithm for repeated rule-based inference}
\label{alg:repeated-inference}
\begin{algorithmic}[1]
\State $\mathcal{A}' \leftarrow \mathcal{A}$
\Repeat
    \State progress $\leftarrow$ false
    \For{each $M \subset \mathcal{G}$ of size $k$}
        \State $\mathcal{A}_M^{\mathrm{new}} \leftarrow \mathrm{INFERENCE}(M, \mathcal{A}'_{|M})$
        \If{$\mathcal{A}_M^{\mathrm{new}} \neq \mathcal{A}'_{|M}$}
            \State progress $\leftarrow$ true
            \State $\mathstrut  \mathcal{A}' \leftarrow (\mathcal{A}' - \mathcal{A}'_{\smash{|M}}) \cup \mathcal{A}_{\smash{M}}^{\mathrm{new}}$ 
        \EndIf
    \EndFor
\Until{not progress}
\State \Return $\mathcal{A}'$
\end{algorithmic}
\end{algorithm}

\begin{proof}
First we analyze the innermost block of \Cref{alg:repeated-inference}. By assumption, INFERENCE runs in $f(m)$ steps. 
Now consider the inner loop (over subsets of size $k$). There are $\binom{n}{k}$ such subsets, so at each iteration of the outer loop there are $\binom{n}{k}$ iterations of the inner loop.
Finally we must consider the outer loop. Note that at every iteration of the outer loop except for the last, at least one outcome correspondence is narrowed. Clearly, such narrowing cannot go on forever (in a discrete, finite space). From this we will obtain an upper bound on the number of iterations of the outer loop as follows. Each element of $\mathcal A$ (and $\mathcal A'$) can be viewed as a set of pairs of game outcomes. Thus, each element of $\mathcal A$ has at most $m^2$ elements. $\mathcal A$ itself meanwhile contains an outcome correspondence for each pair of games. So, $\mathcal A$ has $n^2$ elements. So overall $\mathcal{A}$'s elements contain at most $n^2m^2$ elements. Each iteration of the outer loop (except for the last) removes at least one of these elements. It follows that the outer loop has at most $n^2m^2$ iterations.
Putting this all together, we have shown that the above algorithm terminates in $n^2 m^2 \binom{n}{k} f(m)$ steps.
\end{proof}

    The inference rules in \Cref{lemma:basic-results} can be represented by a function $\mathrm{INFERENCE}$ that takes three games as input. Thus, repeated inference as per \Cref{lemma:basic-results} can be done in polynomial time using \Cref{alg:repeated-inference}.
\removefortark{It's easy enough to come up with other inference rules of this form. For instance, we could consider any subset of variables of size, say, $8$ and find -- e.g., by brute force -- all satisfying assignments of the BCS formed just by the $8$ variables. We could then project this list of satisfying assignments back down to pairwise constraints.}

\section{Incompleteness of inference as per \Cref{lemma:basic-results} and similar sets of rules}
\label{appendix:incompleteness-of-inference-as-per-basic-results}


\subsection{Incompleteness of the rules of \Cref{lemma:basic-results} for inferring SPIs}

In general, the rules of inference in \Cref{lemma:basic-results} are incomplete for inferring SPIs. 

%
%

\begin{restatable}{proposition}{propinferenceincompletegenericassumptions}\label{prop:inference-incomplete-generic-assumptions}
There exists a set of games $\mathcal G$, a set $\mathcal A$ of assumptions about OC on $\mathcal G$ and an OC $\Gamma \sim_{\Phi} \Gamma'$ between $\Gamma,\Gamma'\in \mathcal G$ s.t.:
$\mathcal A$ is satisfiable;
$\mathcal A$ includes $\mathcal{A}_{*}(\mathcal G)$ (i.e., the assumptions resulting from the application of \Cref{assumption:dominance,assumption:isomorphism,assumption:pure-Nash
    });
$\Phi$ is Pareto-improving;
every satisfying assignment for $\mathcal A$ also satisfies $\Gamma\sim_{\Phi} \Gamma'$;
using the rules in \Cref{lemma:basic-results}, we cannot derive $\Gamma\sim_{\Phi} \Gamma'$ from $\mathcal A$.
\end{restatable}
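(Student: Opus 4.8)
The plan is to reuse the reduction from the complement of binary CSP that underlies \Cref{theorem:natural-co-NP-completeness} and feed it an unsatisfiable CSP instance on which rule-based inference provably makes no progress. The reduction encodes a CSP as a set of ``gadget'' games: each CSP variable becomes a game with Pareto-ranked strict pure equilibria on its diagonal, each domain value becomes one such equilibrium, and each binary CSP constraint becomes an OC in $\mathcal{A}_*(\mathcal G)$ generated by \Cref{assumption:dominance,assumption:isomorphism,assumption:pure-Nash}. In particular \Cref{assumption:pure-Nash} supplies a self-OC pinning each game's outcome to its diagonal, so that the resulting BCS is, for the purposes of the rules of \Cref{lemma:basic-results}, isomorphic to the underlying CSP together with a single separated ``non-improvement'' constraint $\Gamma\sim_{\Phi^C}\Gamma'$, where $\Phi^C = D_\Gamma\times D_{\Gamma'}-\Phi$ is the complement of the (Pareto-)improvement relation $\Phi$ furnished by the reduction. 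The point of this faithfulness is that any run of the rules on the game encoding mirrors a run on the CSP, and conversely.

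For the hard instance I would take Montanari's example from \Cref{prop:Montanari-incompleteness-example}: the $3$-colouring of the complete graph $K_4$. This CSP is unsatisfiable, yet path-consistent: for any three mutually adjacent variables, composing the two ``$\neq$'' constraints through the middle variable yields the full relation, whose intersection with the direct ``$\neq$'' constraint is again ``$\neq$''. Hence repeated application of transitivity and intersection (\Cref{lemma:basic-results}.\ref{prop1-item:transitivity},\ref{prop1-item:intersection}) never narrows any constraint and in particular never produces an empty OC. Removing any single edge leaves $K_4$ minus an edge, which is $3$-colourable; this is the edge the reduction designates as $\Phi^C$, so that the remaining instance --- and therefore $\mathcal A = \mathcal A_*(\mathcal G)$ --- is satisfiable, while $\Phi$ is the Pareto-improving relation.

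With these pieces the argument closes as follows. Since $K_4$ is not $3$-colourable, $\mathcal A\cup\{\Gamma\sim_{\Phi^C}\Gamma'\}$ is unsatisfiable, so every satisfying assignment of $\mathcal A$ places the outcomes of $\Gamma,\Gamma'$ in a pair of $\Phi$; that is, $\Gamma'$ is a (strict) SPI on $\Gamma$ with $\Phi$ Pareto-improving, giving the first four clauses. For the last clause, suppose for contradiction that the rules of \Cref{lemma:basic-results} could derive the SPI from $\mathcal A$, i.e.\ could infer some $\Gamma\sim_{\Xi}\Gamma'$ with $\Xi\subseteq\Phi$. As $\mathcal A\subseteq \mathcal A\cup\{\Gamma\sim_{\Phi^C}\Gamma'\}$, the same derivation is available in the full encoding; adjoining the given non-improvement constraint and applying the intersection rule (\Cref{lemma:basic-results}.\ref{prop1-item:intersection}) yields $\Gamma\sim_{\Xi\cap\Phi^C}\Gamma'=\Gamma\sim_\emptyset\Gamma'$, an empty OC, because $\Xi\cap\Phi^C\subseteq\Phi\cap\Phi^C=\emptyset$. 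By faithfulness this run transfers to a derivation of an empty constraint in the $K_4$ $3$-colouring CSP, contradicting its path-consistency. Hence no such derivation exists.

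The hard part will be the faithfulness claim, which is where the real work lies. I would need to verify that \Cref{assumption:dominance,assumption:isomorphism,assumption:pure-Nash} generate \emph{exactly} the intended OCs between the gadget games and nothing stronger --- in particular that the payoff matrices do not harbour incidental isomorphisms or dominated actions that the rules could exploit to make progress unavailable in the bare CSP, and that the pure-Nash self-OCs only restrict each game to its diagonal rather than enabling further narrowing. Concretely, this amounts to exhibiting a bijection on variables and values carrying each CSP constraint precisely onto its OC and checking that this bijection commutes with composition and intersection, so that path-consistency of $K_4$ implies path-consistency of its encoding. Much of this is already needed for the co-NP-hardness direction of \Cref{theorem:natural-co-NP-completeness}; the remaining task is to confirm it also respects the operations of \Cref{lemma:basic-results}.
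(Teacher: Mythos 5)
Your proposal takes essentially the same route as the paper's own (sketched) proof, which establishes \Cref{prop:inference-incomplete-generic-assumptions} precisely by taking Montanari's $3$-colouring-of-$K_4$ example (\Cref{prop:Montanari-incompleteness-example}) and embedding it via the gadget construction from the proof of \Cref{theorem:natural-co-NP-completeness} -- the path-consistency of the $K_4$ instance lifting to non-derivability, exactly as you argue. One correction to your write-up: the edge-encoding OCs are \emph{not} generated by \Cref{assumption:dominance,assumption:isomorphism,assumption:pure-Nash} (those assumptions yield only reduction, isomorphism, and Nash-diagonal self-OCs, never arbitrary inequality relations between non-isomorphic games), so your phrases ``becomes an OC in $\mathcal A_*(\mathcal G)$'' and ``$\mathcal A = \mathcal A_*(\mathcal G)$'' should instead read $\mathcal A = \mathcal A_*(\mathcal G) \cup \{\text{constraint-encoding OCs}\}$, which is what the proposition's hypothesis $\mathcal A \supseteq \mathcal A_*(\mathcal G)$ (and the reduction in \Cref{proof-of-theorem:natural-co-NP-completeness} that you invoke) actually provides.
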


This can be proved by using Montanari's \cite{montanari1974networks} example (\Cref{prop:Montanari-incompleteness-example}) and adapting it with constructions similar to those in the proof of \Cref{theorem:natural-co-NP-completeness} (see \Cref{proof-of-theorem:natural-co-NP-completeness})\removefortark{ and \Cref{lemma:constrained-satisfiable-CSP-NP-hard}}.

\co{TODOlater Probably spell this out, given that the paper isn't so packed anymore with technical contributions.}

\subsection{Generalizing incompleteness to other sets of rules}
\label{sec:generalized-incompleteness}

One might ask whether completeness could be achieved by adding rules similar to those in \Cref{lemma:basic-results}. Perhaps we have simply omitted some important rule (similar to the way that Oesterheld et al.\ \cite{oesterheld2022safe} omitted the intersection rule (\Cref{lemma:basic-results}.\ref{prop1-item:intersection}), which makes inference stronger).
Whether this is true depends, of course, on what we are willing to call an inference rule. For instance, we can achieve completeness trivially if we allow inference rules that take as input an arbitrary number of existing outcome correspondences $\mathcal A'$ and then determine by brute force which outcome correspondences are satisfied by all satisfying assignments of $\mathcal A'$. 
To obtain a non-trivial result, one must restrict the notion of inference rule. We here consider two restrictions. The first is to restrict attention to rules that take in a bounded-size set of OCs and output a new OC. The following result shows that no such inference rule exists.

\co{[TODOlater: I think this is pointed out in some CSP paper as well -- maybe Montanari?]}

\begin{restatable}[informal]{theorem}{inferenceincompleteunlessPeqNP}\label{thm:inference-incomplete-unless-P-eq-NP}
    There exists no complete set of inference rules that take a bounded number of outcome correspondences as input and infer a new valid outcome correspondence if there exists one.
\end{restatable}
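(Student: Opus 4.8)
The plan is to generalize Montanari's unsatisfiable $K_4$-with-$3$-coloring instance (\Cref{prop:Montanari-incompleteness-example}) to a family of instances that defeats \emph{any} rule set of bounded arity. First I would fix what is meant by a (sound) inference rule: a rule takes a bounded number $t$ of the currently-available OCs as input and outputs a new OC between two of the variables it sees, and it is \emph{sound} in the sense that whenever an assignment to the variables appearing in the input OCs satisfies all of them, it also satisfies the output OC. (All rules in \Cref{lemma:basic-results} are sound in exactly this local sense.) Because each input OC is between two variables, a single rule application inspects at most $r \coloneqq 2t$ variables.

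Against such a rule set I would use the following BCS $B$. Take $r+1$ variables $X_1,\dots,X_{r+1}$, each with domain $\{1,\dots,r\}$, and for every pair $i\neq j$ impose the all-different constraint $X_i \sim_{\neq} X_j$, where $\neq$ denotes $\{(u,v) : u\neq v\}$. This is the $r$-coloring instance of the complete graph $K_{r+1}$, whose chromatic number is $r+1 > r$, so $B$ is unsatisfiable; hence \emph{every} OC, in particular $X_1 \sim_{\emptyset} X_2$, is vacuously valid. The key step is then a ``no strengthening'' lemma: for any subset $S$ of variables with $|S|\le r$ and any two $X_i,X_j \in S$, the tightest constraint on $(X_i,X_j)$ implied by the sub-BCS induced on $S$ is exactly $\neq$. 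Necessity of $u\neq v$ is clear; for sufficiency, given any $u\neq v$ I extend the partial assignment $X_i=u$, $X_j=v$ to a proper coloring of $K_{|S|}$ by coloring the remaining $|S|-2 \le r-2$ vertices with distinct colors drawn from the $r-2$ colors other than $u,v$ --- enough colors are available. Thus every pair already carries its tightest implied constraint (Montanari's example is precisely the base case $r=3$).

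From the lemma the theorem follows quickly. Consider the first moment at which any rule produces a proper strengthening when \Cref{alg:repeated-inference}-style iteration is applied to $B$; at that moment all input OCs are still the original constraints of $B$, which lie among at most $r$ variables and are already tightest by the lemma, so by soundness the output cannot be a proper subset of an existing constraint --- a contradiction. Hence no rule ever fires, the iteration is stationary at $B$, and the valid OC $X_1 \sim_{\emptyset} X_2$ is never derived, so the rule set is incomplete. Since $r=2t$ depends only on the (bounded) arity, this construction defeats every bounded-arity rule set. I expect the main obstacle to be pinning down the soundness/locality notion so that it is both faithful to what one would reasonably call an ``inference rule'' and strong enough to forbid a bounded-arity rule from exploiting information it cannot see. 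As an alternative, non-combinatorial route one can argue conditionally: bounded-arity rules reach a fixed point in polynomial time (\Cref{lemma:generic-inference-algorithm}), so a complete such rule set would decide the co-NP-complete SI problems of \Cref{theorem:natural-co-NP-completeness} in polynomial time, which is impossible unless $\mathrm{P}=\mathrm{NP}$.
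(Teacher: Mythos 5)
Your proposal is correct and matches the paper's own proof sketch essentially exactly: the paper also generalizes Montanari's example to $n$ variables with domains of size $n-1$ and pairwise inequality constraints, observing that no subset of fewer than $n$ variables permits any further inference, so any rule set of bounded arity (inspecting at most $2t < n$ variables per application) is stuck while the empty OC remains vacuously valid. Your extension lemma and the fallback conditional argument via co-NP-completeness both also appear (the latter in the surrounding discussion in \Cref{sec:generalized-incompleteness}), so you have simply filled in details the paper leaves informal.
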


\begin{proof}[Proof sketch]
    We can consider generalized versions of the counterexample in the proof of \Cref{prop:Montanari-incompleteness-example} with $n$ variables with domains of size $n-1$ and pairwise inequality constraints. No subset of less than $n$ variables permits any further inferences.
\end{proof}

Another notion of an inference rule is that it runs in polynomial time to narrow down the OCs within \textit{any} set of variables (if possible). Again, we get that no such rule can exist, assuming P$\neq$NP, because otherwise we would obtain a polynomial-time algorithm (as in \Cref{lemma:generic-inference-algorithm}) for deciding the satisfiability of binary constraint structures. 

\removefortark{
\begin{proposition}
    Unless P $=$ NP, there exists no polynomial-time-computable function that takes as input a BCS $(\mathcal X, \mathcal D, \mathcal A)$ and outputs either some $(\mathcal X, \mathcal D, \mathcal A')$ where the OCs in $\mathcal A'$ are subsets of the OCs in $\mathcal A$ with strictness for some OC, or nothing if no such $(\mathcal X, \mathcal D, \mathcal A')$ exists.
\end{proposition}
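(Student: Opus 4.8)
The plan is to show that any polynomial-time function of the described kind — call it $\mathrm{REFINE}$ — would yield a polynomial-time decision procedure for the binary constraint satisfaction problem, which is NP-complete, and hence would force $\mathrm P=\mathrm{NP}$. Throughout I read the statement as requiring $\mathrm{REFINE}$ to be \emph{sound}: the returned $(\mathcal X,\mathcal D,\mathcal A')$ must be implied by the input (every satisfying assignment of $(\mathcal X,\mathcal D,\mathcal A)$ also satisfies $\mathcal A'$) and must be strictly tighter on at least one outcome correspondence, and $\mathrm{REFINE}$ returns ``nothing'' exactly when no such valid strict tightening exists. This is the only sensible reading for an inference rule: if unsound tightenings were permitted, a trivial poly-time $\mathrm{REFINE}$ (delete one pair from any nonempty constraint) would exist and the proposition would be false.

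First I would iterate $\mathrm{REFINE}$ to a fixpoint: start from the input BCS and repeatedly replace the current structure by the output of $\mathrm{REFINE}$ until it returns ``nothing.'' Each non-trivial call deletes at least one pair from some constraint, and the total number of pairs across all constraints is at most $|\mathcal A|\cdot m^2\le n^2 m^2$, where $n=|\mathcal X|$ and $m=\max_{X}|D_X|$. Hence there are at most polynomially many calls, each polynomial-time by assumption, so the whole loop runs in polynomial time. This is exactly the termination-counting argument already used for \Cref{lemma:generic-inference-algorithm}.

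Next I would characterize the fixpoint. Because every call is sound, the set of satisfying assignments is invariant along the iteration, so no pair that participates in a global solution is ever deleted. When $\mathrm{REFINE}$ finally returns ``nothing,'' no valid strict tightening remains, which forces every surviving pair $(x,y)$ in every constraint between $X$ and $Y$ to occur in \emph{some} satisfying assignment (otherwise deleting it would be a valid strict tightening). Thus the fixpoint is exactly the \emph{minimal network}: each constraint equals the set of pairs extendable to a full solution. This object is unique and independent of the choices $\mathrm{REFINE}$ makes along the way, which is the point that needs the most care.

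Finally, the minimal network makes satisfiability immediate: the original BCS is satisfiable if and only if no constraint of the fixpoint is empty. If some constraint is empty, no assignment can satisfy it; conversely, if all constraints are nonempty, any surviving pair extends to a solution. Emptiness can be checked in time linear in the representation, so the overall procedure decides binary CSP in polynomial time, giving $\mathrm P=\mathrm{NP}$. The main obstacle is precisely the fixpoint characterization of the previous paragraph: one must argue carefully both that soundness preserves all solutions (so solution pairs survive) and that ``no valid strict tightening exists'' forces every surviving pair into a solution, so that emptiness of a constraint becomes genuinely equivalent to unsatisfiability rather than merely a sufficient condition.
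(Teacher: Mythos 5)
Your proposal is correct and takes essentially the same route as the paper, whose proof simply iterates the hypothesized function to a fixed point via the termination-counting argument of \Cref{lemma:generic-inference-algorithm} to obtain a polynomial-time decision procedure for binary CSP satisfiability. The details you supply -- the soundness reading of the statement, the invariance of the solution set, and the characterization of the fixed point as the minimal network so that satisfiability is equivalent to all constraints being nonempty -- are exactly the steps the paper leaves implicit.
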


\begin{proof}
We show in the same way as in the proof of \Cref{lemma:generic-inference-algorithm} that such a polynomial-time-computable function would induce a polynomial-time algorithm for deciding satisfiability of BCS.
\end{proof}

Using our existing ideas, we can generalize this result to apply under various restrictions on the given $(\mathcal X, \mathcal D, \mathcal A)$ (satisfiability, inclusion of \Cref{assumption:dominance,assumption:pure-Nash
,assumption:isomorphism}).
}

\section{Proofs of results in the main text}

\subsection{Proof of \Cref{lemma:basic-results}}
\label{appendix:proof-of-lemma:basic-results}

\removefortark{
\outcomecorrespondenceinferencerules*
}

\begin{proof}
We here prove the intersection rule (\Cref{prop1-item:intersection}). For proofs of the other claims, refer to Oesterheld et al.\ \cite[][Lemma 2]{oesterheld2022safe}. By definition of $\sim$, $X \sim_{\Psi} Y$ means that if the agents play $(a_1,a_2)$ in $X$, they play an element of $\Psi(a_1,a_2)$ in $Y$. Analogously for $X \sim_{\Xi} Y$. Therefore, we have that if the agents play $(a_1,a_2)$ in $X$, they play an element of $\Psi(a_1,a_2)$ and an element of $\Xi(a_1,a_2)$ in $Y$. In other words, if the agents play $(a_1,a_2)$ in $X$, they play an element of $\Psi(a_1,a_2) \cap \Xi(a_1,a_2)$ in $Y$. Thus, $X\sim_{\Psi \cap \Xi} Y$.
\end{proof}

\subsection{Proof of \Cref{theorem:natural-co-NP-completeness}}
\label{proof-of-theorem:natural-co-NP-completeness}

We first prove co-NP-\textit{membership}.

\begin{proposition}\label{prop:unconstrained-SPI-inference-is-in-co-NP}
The following problems are all in co-NP:
Given a finite BCS $(\mathcal G, \mathcal D, \mathcal A)$, 
two games $\Gamma,\Gamma'\in \mathcal G$, a preference ordering $\succeq$ over the outcomes of the games in $\mathcal G$, decide
\begin{inparaenum}[1.]
    \item ... whether 
    $\Gamma'$ is an SI on $\Gamma$.
    \item 
    ... whether there exists a game $\Gamma'$ s.t.\ $\mathcal A$ implies that $\Gamma'$ is an SI on $\Gamma$.
    \item 
    ... whether there exist games $\Gamma,\Gamma'$ s.t.\ $\mathcal A$ implies that $\Gamma'$ is an SI on $\Gamma$.
\end{inparaenum}
This continues to hold if we require the SIs to be strict.
\end{proposition}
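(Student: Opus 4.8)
The plan is to establish co-NP membership by exhibiting, for each of the three problems, a polynomial-time nondeterministic algorithm for the \emph{complement}. Throughout, I would rely on the following routine observations about the finite input $(\mathcal G, \mathcal D, \mathcal A)$: a single assignment specifies one value $v_\Gamma \in D_\Gamma$ per game, and since $\mathcal D$ is listed explicitly, any assignment has size polynomial in the input and can be checked for satisfaction in polynomial time (each $\Phi \in \mathcal A$ is an explicitly given subset of $D_X \times D_Y$, so $(v_X, v_Y) \in \Phi$ is a table lookup). Likewise, since $\succeq$ is part of the input, the conditions $v_{\Gamma'} \not\succeq v_\Gamma$ and $v_{\Gamma'} \not\succ v_\Gamma$ (using $a \succ b \iff a \succeq b \text{ and } b \not\succeq a$) are decidable in polynomial time.

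For Problem 1, the complement asserts that there exists a satisfying assignment of $(\mathcal G, \mathcal D, \mathcal A)$ with $v_{\Gamma'} \not\succeq v_\Gamma$ (respectively $v_{\Gamma'} \not\succ v_\Gamma$ in the strict case). A single such assignment is a polynomial-size certificate, verified in polynomial time by the remarks above, so the complement is in NP and Problem 1 is in co-NP.

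The only genuine subtlety lies in Problems 2 and 3, where an existential quantifier over games precedes the co-NP-flavored SI condition, so a naive reading places these problems at the second level of the polynomial hierarchy ($\exists \Gamma' \,\forall\text{assignment}$). The key observation is that this quantifier ranges over a polynomially large set. The complement of Problem 2 asserts that \emph{every} $\Gamma' \in \mathcal G$ fails to be an SI on $\Gamma$, i.e., for each of the at most $|\mathcal G|$ candidates $\Gamma'$ there is a satisfying assignment witnessing $v_{\Gamma'} \not\succeq v_\Gamma$. I would take as the NP certificate the \emph{bundle} of these witnessing assignments, one per candidate $\Gamma'$; its total size is at most $|\mathcal G|$ times that of a single assignment, hence polynomial, and it is verified by checking each assignment independently. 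The same device handles Problem 3: its complement asserts that for every (ordered) pair $(\Gamma, \Gamma') \in \mathcal G \times \mathcal G$ there is a satisfying assignment with $v_{\Gamma'} \not\succeq v_\Gamma$, and the certificate is the family of at most $|\mathcal G|^2$ witnessing assignments, again polynomial in total. Thus the apparent $\Sigma_2^p$ structure collapses, and both complements lie in NP.

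Finally, the strict variant requires no new ideas: in every certificate above one replaces the violation condition $v_{\Gamma'} \not\succeq v_\Gamma$ by $v_{\Gamma'} \not\succ v_\Gamma$, which remains polynomial-time checkable from $\succeq$. I expect the main (and essentially only) point needing care to be this quantifier-counting argument for Problems 2 and 3 — namely, that bundling one witness per candidate keeps the certificate polynomial precisely because $\mathcal G$ is finite and explicitly given, so no appeal to higher levels of the hierarchy is needed.
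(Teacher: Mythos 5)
Your proposal is correct and follows essentially the same route as the paper: the paper likewise proves co-NP membership for Problem 1 by observing that the complement (a satisfying assignment with $v_{\Gamma'} \not\succeq v_\Gamma$) is in NP with the assignment as witness, and then states that Problems 2 and 3 and the strict variants are handled ``analogously.'' Your bundling of one witnessing assignment per candidate $\Gamma'$ (resp.\ per pair) simply spells out the detail the paper leaves implicit, correctly noting that the existential quantifier over the explicitly given, polynomial-size set $\mathcal G$ keeps the complement's certificate polynomial rather than pushing the problem up the hierarchy.
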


\begin{proof}
We here prove co-NP-membership of the first problem. All other claims are proved analogously. To prove membership in co-NP, we need to show membership in NP of the complement. The complement of the problem is as follows: Decide whether there exists an assignment of outcomes to games in $\mathcal G$ that satisfies $\mathcal A$ but that assigns outcomes $o$ to $\Gamma$ and $o'$ to $\Gamma'$ s.t.\ $o' \not\succeq o$. Clearly this problem is in NP using the satisfying assignment as a witness.
\end{proof}

\removefortark{
\naturalcoNPcompleteness*
}

\begin{proof}[Proof of \Cref{theorem:natural-co-NP-completeness}]
    First consider the case of a comparison between a single given pair of games. To show that this problem is co-NP-hard, we show that its complement is NP-hard. That is, we show that the following problem is NP-hard: Given $\mathcal G, \mathcal A, \succeq$ that satisfy the assumptions and given $\Gamma,\Gamma'\in \mathcal G$, decide whether $\Gamma'$ is \textit{not} an SPI on $\Gamma$, i.e., whether there is a satisfying assignment of $\mathcal G, \mathcal A$ s.t.\ $o'\not\succeq o$, where $o,o'$ are the outcomes assigned to $\Gamma,\Gamma'$, respectively. To prove this, we reduce from BCSP, which is NP-hard.

    So let $(\mathcal X=\{ X_1,...,X_n \}, \mathcal D=(D_i)_{i=1,...,n}, \mathcal A)$ be any BCS. Then construct the following: $\mathcal G=\{\Gamma,\Gamma',\Gamma_1,...,\Gamma_n\}$, where $\Gamma$ is as in \Cref{table:game-for-co-np-hardness-proof} and $\Gamma'$ is a game with a single outcome with utilities $3,2$. For $i=1,...,n$, construct $\Gamma_i$ as follows. Let $X_i$ have $k_i$ outcomes named $o_1^i,...,o_k^i$. Then $\Gamma_i$ is a $(k+1)$-by-$(k+1)$ game as per \Cref{fig:natural-NP-completeness-games} with $y_{1,1}>y_{2,1}>...>y_{k+1,1}$ and $y_{1,2}<y_{2,2}<...<y_{k+1,2}$. Further, ensure with the choice of the $y$ variables that there are no isomorphisms between any pair of games with more than one outcome. For any outcome correspondence $\Phi$ in $\mathcal A$ between games $\Gamma_i,\Gamma_j$, we add an outcome correspondence
    $\Phi' = \{((a_{k_i+1},a_{k_i+1}),(a_{k_j+1},a_{k_j+1}))\} \cup \{ ((a_{l_i},a_{l_i}),(a_{l_j},a_{l_j})) \mid (o_{l_i}^i,o_{l_j}^j)\in \Phi \}$.
    Further, for $i=1,...,n$ add the following outcome correspondence between $\Gamma$ and $\Gamma_i$:
    $\{ ((a_1,a_1),(a_{l_i},a_{l_i})) \mid l_i=1,...,k_i \} 
    \cup \{ ((a_2,a_2),(a_{k_i+1},a_{k_i+1})) \}$
    and the following outcome correspondence between $\Gamma'$ and $\Gamma_i$: $\{ ((a_1,a_1),(a_{l_i},a_{l_i})) \mid l_i=1,...,k_i+1 \}$.
    Finally we assume the following between $\Gamma,\Gamma'$: $\{((a_1,a_1),(a_1,a_1)), ((a_1,a_1),(a_2,a_2)) \}$.
    It's easy to see that the above satisfies all the given requirements.

    Now it is easy to see that $\Gamma'$ is \textit{not} SPI on $\Gamma$ if and only if $(\mathcal X, \mathcal D, \mathcal A)$ is satisfiable. To see this notice first that $\Gamma'$ is \textit{not} an SPI on $\Gamma$ if and only if $(a_1,a_1)$ can occur in $\Gamma$. But for $(a_1,a_1)$ to possible in $\Gamma$, there must be a satisfying assignment of the $\Gamma_i$ consisting only of outcomes $(a_l,a_l)$ with $l\in \{1,...,k_i\}$ (i.e., $l\neq k_i+1$). It's easy to see that such an assignment would induce a valid assignment for $(\mathcal X, \mathcal D, \mathcal A)$.

    This concludes our proof of the co-NP-hardness of the problem for given $\Gamma,\Gamma'$.

    We now describe how to extend the proof to also show the co-NP-hardness of the other two problems. First consider the problem of deciding whether \textit{any} game in $\mathcal G$ is an SPI on \textit{any} other game in $\mathcal G$. To prove the co-NP-completeness, we modify the construction to ensure that the only games that could possibly be in an SPI relation to each other are $\Gamma$ and $\Gamma'$. To do this, all we need to ensure is that the utilities of the outcomes $(a_{k_i+1},a_{k_i+1})$ in $\Gamma^i$ are Pareto-incompatible with each other and with $(2,1)$ and $(3,2)$. So for instance, these utilities could be $(3+\epsilon, 1-\epsilon), (3+2\epsilon, 1-2\epsilon),...$ and so on for some small $\epsilon$. After, all these outcomes form one valid assignment in our construction. Thus, if these outcomes of any two games are incomparable, then the games are incomparable in terms of the SPI relation.

    It is easy to show that the same construction also shows that it's co-NP-hard to decide whether there is an SPI on some given $\Gamma$.
\end{proof}

\begin{table}
	\begin{center}
    \setlength{\extrarowheight}{2pt}
    \begin{tabular}{cc|C{2cm}|C{2cm}|C{2cm}|}
      & \multicolumn{1}{c}{} & \multicolumn{2}{c}{Player 2}\\
      & \multicolumn{1}{c}{}  & \multicolumn{1}{c}{$a_1$} & \multicolumn{1}{c}{$a_2$} \\\cline{3-4}
      \multirow{2}*{Player 1} & $a_1$ & $4,0$ & $0,0$  \\\cline{3-4}
      & $a_2$ & $0,0$ & $2,1$  \\\cline{3-4}
    \end{tabular}
    \end{center}
    \caption{Game $\Gamma$ used in the proof of \Cref{theorem:natural-co-NP-completeness}
    }
    \label{table:game-for-co-np-hardness-proof}
    \end{table}
\co{TOOD: I don't like the game in \Cref{table:game-for-co-np-hardness-proof}. Do we really need this to be so degenerate?}


%
%

\subsection{Proof of \Cref{thm:completeness-under-max-closedness}}
\label{appendix:proof-of-thm:completeness-under-max-closedness}

\removefortark{
Recall the definition of max-closedness.

\defmaxclosedness*
}


\removefortark{
\completenessunderorderedwithoverlap*
}

We will use a couple of lemmas.

\begin{lemma}\label{lemma:max-closed-persistent}
If $\mathcal{A}$ is max-closed, then minimal $\Psi^{i,j}$ as per \Cref{lemma:basic-results} is max-closed.
\end{lemma}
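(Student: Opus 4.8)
The plan is to show that the fixed set of total orders $(\geq_X)_{X\in\mathcal X}$ witnessing max-closedness of $\mathcal{A}$ continues to witness max-closedness of every outcome correspondence obtainable by repeated application of \Cref{lemma:basic-results}. Since the minimal $\Psi^{i,j}$ are reached after finitely many rule applications, it suffices to prove that each \emph{single} application of one of the five rules preserves max-closedness with respect to these same orders, and then induct on the number of applications. Crucially, the orders $\geq_X$ never change during inference, so throughout I write $\max$ for the maximum taken in the relevant fixed order on each domain.

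First I would dispense with the three easy rules. Reflexivity (\Cref{lemma:basic-results}.\ref{prop1-item-reflexivity}) gives $\mathrm{id}_{D_X}$, which contains $(x,x)$ exactly, and is trivially max-closed since $(\max(x_1,x_2),\max(x_1,x_2))\in\mathrm{id}_{D_X}$. The full correspondence $\mathrm{all}_{D_X,D_Y}$ (\Cref{lemma:basic-results}.\ref{prop1-item-everything-related-via-all}) contains all pairs and so is vacuously max-closed. Symmetry (\Cref{lemma:basic-results}.\ref{prop1-item-symmetry}) is immediate: if $\Phi$ is max-closed and $(y_1,x_1),(y_2,x_2)\in\Phi^{-1}$, then $(x_1,y_1),(x_2,y_2)\in\Phi$, so $(\max(x_1,x_2),\max(y_1,y_2))\in\Phi$, i.e.\ $(\max(y_1,y_2),\max(x_1,x_2))\in\Phi^{-1}$.

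The two substantive cases are transitivity and intersection, which I would handle exactly as in the semilattice analogue \Cref{lemma:semi-lattice-max-closed-persistent}, replacing the abstract join $\sqcup$ by $\max$. For \emph{transitivity}, suppose $X\sim_\Phi Y$ and $Y\sim_\Psi Z$ are both max-closed, and let $(x_1,z_1),(x_2,z_2)\in\Psi\circ\Phi$. Then there are intermediate values $y_1,y_2\in D_Y$ with $(x_i,y_i)\in\Phi$ and $(y_i,z_i)\in\Psi$ for $i=1,2$. Max-closedness of $\Phi$ yields $(\max(x_1,x_2),\max(y_1,y_2))\in\Phi$, and max-closedness of $\Psi$ yields $(\max(y_1,y_2),\max(z_1,z_2))\in\Psi$; composing these gives $(\max(x_1,x_2),\max(z_1,z_2))\in\Psi\circ\Phi$. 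For \emph{intersection}, suppose $X\sim_\Phi Y$ and $X\sim_\Xi Y$ are both max-closed and $(x_1,y_1),(x_2,y_2)\in\Phi\cap\Xi$. Since both pairs lie in $\Phi$ and in $\Xi$ separately, max-closedness of each gives $(\max(x_1,x_2),\max(y_1,y_2))\in\Phi$ and $(\max(x_1,x_2),\max(y_1,y_2))\in\Xi$, hence the pair lies in $\Phi\cap\Xi$.

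The main obstacle, such as it is, lies in the transitivity step: it is the only rule where one must invoke the \emph{existence} of intermediate witnesses $y_1,y_2$ and then apply max-closedness twice, once in each factor. Everything else is routine. I would close by noting the induction: $\mathcal{A}$ is max-closed by hypothesis, each rule application preserves this property with respect to the same orders, and the minimal $\Psi^{i,j}$ arise from finitely many such applications, so they are max-closed. (Here one uses that because of the intersection rule there is a well-defined minimal inferable correspondence between each pair of variables, so ``the minimal $\Psi^{i,j}$'' is unambiguous.)
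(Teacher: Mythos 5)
Your proof is correct and matches the paper's intended argument exactly: the paper omits the proof of \Cref{lemma:max-closed-persistent}, noting it is analogous to \Cref{lemma:semi-lattice-max-closed-persistent}, and your treatment of transitivity (via intermediate witnesses) and intersection is precisely that argument with the join $\sqcup$ specialized to $\max$ under the fixed total orders. You in fact supply slightly more than the paper does -- the easy rules, the induction over finitely many rule applications, and the well-definedness of the minimal $\Psi^{i,j}$ via the intersection rule -- all of which is sound.
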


Since this proof is easy and we prove a similar result as \Cref{lemma:semi-lattice-max-closed-persistent}, we omit a proof here.

\begin{lemma}\label{lemma:upper-bounds-compatible}
    Let $\Psi$s satisfy max-closedness. Then let $a\in \Gamma_1$. Let $h_2$ be the max in $\Gamma_2$ that's compatible with $a$. Let $h_3$ be the max in $\Gamma_3$ that's compatible with $a$. Then $h_2$ and $h_3$ must be compatible with each other.
\end{lemma}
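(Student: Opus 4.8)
The plan is to read ``compatible with $a$'' through the minimal outcome correspondences $\Psi^{i,j}$ inferable by repeated application of \Cref{lemma:basic-results}: a value $v$ in $\Gamma_2$ is compatible with $a$ exactly when $(a,v)\in\Psi^{1,2}$, so that $h_2=\max_{\geq_{\Gamma_2}}\Psi^{1,2}(a)$ and $h_3=\max_{\geq_{\Gamma_3}}\Psi^{1,3}(a)$, where $\geq_{\Gamma_2},\geq_{\Gamma_3}$ are the total orders witnessing max-closedness. The goal is then to show $(h_2,h_3)\in\Psi^{2,3}$. Two facts drive the argument. First, by \Cref{lemma:max-closed-persistent} the minimal OCs $\Psi^{i,j}$ are themselves max-closed. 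Second, because the $\Psi^{i,j}$ form a fixed point of the inference rules, they are \emph{path consistent}: applying transitivity (and, where needed, symmetry) followed by intersection shows $\Psi^{1,3}\subseteq\Psi^{2,3}\circ\Psi^{1,2}$ and $\Psi^{1,2}\subseteq(\Psi^{2,3})^{-1}\circ\Psi^{1,3}$.

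Next I would extract two witness pairs from path consistency. Since $h_3\in\Psi^{1,3}(a)\subseteq(\Psi^{2,3}\circ\Psi^{1,2})(a)$, there is some $c_2\in\Psi^{1,2}(a)$ with $(c_2,h_3)\in\Psi^{2,3}$; maximality of $h_2$ forces $c_2\leq_{\Gamma_2}h_2$. Symmetrically, since $h_2\in\Psi^{1,2}(a)\subseteq\bigl((\Psi^{2,3})^{-1}\circ\Psi^{1,3}\bigr)(a)$, there is some $c_3\in\Psi^{1,3}(a)$ with $(h_2,c_3)\in\Psi^{2,3}$, and maximality of $h_3$ forces $c_3\leq_{\Gamma_3}h_3$.

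Finally I would close the argument by invoking max-closedness of $\Psi^{2,3}$ on the two pairs $(c_2,h_3)$ and $(h_2,c_3)$: this yields $(\max(c_2,h_2),\max(h_3,c_3))\in\Psi^{2,3}$, and since $c_2\leq_{\Gamma_2}h_2$ and $c_3\leq_{\Gamma_3}h_3$ the two maxima collapse to $h_2$ and $h_3$ respectively, giving $(h_2,h_3)\in\Psi^{2,3}$ as required. The main obstacle I anticipate is the bookkeeping around path consistency: one must verify that the relevant composition really contains the minimal \emph{direct} correspondence (using that the $\Psi^{i,j}$ are a closed fixed point, together with the intersection rule guaranteeing a \emph{unique} minimal OC between each pair), and that the direction of composition and of the inverse $(\Psi^{2,3})^{-1}=\Psi^{3,2}$ is tracked correctly in both witness extractions. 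Everything else is a direct application of the two lemmas.
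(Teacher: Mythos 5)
Your proposal is correct and follows essentially the same route as the paper's proof: the paper's ``otherwise we could exclude $h_2$ from $\Psi^{1,2}(a)$ via $\Gamma_3$'' is exactly your path-consistency witness extraction, and the final step is the same application of max-closedness of $\Psi^{2,3}$ to the two pairs $(c_2,h_3)$ and $(h_2,c_3)$. Your version merely makes explicit the bookkeeping (the containments $\Psi^{1,3}\subseteq\Psi^{2,3}\circ\Psi^{1,2}$ and $\Psi^{1,2}\subseteq(\Psi^{2,3})^{-1}\circ\Psi^{1,3}$ at the fixed point, and the appeal to \Cref{lemma:max-closed-persistent}) that the paper leaves implicit.
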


\begin{proof}
    $h_2$ must be compatible with something in $\Gamma_3$ that is compatible with $a$. (Otherwise, we could exclude $h_2$ from $\Psi^{1,2}(a)$ via $\Gamma_3$.) Therefore $h_2$ must be compatible with something in $\Gamma_3$ that's $\leq h_3$. Similarly, $h_3$ is compatible with something in $\Gamma_2$ that's $\leq h_2$. From max-closedness it follows that $h_2$ is compatible with $h_3$.
\end{proof}

\begin{proof}[Proof of \Cref{thm:completeness-under-max-closedness}]
Let $\Psi^{i,j}$ be the minimal inferred OC between games $G_i, G_j$. Then we need to show that whenever $b\in \Psi^{i,j}(a)$, there is in fact a satisfying assignment with $\Gamma_i\leftarrow a$ and $\Gamma_j\leftarrow b$. We prove this by construction. WLOG, let $i=1$ and $j=2$. 
For $k=3,4,...$, assign to $\Gamma_k$ the minimum of the max'es of the sets corresponding to $a$ and $b$ in $\Gamma_k$, i.e., the minimum of $h_a^k\coloneqq \max(\Psi^{1,k}(a))$ and $h_b^k\coloneqq \max(\Psi^{2,k}(b))$. We will need to show that this assignment satisfies all the constraints.

First, let's prove that these are all compatible with $a$ and $b$. Note that for this it is sufficient to show that $\Psi^{1,k}(a)$ and $\Psi^{2,k}(b)$ are not disjoint. This must clearly be true, because if they were disjoint, then we could use the rules of \Cref{lemma:basic-results} to infer that $a$ and $b$ are incompatible. In particular, $(\Psi^{2,k})^{-1}(\Psi^{1,k}(a))$ will not contain $b$.

Now, we need to show that for $k,l\in\{2,3,...\}$, $\min(h_a^k,h_b^k)$ and $\min(h_a^l,h_b^l)$ are compatible with each other. Note first that by \Cref{lemma:upper-bounds-compatible}, $h_a^k$ and $h_a^l$ are compatible with each other. Thus, if $\min(h_a^k,h_b^k)=h_a^k$ and $\min(h_a^l,h_b^l)=h_a^l$, it follows immediately $\min(h_a^k,h_b^k)$ and $\min(h_a^l,h_b^l)$ are compatible. The case of $\min(h_a^k,h_b^k)=h_b^k$ and $\min(h_a^l,h_b^l)=h_b^l$ is covered analogously.

We are left to cover the case where (WLOG) $\min(h_a^k,h_b^k)=h_a^k$ and $\min(h_a^l,h_b^l)=h_b^l$, i.e., $h_a^k\leq h_b^k$ and  $h_b^l \leq h_a^l$. With the above facts about compatibility, we can apply max-closedness to obtain that $h_b^k$ and $h_a^l$ are compatible. Thus $\min(h_a^k,h_b^k)$ and $\min(h_a^l,h_b^l)$ are also compatible with each other.
\end{proof}

\subsection{Proof of \Cref{prop:assumptions-max-closedness}}
\label{appendix:proof-of-prop:assumptions-max-closedness}

\removefortark{
\propassumptionsmaxclosedness*
}

\begin{proof}
Note that \Cref{assumption:pure-Nash
} is always max-closed w.r.t.\ any ordering. We can therefore ignore them in the following.

Consider a single pair of games and their OC as per \Cref{assumption:decreasing-risk}. 
This OC is max-closed under the following ordering: $(a_H,a_H) >_{\Gamma} (a_L,a_L) >_{\Gamma} (a_L,a_H) >_{\Gamma} (a_H,a_L)$ and similarly $(\hat a_H, \hat a_H) >_{\hat\Gamma} (\hat a_L, \hat a_L) >_{\hat\Gamma} (\hat a_L, \hat a_H) >_{\hat \Gamma} (\hat a_H, \hat a_L)$. Thus we order all games that are subject to \Cref{assumption:decreasing-risk} in this way. Further order any any game that's isomorphic one of these games to be isomorphically ordered. 

Now let's construct the orderings of all other fully reduced games. We do this iteratively. Pick a game $\Gamma$ that does not yet have an order. Now partition the outcomes according to the following equivalence relation: $\bm{a}$ and $\bm{a} '$ are related if for each $i$ there is an isomorphism from $\Gamma$ to itself that maps $a_i$ onto $a_i'$. Order the outcomes of $\Gamma$ in any way that has all the outcomes in one equivalence class appear in a row. Now take any game $\Gamma'$ that is isomorphic to $\Gamma$. Partition $\Gamma'$ in the same way as $\Gamma$. Note that the \Cref{assumption:isomorphism} OC between $\Gamma$ and $\Gamma'$ will map the equivalence classes of $\Gamma$ 1-to-1 onto equivalence classes of $\Gamma'$, with each outcome in an equivalence class in $\Gamma$ being associated with each outcome in the corresponding equivalence class of $\Gamma'$. Order the equivalence classes of $\Gamma'$ in the same way as the equivalence classes of $\Gamma$. Repeat this process (of picking an unordered game, ordering it and its isomorphic games) until all fully reduced games are ordered. It's easy to see that the constructed orderings prove max-closedness of the considered isomorphism OCs.

Next consider the non-reduced games, i.e., games that contain strictly dominated strategies. First take each non-reduced game $\Gamma$. Note that each game reduces onto a unique fully reduced game $\Gamma'$ \cite[e.g.][]{Pearce1984,Gilboa1990,Apt2004}. Then let the ordering in $\Gamma$ coincide with the ordering in $\Gamma
'$ on the shared outcomes, leaving open for now how the remaining orderings fit in. If $\Gamma'$ is not in $\mathcal G$ (and thus doesn't have an ordering) we treat the outcomes in $\Gamma'$ the same as the eliminable outcomes in $\Gamma$.

From all non-reduced games take all outcomes that have not yet been ordered. Order all of these outcomes arbitrarily. Call these ordering $>_*$. Then within each unreduced games, order the so-far-unordered outcomes according to $>_*$ and let them all be smaller than the already-ordered outcomes.
\end{proof}

\subsection{Proof of \Cref{thm:infinite-set-of-assumptions}}
\label{appendix:proof-of-thm:infinite-set-of-assumptions}

\removefortark{
\finitetoinfinite*
}

This result is essentially a \textit{compactness theorem} for BCS. 
The proof of \Cref{thm:infinite-set-of-assumptions} in the countable case is given below. \removefortark{A corresponding compactness theorem for first-order logic seems to have first been given by Gödel \cite[][Satz IX and X]{Goedel1930}.)}
The compactness theorem for propositional logic for the uncountable case was proved by Malcev \cite[][§1]{Malcev1936} \cite[see also, e.g.,][Sect.\ 13.4]{Truss1997}.\removefortark{\footnote{Instead of the compactness theorem for propositional logic, most modern authors focus on the compactness theorem for first-order logic, e.g., Dawson \cite{Dawson1993} or Boolos et al.\ \cite[][Theorem 12.15]{Boolos2007}.}}
In fact, strictly weaker assumptions than the axiom of choice are sufficient for this result: the ultrafilter lemma or, equivalently, the Boolean prime ideal theorem. Cowen \cite{Cowen1998} gives a proof of a compactness theorem for constraints satisfaction problems. 
A relatively accessible proof based on Zorn's lemma (which is equivalent to the axiom of choice) of a compactness theorem for the $r$-coloring problem that could easily be transferred to our setting is given by Bukh \cite{Bukh2012}.
We here omit a proof of \Cref{thm:infinite-set-of-assumptions} for the uncountable case.

\begin{proof}[Proof for the countable case of  \Cref{thm:infinite-set-of-assumptions}]
Let $M_i$ be the set of assignments for all of $\mathcal X$ satisfying constraints between the constraints between the first $i$ variables. Clearly, $M_1\supseteq M_2 \supseteq M_3 ...$ Also, let $M$ be the set of assignments that satisfy all the constraints in $\mathcal A$. Note that $M=M_1\cap M_2 \cap ...$  Now let's say that each assignment in $M$ satisfies $X\sim_{\Phi} X'$. Then we show that there is $M_i$ s.t.\ every assignment in $M_i$ satisfies $X\sim_{\Phi} X'$. Assume for contraposition that every $M_i$ contains an assignment that violates $X\sim_{\Phi} X'$. Then for every $n$ (s.t.\ $X,X'$ are among the first $n$ variables)
there must be an assignment of the first $n$ variables that violates $X\sim_{\Phi} X'$ and that is allowed by all $M_i$. (This uses the fact that the domain is finite and the $\supseteq$ relationship between the $M_i$.) Moreover, if $(o_1,...,o_n)$ is in all $M_i$ and $m>n$, then there is $o_{n+1},...,o_{m}$ s.t.\ $(o_1,...,o_m)$ is in all $M_i$. Thus, we can construct $o_1,o_2,...$ that is in all $M_i$ and violates $X\sim_{\Phi} X'$. It follows that $o_1,o_2,...$ is in $M$, i.e., that there is a valid assignment that violates $X\sim_{\Phi} X'$.
\end{proof}

\section{Strengthening the complexity result of Oesterheld and Conitzer (2022)}
\label{sec:o-and-c-complexity-result}

Oesterheld et al.\ \cite{oesterheld2022safe} prove a hardness result for SPIs under a specific type of (infinite) set $\mathcal G$ and (variants of) \Cref{assumption:isomorphism,assumption:dominance}. However, their result is about deciding whether an SPI can be inferred with \Cref{lemma:basic-results}. Using our results about the completeness of inference, we can show that in this case inference as per \Cref{lemma:basic-results} is equivalent to inference period, and so in particular, we can prove the hardness of the regular inference problem.
To keep it brief, we omit the strict versions of the problems and leave details of the setting to the appendix. 

\begin{restatable}{theorem}{refiningearliernphardness}\label{thm:oesterheld-2022-np-hardness}
The following decision problem is NP-complete: Given a game $\Gamma$, decide whether there exists a subset game $\Gamma^s$ of $\Gamma$ s.t.:
\begin{inparaenum}[1. ]
    \item (Non-triviality:) If we iteratively remove all strictly dominated actions from $\Gamma$ and $\Gamma^s$, respectively, then the resulting games are not equal. (They are allowed to be isomorphic.)
    \item $\Gamma \sim_{\Phi} \Gamma^s$ is implied by \Cref{assumption:dominance,assumption:isomorphism}.
    \item For all $\bm{a}$ in $\Gamma$, we have $\mathbf{u}(\Phi(\bm{a})) \geq \mathbf{u}(\bm{a})$.
\end{inparaenum}
\end{restatable}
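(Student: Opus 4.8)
The plan is to deduce the statement from Oesterheld et al.'s \cite{oesterheld2022safe} existing complexity result by showing that, on the set $\mathcal G$ of subset games equipped with (variants of) \Cref{assumption:dominance,assumption:isomorphism}, deciding whether a correspondence is \emph{implied} coincides with deciding whether it is \emph{derivable} by the rules of \Cref{lemma:basic-results}. Oesterheld et al.\ prove that it is NP-complete to decide whether there is a non-trivial subset game $\Gamma^s$ admitting a Pareto-improving correspondence $\Gamma\sim_{\Phi}\Gamma^s$ that is \emph{derivable} with their rules (those of \Cref{lemma:basic-results} less the intersection rule, item~\ref{prop1-item:intersection}). If I can argue that ``$\Gamma\sim_{\Phi}\Gamma^s$ is implied'' and ``$\Gamma\sim_{\Phi}\Gamma^s$ is derivable'' cut out the same yes-instances, then the two decision problems are literally the same problem and NP-completeness transfers; in particular NP-membership comes for free (alternatively, directly by guessing $\Gamma^s$ together with the elimination orders and an isomorphism, then checking Pareto-improvement and non-triviality).

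The easy inclusion is soundness: every correspondence produced by \Cref{lemma:basic-results} holds in all satisfying assignments, so a derivable Pareto-improving SPI is in particular implied, and every yes-instance of Oesterheld et al.'s problem is a yes-instance of ours. The substantive direction is completeness: every implied Pareto-improving SPI is derivable. Here I would invoke \Cref{prop:assumptions-max-closedness} (the assumptions are max-closed) and \Cref{thm:completeness-under-max-closedness} (under max-closedness, rule-based inference is complete on finite BCSs). Since $\mathcal G$ is infinite, I would then lift completeness to $\mathcal G$ via \Cref{corollary:completeness-finite-to-completeness-infinite}, which rests on the compactness result \Cref{thm:infinite-set-of-assumptions}: if $\Gamma\sim_{\Phi}\Gamma^s$ is implied by the full infinite system, it is already implied---and hence, by finite completeness, derivable---from a finite subsystem. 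This gives that ``implied'' coincides with ``derivable using all of \Cref{lemma:basic-results}''.

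The one remaining gap is that Oesterheld et al.'s result is stated without the intersection rule, whereas \Cref{thm:completeness-under-max-closedness} uses the full rule set, so I must still rule out that intersection adds power on these instances. I would close this by observing that the minimal implied correspondence between $\Gamma$ and a subset game $\Gamma^s$ is already realized by the chain $\Gamma \sim \Gamma_{\mathrm{red}} \sim \Gamma^s_{\mathrm{red}} \sim \Gamma^s$, obtained by eliminating strictly dominated strategies in each game (\Cref{assumption:dominance}) and composing with the isomorphism between the resulting reduced games (\Cref{assumption:isomorphism}); this chain uses only transitivity, symmetry and reflexivity, so on subset-game instances the three notions ``derivable without intersection'', ``derivable with intersection'' and ``implied'' all collapse together. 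I expect this reconciliation of the two rule sets---arguing that intersection is never needed to tighten a subset-game correspondence down to its implied minimum---to be the main technical obstacle, whereas the infinite-to-finite step is handled cleanly by the cited corollary and the rest is bookkeeping with the non-triviality and Pareto conditions, which are identical in both formulations.
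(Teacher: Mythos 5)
Your proposal follows essentially the same route as the paper's proof: invoke Oesterheld et al.'s \cite{oesterheld2022safe} Theorem 9 for NP-completeness of the rule-derivable version of the problem, then collapse ``implied'' to ``derivable'' via max-closedness (\Cref{prop:assumptions-max-closedness}, \Cref{thm:completeness-under-max-closedness}) lifted to the infinite set of subset games by compactness (\Cref{thm:infinite-set-of-assumptions}, \Cref{corollary:completeness-finite-to-completeness-infinite}). The only divergence is one of scope: where you argue that the intersection rule never tightens the chain correspondence $\Gamma \sim \Gamma_{\mathrm{red}} \sim \Gamma^s_{\mathrm{red}} \sim \Gamma^s$ on \emph{arbitrary} subset-game instances (a stronger claim you would still need to verify carefully in the presence of action symmetries), the paper needs and asserts this only for the games in Oesterheld et al.'s construction, which lack action symmetries so the relevant isomorphisms are unique --- and it dispatches their slightly different isomorphism assumption on the same grounds.
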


\begin{proof}
    Oesterheld et al.'s \cite{oesterheld2022safe} proof of their Theorem 9 shows that it's NP-complete to decide whether one can infer using the rules of \Cref{lemma:basic-results} that $\Gamma^s$ is an SPI on $\Gamma$. (Note that they use a slightly different version of \Cref{assumption:isomorphism}. But those differences never matter, because the games in their proof aren't action-symmetric, i.e., the relevant isomorphisms are always unique. They also use a somewhat different set of rules \Cref{lemma:basic-results}. Here, too, it's easy to show that these differences do not matter for the construction in their proof.) From \Cref{prop:completeness-under-assumptions} together with \Cref{thm:infinite-set-of-assumptions}, we know that the question about inference via \Cref{lemma:basic-results} is equivalent to the question of inference period.
\end{proof}

\end{document}